\crefname{Definitions}{Definitions}{Definitions}
\newcommand{\myTrLength}{0.25}
\tikzset{ baseline = { ([yshift=-.5ex]current bounding box.center) }, %
          every picture/.style = { line width= 0.45pt } }
\tikzstyle{myMiddleArrow} = [ decoration={markings,                   %
\tikzstyle{myMiddleArrowOUT} = [ decoration={markings,                   %
\tikzstyle{myMiddleArrowIN} = [ decoration={markings,                   %
\tikzstyle{myMiddleArrowR} = [ decoration={markings,                  %
\tikzstyle{myGateStyle} = [ draw=black, minimum height=0.5cm,
\tikzstyle{myVirtual} = [ red, line width = 0.35pt ]
\tikzstyle{myWhiteLine} = [ white, line width = 1.75pt ]
\tikzstyle{myDot} = [ circle, draw = black, line width = 0.2pt,
\tikzstyle{myWhiteDot} = [ circle, draw = black,
\tikzstyle{myRedDot} = [ rectangle, draw = black,
\tikzstyle{myVioletDot} = [
\newtheorem{theorem}{Theorem}[section]
\newtheorem{lemma}[theorem]{Lemma}
\newtheorem{proposition}[theorem]{Proposition}
\newtheorem{conjecture}[theorem]{Conjecture}
\theoremstyle{definition}
\newtheorem{definition}[theorem]{Definition}
\newtheorem{definitions}[theorem]{Definitions}
\newtheorem{example}[theorem]{Example}
\newtheorem{remark}[theorem]{Remark}
\title[Matrix Product Operator Algebras II]%
{Matrix Product Operator Algebras II:\\ Phases of Matter for 1D Mixed States}
\author[A. Ruiz-de-Alarcón]{Alberto Ruiz-de-Alarcón}
\address[A. Ruiz-de-Alarcón]{Departamento de Análisis Matemático y Matemática 
Aplicada, Universidad Complutense de Madrid, 28040 Madrid, Spain \& 
Instituto de Ciencias Ma\-te\-máti\-cas (CSIC-UAM-UC3M-UCM), 
C/ Nicolás Cabrera 13-15, Campus de Cantoblanco, 28049 Madrid, Spain}
\email{alberto.ruiz.alarcon@icmat.es}
\author[J. Garre-Rubio]{José Garre-Rubio}
\address[J. Garre-Rubio]{University of Vienna, Faculty of Mathematics, 
Oskar-Morgenstern-Platz 1, 1090 Vienna, Austria}
\email{jose.garre-rubio@univie.ac.at}
\author[A. Molnár]{András Molnár}
\address[A. Molnár]{University of Vienna, Faculty of Mathematics, 
Oskar-Morgenstern-Platz 1, 1090 Vienna, Austria}
\email{andras.molnar@univie.ac.at}
\author[D. Pérez-García]{David Pérez-García}
\address[D. Pérez-García]{Departamento de Análisis Matemático y Matemática 
Aplicada, Universidad Complutense de Madrid, 28040 Madrid, Spain \& 
Instituto de Ciencias Ma\-te\-máti\-cas (CSIC-UAM-UC3M-UCM), 
C/ Nicolás Cabrera 13-15, Campus de Cantoblanco, 28049 Madrid, Spain}
\email{dperezga@ucm.es}
\date{\today}
\begin{document}

\begin{abstract}
The classification of topological phases of matter is fundamental to understand
and characterize the properties of quantum materials. In this paper we study
phases of matter in one-dimensional open quantum systems. We define two mixed
states to be in the same phase if both states can be transformed into the other
by a shallow circuit of local quantum channels. We aim to understand the emerging phase
diagram of matrix product density operators that are renormalization fixed
points. These states arise, for example, as boundaries of two-dimensional
topologically ordered states. We first construct families of such states based
on C*-weak Hopf algebras, the algebras whose representations form a fusion
category. More concretely, we provide explicit local fine-graining and local
coarse-graining quantum channels for the renormalization procedure of these
states. Finally, we prove that a subset of these states, those arising from
C*-Hopf algebras, are in the trivial phase.
\end{abstract}

\maketitle

\section{Introduction}

One of the main projects that quantum science is undertaking in the last decades
is the understanding and classification of exotic topological phases of quantum
matter. The approach to tackle this project is intrinsically connected to
quantum information theory. On the one hand,  topological phases of matter have
been identified as valuable resources in quantum computing 
\cite{freedman_2002_topological}. On the other, quantum information tools and
ideas are playing a key role in the classification program.

Before going any further, it is important to define what it means that two
systems belong to the same topological phase. Since topological properties have
an inherent global nature, the key idea is that their ground states display
similar global properties independently of their (possibly) different local
features. For instance, a ferromagnetic state
$\mid\uparrow\uparrow \cdots \uparrow\uparrow \rangle$
is topologically equivalent to an antiferromagnetic one
$\mid\uparrow \downarrow \cdots  \uparrow \downarrow \rangle$
since one can map locally one into the other, despite the fact that they have a
very different magnetization behaviour. 

A definition, motivated by quantum information, which tries to capture the
global properties, is the existence of a short-depth geometrically local quantum
circuit mapping one ground state into the other 
\cite{chen_2010_local,bravyi_2010_topological}. Using Hastings-Wen's
quasi-adiabatic evolution \cite{hastings_2005_quasiadiabatic} and Lieb-Robinson
bounds \cite{lieb_1972_finite} one can prove that this property is implied by
the more standard definition of phase based on the existence of a gapped path of
Hamiltonians connecting both systems \cite{bachmann_2012_automorphic}.  

The main advantage of the definition based on quantum circuits is that it 
focuses on states rather than on Hamiltonians, which is crucial to extend it to 
more general setups, like the one we are addressing here: open quantum systems.
However, this approach poses an additional problem: one has to identify the 
relevant class of states to classify. For closed quantum systems this relevant 
class is precisely the set of ground states of gapped short-range Hamiltonians.
Again quantum information theory provided us with a characterization of this 
set: ground states of short-ranged gapped Hamiltonians fulfill an area law for 
the entanglement between neighbouring regions, which implies that they are well 
approximated by ``tensor network states'', in particular by matrix product 
states (MPS) and projected-entagled pair states (PEPS) 
\cite{hastings_2007_area_law, anshu_2021_area, cirac_2021_matrix}. 

A natural approach to classify phases is to first restrict the classification to
``simple'' states that nevertheless are representatives for each phase. Since
topological properties are global, these representatives are taken to be
insensitive to real space renormalization steps (being those a finite depth
circuit), that is, they are renormalization fixed points (RFP). In 2D, for
instance, the string-net models of Levin and Wen \cite{levin_2005_stringnet} are
believed to provide a complete set of renormalization fixed points for
non-chiral 2D topological phases.

The restriction to RFPs has two important benefits. On the one hand, RFPs in
gapped phases have zero correlation length and thus they are \emph{exactly} MPS
and PEPS \cite{cirac_2021_matrix}; no approximation is needed. On the other
hand, it is easier to identify the key global invariants and thus identify the
different phases of RFP states.

These two points have been the crucial insights to successfully complete the
classification of 1D phases with symmetries, the so-called symmetry protected
topological (SPT) phases. Let us illustrate that this is the case by recalling
the steps that led to the classification of 1D SPT phases.
The first step was to prove that any MPS can be transformed into an RFP MPS in
the same phase \cite{schuch_2011_classif}. This restricts the classification
problem to just RFP MPS.
The second step was to identify the invariants of the phases using the
set of RFP MPS. These invariants are a set of quantities which, on the one hand,
are robust against short depth circuits and, on the other, are sufficient to
identify each phase uniquely. For SPT phases with unique ground state, the
invariants are the different equivalence classes of the second cohomology group
of the symmetry group \cite{chen_2011_classification, schuch_2011_classif}. For
SPT phases with symmetry breaking and therefore degenerate ground states, the
invariants are the different induced representations of the non-symmetry broken
subgroup together with its second cohomology group \cite{schuch_2011_classif}.
The third step was to prove that any two RFP MPS  that share the same invariants
can be mapped into each other with a short depth quantum circuit.
On top of that, a final and important step has been recently made: the
breakthrough results of Ogata \cite{ogata_2021_classif} show that one can even
extend these arguments beyond the framework of MPS to cover all  gapped
short-range Hamiltonians.

All the previous results stand for closed quantum systems, where the object of
interest is the ground state of a Hamiltonian. However, the question of
classifying phases is far from being answered for open quantum systems, even in
one dimension. Since isolation is never practically achieved, the
characterization of those systems play a fundamental role in real applications.

In this manuscript, we take the first steps towards the classification of open
quantum systems in 1D. A main difference between open and closed quantum systems
is that evolutions in closed quantum systems (either Hamiltonian evolution or
quantum circuits) are \emph{reversible}, whereas this is no longer true in open
quantum systems evolved under a Linbland master equation. For instance, if one
starts in a topologically ordered state, like the toric
code \cite{kitaev_2003_anyons}, one cannot find a short depth quantum circuit
mapping it into a product state. Short depth quantum circuits cannot create
\emph{or destroy} global correlations. However, local depolarizing noise can
convert the toric code (and indeed any topologically ordered state, no matter
how complex) into a product state in a short amount of
time \cite{coser_2019_classification}. Destroying global correlations is
therefore \emph{easy} in the open quantum systems regime. Constructing global
correlations is, on the other hand, still hard. In fact, local fast dissipative
evolutions cannot create global correlations \cite{konig_2014_generating}. This
shows that in the open quantum setting, phases should not be thought of as
classes of an equivalence relation, but rather as a partial order given by the
existence of a local fast dissipative evolution mapping one state into another
one. This partial order can also be understood as the complexity present in the
different topological phases. This proposal, due
to \cite{coser_2019_classification}, is the one we are taking here. Concretely,
we will say that a mixed state $\rho_1$ is more complex than another
one $\rho_2$ if there is a short-depth (geometrically local) circuit of quantum
channels, i.e. completely positive trace-preserving linear maps, mapping
$\rho_1$ into $\rho_2$ . 

There are several subtleties to make this definition formal. First of all, 
$\rho_1$ and $\rho_2$ should be well defined for all system size $n$. Second,
one should ask only for getting sufficiently close to $\rho_2$, allowing for
 both $\mathrm{polylog}(n)$ depth and $\mathrm{polylog}(n)$ locality in the
gates of the circuit.  Finally, one could take either a discrete point of view,
as here, or a continuous one, asking for a rapid mixing quasi-local Linbladian
evolution that approximates $\rho_2$ starting from $\rho_1$. Since in this paper
we are working only with RFP states, we will not need any of those subtleties
here and we refer to \cite{coser_2019_classification} for a detailed analysis 
of those.
 
We notice that there are other definitions of phases in the open quantum system
setting, like the works of Diehl et al. for Gaussian mixed states 
\cite{diehl_2011_topology,bardyn_2012_modes,bardyn_2013_topology} and for quasi
thermal states \cite{grusdt_2017_topological}, where the authors generalize the
notion of phases via gapped paths of Hamiltonians or via local unitary
transformations respectively. We refer also to \cite{coser_2019_classification}
for a detailed discussion about why the definition we are taking here seems more
appropriate. 

Encouraged by the successful  classification of pure states sketched
above, we will focus on RFP that are \emph{gapped} mixed states, that is, mixed
states which fulfill an area law for the mutual information. This is motivated
by two facts. On the one hand, it is known that Gibbs states of short-range
Hamiltonians fulfill an area law for the mutual information
\cite{wolf_2008_area}. On the other hand, it is known that fixed points of rapidly
mixing dissipative evolutions also fulfill an area law for the mutual
information \cite{brandao_2015_area}.

This naturally leads us to the set of RFP mixed states with a matrix product
density operator (MPDO) representation. The structure of RFP MPDOs has been
studied in detail in \cite{cirac_2017_mpdo} where, up to minor technical
conditions, the following is shown: (i) An MPDO is an RFP if there exist two
quantum channels $\mathfrak{T}$ and $\mathfrak{S}$ that implement the local
coarse graining and the local fine graining respectively, for which the given
MPDO is a fixed point. (ii) The RFP condition for MPDOs is characterized
operationally by the absence of length scales in the system; in particular by
having zero correlation length and saturation of the area law. (iii) The
existence of such $\mathfrak{T}$ and $\mathfrak{S}$ maps is equivalent to the
fact that from the MPDO an MPO algebra can be constructed.

This result brings the classification of 1D mixed states into the understanding
and classification of MPO algebras. Notably, MPO algebras are precisely the
mathematical objects behind the classification of RFP 2D topologically ordered
pure states in terms of PEPS \cite{sahinoglu_2021_char,bultinck_2017_anyons}.
This is not a lucky coincidence, but a consequence of the remarkable
bulk-boundary correspondence originated in the work of Li and
Haldane \cite{li_2008_entanglement}. In PEPS the bulk-boundary mapping is very
explicit \cite{cirac_2011_entanglement} and allows one to establish a dictionary
between bulk and boundary
properties \cite{schuch_2013_topological, kast_2019_local, perez_2020_estimates}.
Indeed, RFP MPDOs are expected to contain the set of boundary states associated
to RFP 2D non-chiral topologically ordered systems \cite{cirac_2017_mpdo}.

A throughout study of MPO algebras is done in the first paper of this
series \cite{molnar_2022_mpo}. There, it is shown that MPO algebras are closely
related to representations of semisimple finite-dimensional weak Hopf algebras,
which are, in turn, the algebraic description of fusion categories.


The paper is structured as follows.
In \cref{sec:preliminaries} we recall the basic notions and results on
weak Hopf algebras with a compatible C*-structure, called C*-weak Hopf algebras.
In this setting, we introduce the canonical regular element, which is
fundamental for our constructions. We also introduce the notion of biconnected
C*-WHA, whose representation categories are fusion categories. Moreover, we
recall their characterization as matrix product operators with a boundary, as
introduced in \cite{molnar_2022_mpo}.
In \cref{sec:rfp} we recall the definition of RFP MPDOs given in \cite{cirac_2017_mpdo}
and provide the construction of a family of RFP MPDOs arising from any given biconnected
C*-weak Hopf algebra. In particular, we provide explicit constructions of the local coarse-graining
and local fine-graining quantum channels $\mathfrak{T}$ and $\mathfrak{S}$ commented before.
In \cref{sec:boundaries} we describe the previous RFP MPDOs as the boundary
states of topological 2D PEPS. 
In \cref{sec:classif} we prove that the previous families of RFP MPDOs 
are in the trivial phase in the C*-Hopf algebra case, in the sense that they can be
obtained via a finite-depth and bounded-range circuit of quantum channels acting on
the maximally mixed state. Moreover, we show that this result can be
extended to the trivial sector of any biconnected C*-weak Hopf algebra.

\section{Preliminaries}
\label{sec:preliminaries}

In this section we collect elementary notions on algebras, coalgebras and
C*-weak Hopf algebras, as well as their representation theory in terms of matrix
product operators, recently developed in \cite{molnar_2022_mpo}. From now on,
we assume that all vector spaces are finite dimensional and their ground field
is the field of complex numbers $\mathbb{C}$. For any two vector spaces, we
denote by $\mathfrak{L}(V,W)$ the set of $\mathbb{C}$-linear maps from $V$ to
$W$ and let $\mathfrak{L}(V) :=\mathfrak{L}(V,V)$. We denote by
$V^*:=\mathfrak{L}(V,\mathbb{C})$ the dual vector space and by
$\langle\cdot,\cdot\rangle:V^*\times V\to\mathbb{C}$,
$(f,x)\mapsto \langle f,x\rangle = f(x)$ the canonical pairing.
An associative unital \emph{algebra} is a vector space $A$ endowed with an
associative linear map $A\otimes A\to A$, called multiplication, denoted
by juxtaposition, and an element $1 \in A$, called unit, satisfying
$1 x = x 1 = x$ for all elements $x \in A$. A unital \emph{C*-algebra} is an
algebra $A$ with an anti-linear involutive algebra anti-homomorphism
$(\,\cdot\,)^*:A \to A$, $x\mapsto x^*$, called \emph{$*$-operation}, and a
compatible Banach space structure. In this context, positive elements of $A$ are
elements of the form $x = y^*y$ for some element $y\in A$. As usually, the
multiplication, the unit element and the $*$-operation of two C*-algebras $A$
and $B$ are implicitly extended to their tensor product space ${A}\otimes {B}$
componentwise. Dually to the notion of algebra, a coassociative counital
\emph{coalgebra} is a vector space ${C}$ endowed with a linear map
$\Delta\in\mathfrak{L}(C,C\otimes C)$, called \emph{comultiplication}, such that
\begin{equation}
    (\mathrm{Id}\otimes \Delta)\circ \Delta =
    (\Delta\otimes\mathrm{Id}) \circ\Delta,
\end{equation}
and a linear functional $\varepsilon \in C^*$, known as \emph{counit},
compatible with the comultiplication in the sense that
$
    (\mathrm{Id}\otimes\varepsilon)\circ\Delta =
    (\varepsilon\otimes\mathrm{Id})\circ\Delta =
    \mathrm{Id}
$,
where we have identified $C\otimes\mathbb{C}\cong \mathbb{C}\otimes C \cong C$.
Henceforth, we drop the words associative, unital, coassociative and counital.
As usually done in the literature of coalgebras, we denote
\begin{equation}
    \Delta^{1}     := \Delta
    \;\,\text{ and }\;\,
    \Delta^{(n+1)} := ( \Delta \otimes \mathrm{Id}^{\otimes n} )
                      \circ \Delta^{(n)}
\end{equation}
for all $n\in\mathbb{N}$ and, complementarily, make use of Sweedler's notation
\begin{equation}
    x_{(1)}\otimes x_{(2)} \otimes\cdots \otimes x_{(n+1)} := \Delta^{(n)}(x)
\end{equation}
for all $x\in C$ and all $n\in\mathbb{N}$, ommiting the summation
symbol and cumbersome indices. In this context, an element $x\in C$ is
\emph{cocentral} if
\begin{equation}
    \label{eq:cocentral}
    x_{(1)} \otimes x_{(2)} = x_{(2)} \otimes x_{(1)}.
\end{equation}
i.e., its coproduct $\Delta(x)\in C\otimes C$ is invariant under the flip
operator. In addition, an element $x\in C$ is \emph{non-degenerate} if for all
$y\in C$ there exist $\phi,\psi\in C^*$ such that
\begin{equation}
    \label{eq:nondegenerate}
    \langle \phi, x_{(1)} \rangle x_{(2)}
    =
    y
    =
    x_{(1)} \langle \psi, x_{(2)} \rangle,
\end{equation}
roughly speaking, any element can be recovered from the coproduct
$\Delta(x)\in C\otimes C$ by applying an appropiate linear functional on any
of the cofactors.


In order to describe a sufficiently large family of renormalization fixed point
mixed states, e.g. boundary states of 2D string-net models, we will
introduce an algebraic construction that combines both structures of a
C*-algebra and a coalgebra. This is well-motivated from a representation
theoretical point of view \cite{nill_1998_axioms,bohm_1999_weak} and may look
odd from a purely algebraic one; thus, the subsequent axioms can be skipped in
a first reading. The following definition is due to G.~B{\"o}hm and
K.~Szlach{\'a}nyi.


\begin{definition}[see~\cite{bohm_1996_coassociative,nill_1998_axioms}]
\label{def:Cwha}
A \emph{C*-weak Hopf algebra} (C*-WHA) is a vector space ${A}$
simultaneously endowed with the structures of a C*-algebra and a coalgebra for
which the comultiplication is multiplicative, i.e.
\begin{equation}
    (xy)_{(1)} \otimes (xy)_{(2)} = x_{(1)} y_{(1)}\otimes x_{(2)}y_{(2)}
\end{equation}
for all elements $x,y\in {A}$;
the $*$-operation $*:{A}\to {A}$ is comultiplicative, i.e. 
\begin{equation}
    (x^*)_{(1)} \otimes (x^*)_{(2)} = (x_{(1)})^* \otimes (x_{(2)})^*
\end{equation}
for all elements $x\in {A}$; the counit $\varepsilon\in {A}^*$ is weakly
comultiplicative, i.e.
\begin{equation}
    \langle \varepsilon , xyz \rangle
    =
    \langle \varepsilon , xy_{(1)} \rangle \langle \varepsilon , y_{(2)}z \rangle
    =
    \langle \varepsilon , xy_{(2)} \rangle \langle \varepsilon , y_{(1)}z \rangle
\end{equation}
for all elements $x,y,z\in {A}$; the unit $1\in {A}$ is weakly comultiplicative,
i.e.
\begin{equation}
    1_{(1)}\otimes 1_{(2)} \otimes 1_{(3)}
    =
    1_{(1)} \otimes 1_{(2)} 1_{(1')} \otimes 1_{(2')}
    =
    1_{(1)} \otimes 1_{(1')} 1_{(2)} \otimes 1_{(2')},
\end{equation}
where the prime symbol distinguishes different coproducts of $1\in{A}$, and
there exists an anti-multiplicative linear map $S\in\mathfrak{L}({A})$ satisfying
\begin{equation}
    S(x_{(1)}) x_{(2)}
    =
    \langle\varepsilon,1_{(1)}x\rangle 1_{(2)}
    \;\,\text{ and }\;\,
    x_{(1)} S(x_{(2)})
    =
    1_{(1)} \langle\varepsilon,x 1_{(2)}\rangle
\end{equation}
for all elements $x\in {A}$, called antipode.
\end{definition}


\begin{remark}[see e.g.~Subsection 2.1~in~\cite{bohm_1999_weak}]
\label{remark:dualcwha}
The previous axioms are self-dual in the sense that for any C*-WHA $A$ its dual
vector space $A^*$ can be canonically endowed with the structure of a C*-WHA.
For simplicity, let us denote all structure maps in the same way. First, the
product of any two $\phi,\psi\in {A}^*$ is defined by the expression
$\phi\psi := (\phi\otimes\psi)\circ \Delta$, the unit element of $A^*$ is the
counit $\varepsilon\in {A}^*$ of ${A}$ and the $*$-operation is given by
$\langle \phi^* , x \rangle := \overline{ \langle \phi, S(x)^* \rangle }$ for
all $\phi\in A^*$ and all elements $x\in A$, where the bar denotes the complex
conjugate. The coalgebra structure is given via the comultiplication
$\langle\Delta(\phi),x\otimes y\rangle := \langle\phi,xy\rangle$
for all $\phi\in {A}^*$ and all elements $x,y\in {A}$, and the counit is the map
${A}^*\to \mathbb{C},\phi\mapsto \langle\phi,1\rangle$. Finally, the antipode is
defined by $S(\phi) := \phi\circ S$ for all $\phi\in {A}^*$.
\end{remark}


\begin{remark}[see~Lemma 2.8 and Theorem 2.10 in \cite{bohm_1999_weak}]
\label{remark:propsS}
In any C*-WHA the antipode is unique, invertible, both an algebra and a
coalgebra anti-homomorphism, and satisfies $S(x^*)^* = S^{-1}(x)$ for all
elements $x\in {A}$.
\end{remark}


\begin{definition}
\label{def:CHA}
A \emph{C*-Hopf algebra} (C*-HA) $A$ is a C*-WHA for which the comultiplication
$\Delta\in\mathfrak{L}(A,A\otimes A)$ is unit-preserving, i.e. $\Delta(1) = 1\otimes 1$.
Equivalently, the counit $\varepsilon\in A^*$ is an algebra homomorphism, i.e.
$\langle\varepsilon,xy\rangle = \langle\varepsilon,x\rangle\langle\varepsilon,y\rangle$ for all
 elements $x,y\in A$ and $\langle\varepsilon,1\rangle = 1$.
\end{definition}


\begin{example}
\label{ex:groupCHA}
The group C*-algebra $\mathbb{C}G$ of a finite group $G$ is endowed with the
structure of a C*-HA by the linear extensions of the maps given by the expressions
$\Delta(g) := g\otimes g$, $\langle\varepsilon,g\rangle := 1$ and
$S(g) := g^*:=g^{-1}$ for all elements $g\in G$.
\end{example}


\begin{example}
\label{ex:funcCHA}
The dual vector space $(\mathbb{C}G)^*$ is again a C*-algebra endowed with the
multiplication
$\langle\phi\psi,g\rangle := \langle\phi,g\rangle\langle\psi,g\rangle$, the unit
element $g\mapsto 1$ and the $*$-operation given by
$\langle\phi^*,g\rangle := \overline{\langle\phi,g\rangle}$,
for all $\phi,\psi\in(\mathbb{C}G)^*$ and all elements $g\in G$. Moreover, it
becomes a C*-HA too by virtue of the comultiplication
$\langle\Delta(\phi),g\otimes h\rangle := \langle\phi,gh\rangle$, the counit
$\langle\varepsilon,\phi\rangle := \langle\phi,1\rangle$, and the antipode
$\langle S(\phi),g\rangle := \langle\phi,g^{-1}\rangle$ for all
$\phi\in(\mathbb{C}G)^*$ and all elements $g,h\in G$.
\end{example}


The following example, due to G. I. Kac and V. G. Paljutkin, describes the
smallest C*-HA which is neither cocommutative, i.e. a group algebra,
nor commutative, i.e. the dual of a group algebra.

\begin{example}[see~\cite{kac_1966_finite}]\label{ex:H8}
Let ${H}_8$ be the C*-algebra generated by three elements $x$, $y$ and $z$
subject to the relations
$x^2 = 1$, 
$y^2 = 1$, 
$z^2 = 2^{-1}(1+x+y-xy)$,  
$xy=yx$, 
$zx = yz$, 
$zy = xz$, 
$x^* = x$, 
$y^* = y$ and $z^* = z^{-1}$.
It becomes a C*-HA by means of
$\Delta(x) := x\otimes x$, 
$\Delta(y) := y\otimes y$, 
$\Delta(z) := 2^{-1}(z\otimes z + yz\otimes z + z\otimes xz - yz\otimes xz)$, 
$   \langle\varepsilon,x\rangle   = \langle\varepsilon,y\rangle
  = \langle\varepsilon,z\rangle = 1$,
$S(x) := x$, $S(y) := y$ and $S(z) := z$.
\end{example}


The following example is the smallest proper C*-WHA. It is known as the Lee-Yang
C*-WHA as it is reconstructed from the solutions of the pentagon equation
arising from the Lee-Yang fusion rules.

\begin{example}[cf.~\cite{bohm_1996_coassociative}]
\label{ex:Fib1}
Let ${A}_{\mathrm{LY}}$ be the direct sum
$\mathfrak{M}(2,\mathbb{C})\oplus \mathfrak{M}(3,\mathbb{C})$ of full-matrix
$2\times 2$ and $3\times 3$ C*-al\-ge\-bras with complex coefficients, respectively.
Let $\zeta\in\mathbb{R}$ be the unique positive solution to $z^4 + z^2 -1 = 0$
and fix matrix units $e_{1}^{ij}$, $i,j=1,2$, in $\mathfrak{M}(2,\mathbb{C})$ and
$e_2^{k\ell}$, $k,\ell=1,2,3$, in $\mathfrak{M}(3,\mathbb{C})$. Then,
the comultiplication of $A_{\mathrm{LY}}$ is defined by the expressions
\begin{align*}
    \Delta(e_{1}^{11}) := { } & { }%
                    e_{1}^{11} \otimes e_{1}^{11} +
                    e_{2}^{11} \otimes e_{2}^{22}, \\
    \Delta(e_{1}^{12}) := { } & { }%
                    e_{1}^{12} \otimes e_{1}^{12} + 
            \zeta^2 e_{2}^{12} \otimes e_{2}^{21} +
            \zeta   e_{2}^{13} \otimes e_{2}^{23}, \\
    \Delta(e_{1}^{22}) := { } & { }%
                    e_{1}^{22} \otimes e_{1}^{22} +
            \zeta^4 e_{2}^{22} \otimes e_{2}^{11} + \\ & 
            \zeta^3 e_{2}^{23} \otimes e_{2}^{13} +
            \zeta^3 e_{2}^{32} \otimes e_{2}^{31} +
            \zeta^2 e_{2}^{33} \otimes e_{2}^{33}, \\
    \Delta(e_{2}^{11}) := { } & { }%
                    e_{1}^{11} \otimes e_{2}^{11} +
                    e_{2}^{11} \otimes e_{1}^{22} +
                    e_{2}^{11} \otimes e_{2}^{33}, \\
    \Delta(e_{2}^{12}) := { } & { }%
                    e_{1}^{12} \otimes e_{2}^{12} +
                    e_{2}^{12} \otimes e_{1}^{21} +
                    e_{2}^{13} \otimes e_{2}^{32}, \\
    \Delta(e_{2}^{13}) := { } & { }%
                    e_{1}^{12} \otimes e_{2}^{13} +
                    e_{2}^{11} \otimes e_{1}^{22} +
            \zeta   e_{2}^{12} \otimes e_{2}^{31} -
            \zeta^2 e_{2}^{13} \otimes e_{2}^{33}, \\
    \Delta(e_{2}^{22}) := { } & { }%
                    e_{0}^{22} \otimes e_{2}^{22} +
                    e_{2}^{22} \otimes e_{0}^{11} +
                    e_{2}^{33} \otimes e_{2}^{22}, \\
    \Delta(e_{2}^{23}) := { } & { }%
                    e_{1}^{22} \otimes e_{2}^{23} +
                    e_{2}^{23} \otimes e_{1}^{21} +
            \zeta   e_{2}^{32} \otimes e_{2}^{21} -
            \zeta^2 e_{2}^{33} \otimes e_{2}^{23}, \\
    \Delta(e_{2}^{33}) := { } & { }%
                    e_{1}^{22} \otimes e_{2}^{33} + 
                    e_{2}^{33} \otimes e_{1}^{22} + 
            \zeta^2 e_{2}^{22}\otimes e_{2}^{11} - \\ &
            \zeta^3 e_{2}^{23}\otimes e_{2}^{13} -
            \zeta^3 e_{2}^{32}\otimes e_{2}^{31} +
            \zeta^4 e_{2}^{33}\otimes e_{2}^{33}
\end{align*}
and the counit $\varepsilon\in (A_{\mathrm{LY}})^*$ and the antipode
$S\in\mathfrak{L}(A_{\mathrm{LY}})$ are given by
\begin{equation*}
    \langle\varepsilon,e_{1}^{ij}\rangle = 1,
    \quad
    \langle\varepsilon,e_{2}^{k\ell}\rangle = 0,
    \quad
    S(e_{1}^{ij}) = e_{1}^{ji}
    \;\,\text{ and }\;\,
    S(e_{2}^{k\ell}) = \zeta^{\ell-k} e_{2}^{\sigma(\ell)\sigma(k)}
\end{equation*}
for all $i,j\in\{1,2\}$ and $k,\ell\in\{ 1,2,3\}$, where $\sigma(1) := 2$,
$\sigma(2) := 1$, $\sigma(3) := 3$, endowing $A_{\mathrm{LY}}$ with the structure of a C*-WHA.
This specification has been slightly adapted from \cite{bohm_1996_coassociative}
as we will propose a tensor network description in \cref{ex:Fib2} consistent
with its string-net model definition.
\end{example}


A \emph{representation} of a C*-WHA ${A}$ is simply a representation of its
underlying C*-algebra, i.e. a couple $(V,\Phi)$ where $V$ is a
finite-dimensional complex vector space and $\Phi\in\mathfrak{L}(A,\mathfrak{L}(V))$ is an algebra
homomorphism. If, in addition, $V$ is a Hilbert space and
$\Phi(x^*) = \Phi(x)^\dagger$ for all $x\in {A}$, it is said to be a $*$-representation.
A representation is \emph{faithful} if the map $\Phi$ is injective.
Two representations $(V,\Phi)$ and $(W,\Psi)$ are \emph{equivalent} if
there is a bijective linear map $F\in\mathfrak{L}(V,W)$ such that
$\Psi(x) = F\circ \Phi(x) \circ F^{-1}$ for all elements $x\in A$.
Since $A$ is, in particular, a finite dimensional C*-algebra, the set
$\mathrm{Irr}({A})$ of equivalence classes of irreducible
re\-pre\-sen\-ta\-tions, also called \emph{sectors}, is necessarily finite.
In what follows, we fix a complete set $\{(V_\alpha,\Phi_\alpha):\alpha\in\mathrm{Irr}({A})\}$
of pairwise non-equivalent irreducible $*$-re\-pre\-sen\-ta\-tions of ${A}$
and let  $\mathrm{Tr}_\alpha := \mathrm{Tr}\circ\Phi_\alpha\in{A}^*$ stand for their
corresponding characters. On account of self-duality, $\mathrm{Irr}({A}^*)$
are labels for the sectors of the dual C*-WHA ${A}^*$, and we fix another complete set
$\{(W_a,\Psi_a):a\in\mathrm{Irr}({A}^*)\}$ of pairwise non-equivalent
irreducible $*$-re\-pre\-sen\-ta\-tions of ${A}^*$. Let
$\mathrm{Tr}^a := \mathrm{Tr}\circ \Psi_a \in{A}^{**}\cong {A}$
stand for their characters.


\begin{remark}[see~e.g.~\cite{etingof_2005_on_fusion,nill_1998_axioms}]
The category of $*$-representations of a
C*-WHA ${A}$ has, by construction, the structure of a rigid monoidal category. 
The comultiplication $\Delta:{A}\to {A}\otimes {A}$ provides the
monoidal product
\begin{equation*}
V \boxtimes W
:= \{z\in V\otimes W: \Delta(1)z=z\},
\quad
\Phi \boxtimes \Psi := (\Phi\otimes\Psi)\circ\Delta,
\end{equation*}
of any two $*$-representations $(V,\Phi)$ and $(W,\Psi)$ of ${A}$ while the counit
ensures the existence of a monoidal unit, called the
trivial representation; see \cite{bohm_1999_weak,bohm_2000_weak2}.
\end{remark}


The trivial representation has the unusual feature that it can be reducible.
This motivates the following definition.

\begin{definitions}\label{defs:conn}
A C*-WHA is said to be \emph{connected} if its trivial representation is
irreducible, \emph{coconnected} if the dual C*-WHA is connected, and
\emph{biconnected} if it is both connected and coconnected.
\end{definitions}


For the sake of simplicity, we assume from now on that $(V_1,\Phi_1)$
(resp. $(W_1,\Psi_1)$) corresponds to the trivial representation of ${A}$ if is
connected (resp. coconnected).


\begin{definitions}
\label{defs:ALARAmin}
Let ${A}$ be a C*-WHA. Then,
\begin{align*}
    {A}_L & := \{ x \in {A} :
                    x_{(1)}\otimes x_{(2)}
                    = x 1_{(1)}\otimes 1_{(2)}
                    = 1_{(1)}x \otimes 1_{(2)} \},\\
    {A}_R & := \{ y \in {A} :
                    y_{(1)}\otimes y_{(2)}
                    = 1_{(1)}\otimes y 1_{(2)}
                    = 1_{(1)} \otimes 1_{(2)}y \},
\end{align*}
are two commuting $*$-subalgebras of ${A}$, known as the target and source
counital subalgebras of ${A}$, respectively. Moreover,
$ {A}_{\mathrm{min}} := {A}_L {A}_R\subseteq {A} $
is the minimal C*-weak Hopf $*$-algebra contained in $A$ containing the unit
element. It is said to be \emph{minimal} if ${A} = {A}_\mathrm{min}$ and
\emph{regular} if the squared antipode restricted to
${A}_{\mathrm{min}}$ is the identity, i.e.
$S^2 \upharpoonright {A}_\mathrm{min} = \mathrm{Id}$.
\end{definitions}


For any connected C*-WHA $A$, its Gro\-then\-dieck ring $\mathfrak{K}_0(A)$,
i.e. the free $\mathbb{Z}$-module generated by the characters of representations
of $A$ with addition and multiplication defined accordingly, is a fusion ring
\cite{etingof_2005_on_fusion}. In particular, the characters
$\{\mathrm{Tr}_\alpha\in A^*:\alpha\in\mathrm{Irr}(A)\}$ form a basis satisfying
\begin{equation*}
    \mathrm{Tr}_\alpha\cdot \mathrm{Tr}_\beta
    =
    \sum_{\gamma} N_{\alpha\beta}^\gamma \mathrm{Tr}_\gamma
\end{equation*}
for some $N_{a b}^c\in\mathbb{N}\cup\{0\}$, for all sectors
$\alpha,\beta,\gamma\in\mathrm{Irr}(A)$. Hence, for any $*$-re\-pre\-sen\-ta\-tion $(V,\Phi)$
of $A$ we can expand its character in the form
$\mathrm{Tr}_V = \sum_{\gamma}\nu_\gamma\mathrm{Tr}_\gamma$, where
$\nu_a\in\mathbb{N}\cup\{0\}$ is the multiplicity of $(V_\alpha,\Phi_\alpha)$ within
$(V,\Phi)$. In this context, define the
$|\mathrm{Irr}(A)|\times|\mathrm{Irr}(A)|$ matrix $N_V$ with coefficients
$({N}_V)_{\beta\gamma} := \sum_\alpha \nu_\alpha N_{\alpha\beta}^\gamma$, for any two sectors
$\beta,\gamma\in\mathrm{Irr}(A)$. Since $\mathfrak{K}_0(A)$ is, in particular, a
transitive ring \cite{etingof_2015_tensor}, $N_V$ is a matrix with strictly
positive entries. Thus, by virtue of the Frobenius-Perron theorem, the spectral
radius of $N_V$, denoted $d_V$, is an algebraically simple positive eigenvalue,
known as the Frobenius-Perron dimension of $(V,\Phi)$. Though it is not needed,
we remark that this notion coincides with the one of quantum dimension from the
category of $*$-representations;
see \cite[Proposition~8.23]{etingof_2005_on_fusion} for a rigorous statement.
For simplicity of notation, let $d_\alpha := d_{V_\alpha}$ for all sectors $\alpha\in\mathrm{Irr}(A)$.
Also, let $\mathfrak{D}^2 := \sum_\alpha d_\alpha^2$ denote the Frobenius-Perron dimension
of the algebra. Dually, if $A$ is coconnected, let
$\{d_a:a\in\mathrm{Irr}(A^*)\}$ denote the dual Frobenius-Perron
dimensions of $A$. It turns out that then
$ \sum_{\alpha} d_\alpha^2 =\sum_{a} d_a^2$ if $A$ is a biconnected
C*-WHA \cite{etingof_2015_tensor}.

Now we recall the following result, which proves the existence of a
distinguished element satisfying a ``pulling-through equation'' in each
connected C*-WHA. These properties turn out to be enough to understand the
properties of renormalization fixed points.


\begin{theorem}[cf.~\cite{molnar_2022_mpo}]
\label{thm:Omega}
Let ${A}$ be a biconnected C*-WHA. Then,
\begin{equation*}
    \Omega := \mathfrak{D}^{-2} \sum_{a} d_a \mathrm{Tr}^a \in A^{**} \cong A
\end{equation*}
is a cocentral non-degenerate positive idempotent, known as the canonical regular element of $A$.
Moreover, there exists a unique linear map $T\in\mathfrak{L}(A)$ such that
\begin{equation}\label{eq:pt}
    T(x)\Omega_{(1)}\otimes \Omega_{(2)} = \Omega_{(1)}\otimes x\Omega_{(2)}
\end{equation}
for all elements $x\in A$, usually referred to as pulling-through identity.
In particular, $T$ is an involutive algebra anti-homomorphism.
\end{theorem}


The canonical regular element is well-known in the literature of
$\mathbb{Z}_+$-rings and it satisfies an eigenvalue equation of the form
\begin{equation}
    \label{eq:OmegaEigen}
    \Omega \cdot \mathrm{Tr}^a
    =
    \mathrm{Tr}^a \cdot \Omega
    =
    d_a \Omega
\end{equation}
for all sectors $a \in\mathrm{Irr}({A}^*)$.
\cref{eq:pt} resembles the
characterization of left integrals in C*-WHAs; see e.g.~\cite{bohm_1999_weak}.
However, these notions coincide if and only if ${A}$ is a C*-HA. In this context,
$\Omega\in A$ is a well-known element in the literature, the Haar integral
$h\in{A}$, and the linear map $T\in\mathfrak{L}(A)$ coincides with the antipode.
In any case, it is convenient to rewrite \cref{eq:pt} as
\begin{equation}
    \label{eq:pt2}
    T(\Omega_{(1)})x\otimes \Omega_{(2)} = T(\Omega_{(1)})\otimes x\Omega_{(2)},
\end{equation}
for all elements $x\in A$. Both identities will be used interchangeably.


Let us interpret now representations of a given C*-WHA in terms of tensor
networks; see \cite{molnar_2022_mpo} for an exhaustive discussion. As usually
done in the tensor network literature we employ a graphical notation, briefly
described in \cref{sec:graphical}. 
Consider any sequence $\{(V_{[i]},\Phi_{[i]}):i\in\mathbb{N}\}$ of representations
of a C*-WHA $A$. It turns out that the endomorphisms
$
    (\Phi_{[1]} \otimes\cdots \otimes \Phi_{[k]})\circ\Delta^{(k-1)}(x)
$
can be described in terms of matrix product operators, for all elements $x\in A$.
More concretely, there exist a Hilbert space $W$ and tensors 
$
A_{[i]} \in \mathfrak{L}(W)\otimes \mathfrak{L}(V_{[i]})
$,
$i\in\mathbb{N}$,  independent of $x\in A$, such that one can write
\begin{equation}
    \label{eq:repphii}
    \begin{tikzpicture}[scale=0.5]
        \draw [red] (-90:1) arc (-90:-90+25:1);
        \draw [red] (-25:1) arc (-25:0:1);
        \draw [red,myMiddleArrow] (90+90:1) arc (90+90:90+180:1);
        \draw [red,myMiddleArrow] (0:1) arc (0:90:1);
        \draw [red,myMiddleArrow] (90:1) arc (90:180:1);
		\draw [draw=none,postaction={decorate,decoration={text along path, text align=center,text={...}}}] (-45-30:1+0.02) to [bend right=30] (-45+30:1+0.02);
	        \draw [myMiddleArrow] (0:1-0.75) -- (0:1); 
	        \draw [myMiddleArrow] (0:1) -- (0:1.75); 
	        \node [myDot] at (0:1) {};
	        \draw [myMiddleArrow] (90:1-0.75) -- (90:1); 
	        \draw [myMiddleArrow] (90:1) -- (90:1.75); 
	        \node [myDot] at (90:1) {};
	        \draw [myMiddleArrow] (-90:1-0.75) -- (-90:1); 
	        \draw [myMiddleArrow] (-90:1) -- (-90:1.75); 
	        \node [myDot] at (-90:1) {};
	    \node [myRedDot,label={[label distance=-0.5mm,rotate=90]above :$\scriptscriptstyle{b(x)}$}] at (90+90:1) {};
	    \node [inner sep=0pt] at (60:1+0.45) {$\scriptstyle{A_{[1]}}$};
	    \node [inner sep=0pt] at (-20:1+0.55) {$\scriptstyle{A_{[2]}}$};
	    \node [inner sep=0pt] at (-65:1+0.5) {$\scriptstyle{A_{[k]}}$};
	    \node [rotate=45] at (135:1+0.3) {$\color{gray}\scriptscriptstyle{W}$};
	    \node [rotate=180-45] at (225:1+0.3) {$\color{gray}\scriptscriptstyle{W}$};
    \end{tikzpicture}
    \equiv (\Phi_{[1]}\otimes \cdots\otimes \Phi_{[k]}) \circ \Delta^{(k-1)}(x)
\end{equation}
for some linear map $b\in\mathfrak{L}(A,W \otimes W^*)$, for all $k\in\mathbb{N}$.
We will usually restrict to the trans\-la\-tion-invariant case, for which $\Phi_{[1]} = \Phi_{[2]} = \cdots =: \Phi$ and $A_{[1]}=A_{[2]}=\cdots$. Notice that the physical indices,
associated to Hilbert spaces $V$ and $V^*$, are depicted by
black lines, while the virtual indices, associated to Hilbert
spaces $W$ and $W^*$, are depicted by red lines.
Thus, from now on, we will drop the labels, since no misunderstanding can arise.
For instance, we can express the multiplicativity of the coproduct (see
\cref{def:Cwha}) with this simplified graphical notation as 
\begin{equation}\label{eq:TN-multxy}
    \begin{tikzpicture}[scale=0.85*0.5]
        \draw [red] (-90:1) arc (-90:-90+25:1);
        \draw [red] (-25:1) arc (-25:0:1);
        \draw [red,myMiddleArrow] (90+90:1) arc (90+90:90+180:1);
        \draw [red,myMiddleArrow] (90-90:1) arc (90-90:90:1);
        \draw [red,myMiddleArrow] (90:1) arc (90:90+90:1);
		\draw [draw=none,postaction={decorate,decoration={text along path, text align=center,text={...}}}] (-45-30:1+0.02) to [bend right=30] (-45+30:1+0.02);
        \draw [red, myMiddleArrow] (-90:1.75) arc (-90:-90+34:1.75);
        \draw [red, myMiddleArrow] (-34:1.75) arc (-34:0:1.75);
        \draw [red,myMiddleArrow] (90+90:1.75) arc (90+90:90+180:1.75);
        \draw [red,myMiddleArrow] (90-90:1.75) arc (90-90:90:1.75);
        \draw [red,myMiddleArrow] (90:1.75) arc (90:90+90:1.75);
		\draw [draw=none,postaction={decorate,decoration={text along path, text align=center,text={...}}}] (-45-30:1.75+0.02) to [bend right=30] (-45+30:1.75+0.02);
	    \foreach \i in {90, 90-90, 90-180} {
	        \draw [myMiddleArrow] (\i:1-0.75) -- (\i:1); 
	        \draw [myMiddleArrow] (\i:1) -- (\i:1.75); 
	        \draw [myMiddleArrow] (\i:1.75) -- (\i:2.5); 
	        \node [myDot] at (\i:1) {};
	        \node [myDot] at (\i:1.75) {};
	    }
	    \node [myRedDot,label={[label distance=-0.6mm,rotate=90]above :$\scriptscriptstyle{b(x)}$}] at (90+90:1.75) {};
	    \node [myRedDot,label={[label distance=-0.6mm,rotate=90]above :$\scriptscriptstyle{b(y)}$}] at (90+90:1) {};
    \end{tikzpicture}
=
    \begin{tikzpicture}[scale=0.5]
        \draw [red] (-90:1) arc (-90:-90+25:1);
        \draw [red] (-25:1) arc (-25:0:1);
        \draw [red,myMiddleArrow] (90+90:1) arc (90+90:90+180:1);
        \draw [red,myMiddleArrow] (90-90:1) arc (90-90:90:1);
        \draw [red,myMiddleArrow] (90:1) arc (90:90+90:1);
		\draw [draw=none,postaction={decorate,decoration={text along path, text align=center,text={...}}}] (-45-30:1+0.02) to [bend right=30] (-45+30:1+0.02);
	    \foreach \i in {90, 90-90, 90-180} {
	        \draw [myMiddleArrow] (\i:1-0.75) -- (\i:1); 
	        \draw [myMiddleArrow] (\i:1) -- (\i:1.75); 
	        \node [myDot] at (\i:1) {};
	    }
	    \node [myRedDot,label={[label distance=-0.5mm,rotate=90]above :$\scriptscriptstyle{b(xy)}$}] at (90+90:1) {};
    \end{tikzpicture}~.
\end{equation}
for all elements $x,y\in A$.

We finish this section by reinterpreting the different
properties of the canonical regular element in graphical notation. 
First, it is easy to check by induction on $n\in\mathbb{N}$ that the fact that
$\Omega\in A$ is a cocentral element implies
\begin{equation}
\label{eq:Omegacoce}
    \Omega_{(1)}\otimes \Omega_{(2)}\otimes \cdots\otimes \Omega_{(n)}
    = 
    \Omega_{(\sigma(1))} \otimes \Omega_{(\sigma(2))}\otimes\cdots\otimes \Omega_{(\sigma(n))}
\end{equation}
for any shift permutation $\sigma$ of $\{1,\ldots,n\}$. In turn, this can be rephrased
as the trans\-la\-tio\-n-in\-va\-rian\-ce of the associated MPOs:
\begin{equation}
    \label{eq:TN-Omegacoce}
    \begin{tikzpicture}[scale=0.5]
        \draw [red] (-90:1) arc (-90:-90+25:1);
        \draw [red] (-25:1) arc (-25:0:1);
        \draw [red,myMiddleArrow] (90+90:1) arc (90+90:90+180:1);
        \draw [red,myMiddleArrow] (90-90:1) arc (90-90:90:1);
        \draw [red,myMiddleArrow] (90:1) arc (90:90+90:1);
		\draw [draw=none,postaction={decorate,decoration={text along path, text align=center,text={...}}}] (-45-30:1+0.02) to [bend right=30] (-45+30:1+0.02);
	    \foreach \i in {180, 90, 0, -90} {
	        \draw [myMiddleArrow] (\i:1-0.75) -- (\i:1); 
	        \draw [myMiddleArrow] (\i:1) -- (\i:1.75); 
	        \node [myDot] at (\i:1) {};
	    }
	    \node [myRedDot,rotate=45,label={[label distance=-2mm]above :$\scriptscriptstyle{b(\Omega)}$}] at (90+45:1) {};
    \end{tikzpicture}
    =
    \begin{tikzpicture}[scale=0.5]
        \draw [red] (-90:1) arc (-90:-90+25:1);
        \draw [red] (-25:1) arc (-25:0:1);
        \draw [red,myMiddleArrow] (90+90:1) arc (90+90:90+180:1);
        \draw [red,myMiddleArrow] (90-90:1) arc (90-90:90:1);
        \draw [red,myMiddleArrow] (90:1) arc (90:90+90:1);
		\draw [draw=none,postaction={decorate,decoration={text along path, text align=center,text={...}}}] (-45-30:1+0.02) to [bend right=30] (-45+30:1+0.02);
	    \foreach \i in {180, 90, 0, -90} {
	        \draw [myMiddleArrow] (\i:1-0.75) -- (\i:1); 
	        \draw [myMiddleArrow] (\i:1) -- (\i:1.75); 
	        \node [myDot] at (\i:1) {};
	    }
	    \node [myRedDot,rotate=-45,label={[label distance=-2mm]above :$\scriptscriptstyle{b(\Omega)}$}] at (45:1) {};
    \end{tikzpicture}
    =\cdots =
    \begin{tikzpicture}[scale=0.5]
        \draw [red] (-90:1) arc (-90:-90+25:1);
        \draw [red] (-25:1) arc (-25:0:1);
        \draw [red,myMiddleArrow] (90+90:1) arc (90+90:90+180:1);
        \draw [red,myMiddleArrow] (90-90:1) arc (90-90:90:1);
        \draw [red,myMiddleArrow] (90:1) arc (90:90+90:1);
		\draw [draw=none,postaction={decorate,decoration={text along path, text align=center,text={...}}}] (-45-30:1+0.02) to [bend right=30] (-45+30:1+0.02);
	    \foreach \i in {180, 90, 0, -90} {
	        \draw [myMiddleArrow] (\i:1-0.75) -- (\i:1); 
	        \draw [myMiddleArrow] (\i:1) -- (\i:1.75); 
	        \node [myDot] at (\i:1) {};
	    }
	    \node [myRedDot,rotate=135,label={[label distance=-2mm]above :$\scriptscriptstyle{b(\Omega)}$}] at (180+45:1) {};
    \end{tikzpicture}~.
\end{equation}
In order to interpret the action of the linear map $T\in\mathfrak{L}(A)$, first note that the linear map $A\to\mathfrak{L}(V^*)$, $x\mapsto (\Phi\circ T(x))^{\mathrm{t}}$ defines a representation of $A$, where $(\,\cdot\,)^\mathrm{t}$ stands for the transpose operation. As discussed below, it is not necessarily a $*$-re\-pre\-sen\-ta\-tion. By \cref{eq:repphii}, one can depict, e.g.
\begin{equation}\label{eq:TN-defT}
    \begin{tikzpicture}[scale=0.5]
        \draw [red,myMiddleArrow] (90+90:1) arc (90+90:90+180:1);
        \draw [red,myMiddleArrow] (90-90:1) arc (90-90:90:1);
        \draw [red,myMiddleArrow] (90:1) arc (90:90+90:1);
        \draw [red,myMiddleArrow] (90-180:1) arc (90-180:90-90:1);
	    \foreach \i in {90} {
	        \draw [myMiddleArrow] (\i:1-0) -- (\i:1-0.75); 
	        \draw [myMiddleArrow] (\i:1.75) -- (\i:1); 
	        \node [myWhiteDot] at (\i:1) {};
	    }
	    \foreach \i in {0, -90} {
	        \draw [myMiddleArrow] (\i:1-0.75) -- (\i:1); 
	        \draw [myMiddleArrow] (\i:1) -- (\i:1.75); 
	        \node [myDot] at (\i:1) {};
	    }
	    \node [myRedDot,label={[label distance=-0.25mm,rotate=90]above :$\scriptscriptstyle{b(x)}$}] at (90+90:1) {};
    \end{tikzpicture}
    \equiv
    \Phi^{\otimes 3}\circ (T\otimes \mathrm{id}\otimes\mathrm{id})\circ \Delta^{(2)}(x) 
\end{equation}
for all $x\in A$, for some white rank-four tensor, where all physical spaces in
the picture are $V$. For the sake of clarity, we have reversed the direction of
the physical arrows corresponding to the new tensor, since $T\in\mathfrak{L}(A)$
is an anti-multiplicative map.
With this notation, \cref{eq:pt,eq:pt2} can be interpreted respectively as follows:
\begin{equation}\label{eq:TN-pt}
    \begin{tikzpicture}[scale=0.5]
        \draw [red] (-90:1) arc (-90:-90+25:1);
        \draw [red] (-25:1) arc (-25:0:1);
        \draw [red,myMiddleArrow] (90+90:1) arc (90+90:90+180:1);
        \draw [red,myMiddleArrow] (90-90:1) arc (90-90:90:1);
        \draw [red] (90:1) arc (90:90+90:1);
		\draw [draw=none,postaction={decorate,decoration={text along path, text align=center,text={...}}}] (-45-30:1+0.02) to [bend right=30] (-45+30:1+0.02);
	    \foreach \i in {90, 90-90, 90-180} {
	        \draw [myMiddleArrow] (\i:1-0.75) -- (\i:1); 
	        \draw [myMiddleArrow] (\i:1) -- (\i:1.75); 
	        \node [myDot] at (\i:1) {};
	    }
	        \draw [myMiddleArrow] (180:1-0.75) -- (180:1); 
	        \draw [myMiddleArrow] (180:1) -- (180:1.75); 
	        \node [myDot] at (180:1) {};
	    \begin{scope}[xshift=-2.75cm]
            \draw [red,myMiddleArrow] (0:1) arc (0:90:1);
            \draw [red,myMiddleArrow] (-90:1) arc (-90:0:1);
	        \draw [myMiddleArrow] (0:1) -- (0:1-0.75); 
            \node [myWhiteDot] at (0:1) {};
	    \end{scope}
	    \node [myRedDot,rotate=45,label={[label distance=-2mm]above :$\scriptscriptstyle{b(\Omega)}$}] at (90+45:1) {};
    \end{tikzpicture}
=
    \begin{tikzpicture}[scale=0.8*0.5]
        \draw [red] (-90:1) arc (-90:-90+25:1);
        \draw [red] (-25:1) arc (-25:0:1);
        \draw [red,myMiddleArrow] (90+90:1) arc (90+90:90+180:1);
        \draw [red,myMiddleArrow] (90-90:1) arc (90-90:90:1);
        \draw [red] (90:1) arc (90:90+90:1);
		\draw [draw=none,postaction={decorate,decoration={text along path, text align=center,text={...}}}] (-45-30:1+0.02) to [bend right=30] (-45+30:1+0.02);
        \draw [red, myMiddleArrow] (-90:1.75) arc (-90:-90+34:1.75);
        \draw [red, myMiddleArrow] (-34:1.75) arc (-34:0:1.75);
        \draw [red,myMiddleArrow] (90-90:1.75) arc (90-90:90:1.75);
        \begin{scope}[yshift=+1.75cm]
        \draw [red,myMiddleArrow] (0,0) -- (-1,0);
        \end{scope}
        \begin{scope}[yshift=-1.75cm]
        \draw [red,myMiddleArrow] (-1,0) -- (0,0);
        \end{scope}
		\draw [draw=none,postaction={decorate,decoration={text along path, text align=center,text={...}}}] (-45-30:1.75+0.02) to [bend right=30] (-45+30:1.75+0.02);
	    \foreach \i in {90, 90-90, 90-180} {
	        \draw [myMiddleArrow] (\i:1-0.75) -- (\i:1); 
	        \draw [myMiddleArrow] (\i:1) -- (\i:1.75); 
	        \draw [myMiddleArrow] (\i:1.75) -- (\i:2.5); 
	        \node [myDot] at (\i:1) {};
	        \node [myDot] at (\i:1.75) {};
	    }
	        \draw [myMiddleArrow] (180:1-0.75) -- (180:1); 
	        \draw [myMiddleArrow] (180:1) -- (180:1.75); 
	        \node [myDot] at (180:1) {};
	    \node [myRedDot,rotate=45,label={[label distance=-2mm]above :$\scriptscriptstyle{b(\Omega)}$}] at (90+45:1) {};
    \end{tikzpicture}
    \;\,\text{ and }\;\,
    \begin{tikzpicture}[scale=0.5]
        \draw [red] (-90:1) arc (-90:-90+25:1);
        \draw [red] (-25:1) arc (-25:0:1);
        \draw [red,myMiddleArrow] (90+90:1) arc (90+90:90+180:1);
        \draw [red,myMiddleArrow] (90-90:1) arc (90-90:90:1);
        \draw [red] (90:1) arc (90:90+90:1);
		\draw [draw=none,postaction={decorate,decoration={text along path, text align=center,text={...}}}] (-45-30:1+0.02) to [bend right=30] (-45+30:1+0.02);
	    \foreach \i in {90, 90-90, 90-180} {
	        \draw [myMiddleArrow] (\i:1-0.75) -- (\i:1); 
	        \draw [myMiddleArrow] (\i:1) -- (\i:1.75); 
	        \node [myDot] at (\i:1) {};
	    }
	        \draw [myMiddleArrow] (180:1) -- (180:1-0.75); 
	        \draw [myMiddleArrow] (180:1+0.75) -- (180:1); 
	        \node [myWhiteDot] at (180:1) {};
	    \begin{scope}[xshift=-2.75cm]
            \draw [red,myMiddleArrow] (0:1) arc (0:90:1);
            \draw [red,myMiddleArrow] (-90:1) arc (-90:0:1);
	        \draw [myMiddleArrow] (0:1-0.75) -- (0:1); 
            \node [myDot] at (0:1) {};
	    \end{scope}
	    \node [myRedDot,rotate=45,label={[label distance=-2mm]above :$\scriptscriptstyle{b(\Omega)}$}] at (90+45:1) {};
    \end{tikzpicture}
=
    \begin{tikzpicture}[scale=0.8*0.5]
        \draw [red] (-90:1) arc (-90:-90+25:1);
        \draw [red] (-25:1) arc (-25:0:1);
        \draw [red,myMiddleArrow] (90+90:1) arc (90+90:90+180:1);
        \draw [red,myMiddleArrow] (90-90:1) arc (90-90:90:1);
        \draw [red] (90:1) arc (90:90+90:1);
		\draw [draw=none,postaction={decorate,decoration={text along path, text align=center,text={...}}}] (-45-30:1+0.02) to [bend right=30] (-45+30:1+0.02);
        \begin{scope}[yshift=+1.75cm]
        \draw [red,myMiddleArrow] (0,0) -- (-1,0);
        \end{scope}
        \begin{scope}[yshift=-1.75cm]
        \draw [red,myMiddleArrow] (-1,0) -- (0,0);
        \end{scope}
        \draw [red, myMiddleArrow] (-90:1.75) arc (-90:-90+34:1.75);
        \draw [red, myMiddleArrow] (-34:1.75) arc (-34:0:1.75);
        \draw [red,myMiddleArrow] (90-90:1.75) arc (90-90:90:1.75);
		\draw [draw=none,postaction={decorate,decoration={text along path, text align=center,text={...}}}] (-45-30:1.75+0.02) to [bend right=30] (-45+30:1.75+0.02);
	    \foreach \i in {90, 90-90, 90-180} {
	        \draw [myMiddleArrow] (\i:1-0.75) -- (\i:1); 
	        \draw [myMiddleArrow] (\i:1) -- (\i:1.75); 
	        \draw [myMiddleArrow] (\i:1.75) -- (\i:2.5); 
	        \node [myDot] at (\i:1) {};
	        \node [myDot] at (\i:1.75) {};
	    }
	        \draw [myMiddleArrow] (180:1) -- (180:1-0.75); 
	        \draw [myMiddleArrow] (180:1.75) -- (180:1); 
	        \node [myWhiteDot] at (180:1) {};
	    \node [myRedDot,rotate=45,label={[label distance=-2mm]above :$\scriptscriptstyle{b(\Omega)}$}] at (90+45:1) {};
    \end{tikzpicture}~.
\end{equation}
Note that, since $x\in A$ is arbitrary, we have dropped the tensor $b(x)$ and
reexpressed the identities using open boundary MPOs instead; see~\cite{molnar_2022_mpo}.

\section{Renormalization Fixed Point MPDOs}
\label{sec:rfp}

In this section we define a distinguished family of MPOs starting from a
biconnected C*-WHA and show that they are RFP MPDOs, as defined
in \cite{cirac_2017_mpdo}. More concretely, we provide explicit expressions for
both local coarse-graining and local fine-graining quantum channels
$\mathfrak{T}$ and $\mathfrak{S}$ for which the generating rank-four tensor is 
a fixed point under the corresponding induced flows, very much in the spirit of
standard renormalization spirit. The generating tensor of the RFP MPDOs is
obtained here by appropiately weighting the tensor from the original MPO
algebra, described in the previous section, obtaining:
\begin{equation*}
    \begin{tikzpicture}
        \draw [myVirtual, myMiddleArrow] (45+20 : 1) arc (45+20 : 135-20 : 1);
        \draw [myVirtual] (45 : 1) arc (45 : 45+20 : 1);
        \draw [myVirtual] (135-20 : 1) arc (135-20 : 135 : 1);
        \draw [myMiddleArrow] (45+20 : 1-0.3) --(45+20 : 1);
        \draw [myMiddleArrow] (45+20 : 1)     --(45+20 : 1+0.3);
        \draw [myMiddleArrow] (45+20 : 1+0.3)     --(45+20 : 1+0.6);
        \draw [myMiddleArrow] (135-20 : 1-0.3) --(135-20 : 1);
        \draw [myMiddleArrow] (135-20 : 1)     --(135-20 : 1+0.3);
        \draw [myMiddleArrow] (135-20 : 1+0.3)     --(135-20 : 1+0.6);
        \node [myDot] at (45+20 : 1) {};
        \node [myDot] at (135-20 : 1) {};
        \node [myRedDot,rotate=-20] at (45+20 : 1.3) {};
        \node [myRedDot,rotate=20] at (135-20 : 1.3) {};
    \end{tikzpicture}
    \quad\;
    \begin{tikzpicture}
        \draw [->] (-0.3,0.15) to[bend left=10] (0.3,0.15);
        \draw [->] (0.3,-0.15) to[bend left=10] (-0.3,-0.15);
        \node [] at (0,0.35)  {$\scriptstyle{\mathfrak{S}}$};
        \node [] at (0,-0.35) {$\scriptstyle{\mathfrak{T}}$};
    \end{tikzpicture}
    \;\quad
    \begin{tikzpicture}
        \draw [myVirtual, myMiddleArrow] (90 : 1) arc (90 : 135 : 1);
        \draw [myVirtual, myMiddleArrow] (45 : 1) arc (45 : 90 : 1);
        \draw [myMiddleArrow] (90 : 1-0.3) --(90 : 1);
        \draw [myMiddleArrow] (90 : 1)     --(90 : 1+0.3);
        \draw [myMiddleArrow] (90 : 1+0.3) --(90 : 1+0.6);
        \node [myRedDot] at (90 : 1.3) {};
        \node [myDot] at (90 : 1) {};
    \end{tikzpicture}.
\end{equation*}
This weighting is done by means of the canonical regular element of the dual
C*-WHA. To this end, let us examine first the properties of this linear
functional, which formally plays the role in C*-WHAs of the character
of the usual left-regular representation.


\begin{restatable}{lemma}{restatableomega}
\label{lemma:omega}
Let $A$ be a connected C*-WHA. Then, the canonical regular element
$\omega\in A^*$ of the dual C*-WHA $A^*$ is the unique trace-like, faithful,
positive linear functional on $A$ that is idempotent, i.e.
$(\omega\otimes\omega)\circ\Delta = \omega$.
\end{restatable}


\begin{proof}
Recall \cref{thm:Omega} and \cref{remark:dualcwha}. It is easy to check that
$\omega\in A^*$ is a trace-like linear functional since it is a cocentral
element of $A^*$. Also, it is a faithful and positive linear functional
by construction. In addition, it satisfies the eigenvalue equation
$\mathrm{Tr}_\alpha \cdot \omega = \omega \cdot \mathrm{Tr}_\alpha = d_\alpha \omega$
for all sectors $\alpha\in\mathrm{Irr}(A)$; see~\cite[Section 3]{etingof_2015_tensor}
for a proof. We note that this implies, in particular, that $\omega\in A^*$ is
idempotent. Assume now that $f\in A^*$ is any linear functional satisfying the
properties above. Since it is trace-like, it can be expanded in the form
$f = \sum_\alpha f_\alpha \mathrm{Tr}_\alpha$ for some numbers $f_\alpha\in\mathbb{C}$,
$\alpha\in\mathrm{Irr}(A)$. By evaluating $f$ on the primitive central idempotents of $A$
it is easy to check that $f_\alpha > 0$ for all $\alpha\in\mathrm{Irr}(A)$, since $f$ is
assumed to be also a faithful positive linear functional. Define the
$|\mathrm{Irr}(A)|\times |\mathrm{Irr}(A)|$ matrix $N_f$ with complex
coefficients $(N_f)_{\beta\gamma} := \sum_\alpha f_\alpha N_{\alpha\beta}^{\gamma}$, which implements the
left-multiplication by $f\in A^*$ in the basis
$\{\mathrm{Tr}_\alpha:\alpha\in\mathrm{Irr}(A)\}$, i.e. it satisfies
$N_f \psi = f\psi$ for all $\psi\in A^*$. Then, $N_f f = f^2 = f$ and
$N_f \omega = f \omega = \sum_\alpha f_\alpha d_\alpha \omega \propto \omega$, where the first
equation holds since $f\in A^*$ is idempotent by hypothesis and the second
equation follows from the eigenvalue equation. Since $A$ is connected, the
Grothendieck ring $\mathfrak{K}_0(A)$ is, in particular, a transitive ring, and
hence $N_f$ has strictly positive entries; see
e.g.~\cite{nikshych_2004_semisimple} and \cite[Section 3]{etingof_2015_tensor}.
By virtue of the Frobenius-Perron theorem, it has only one eigenvector with
strictly positive entries, up to a constant. Therefore, $f = \omega$, since
both are positive idempotents.
\end{proof}


Now, given a faithful $*$-representation of the C*-WHA, we define the appropiate
weight extending the previous linear functional to the representation space.


\begin{remark}
\label{remark:phicwext}
Let $A$ be a connected C*-WHA and let $(V,\Phi)$ be a faithful
$*$-re\-pre\-sen\-ta\-tion of $A$. Let $b(f)$ denote the
boundary weight for the matrix product operators arising from the dual
C*-WHA $A^*$, for all $f\in A^*$. It turns out that 
$b(\omega) = \Phi(c_\omega)$ for some strictly positive central
element $c_\omega\in A$. It provides an extension of $\omega\in A^*$ to the
representation space $\mathfrak{L}(V)$ in the sense that
$$
    \mathrm{Tr}(  b(\omega) \Phi(x) )
    =
    \mathrm{Tr}(  \Phi(c_\omega x)  )
    =
    \langle \omega , x \rangle
$$
for all elements $x\in A$.
\end{remark}

\begin{proof}
For all sectors $\alpha\in\mathrm{Irr}(A)$, let $e_\alpha\in A$ be the corresponding primitive central
idempotent of $A$ and let $\nu_\alpha\in \mathbb{C}$ denote the multiplicity of $(V_\alpha,\Phi_\alpha)$ within
$(V,\Phi)$. Then, define the element
$ c_\omega := \mathfrak{D}^{-2} \sum_\alpha  d_\alpha \nu_\alpha^{-1}  e_\alpha \in A $.
Trivially, it is a central invertible positive element and satisfies
$
    \mathrm{Tr}(\Phi(c_\omega x)) =
    \mathfrak{D}^{-2} \sum_\alpha  \nu_\alpha^{-1} d_\alpha \mathrm{Tr}(\Phi(x e_\alpha)) 
    = 
    \mathfrak{D}^{-2}\sum_{\alpha} \nu_\alpha^{-1}  d_\alpha \nu_\alpha\mathrm{Tr}_\alpha(x)
    = \langle\omega,x\rangle
$
for all elements $x\in A$, as we wanted to prove.
\end{proof}

Let us now consider the tensor obtained by multiplying the MPO tensor in
\cref{eq:repphii} by $b(\omega) = \Phi(c_\omega)$ in the physical space:
\begin{equation*}
    \begin{tikzpicture}[scale=0.5]
        \draw [] (0,0) -- (0,0.5); 
        \draw [myMiddleArrow] (0,0.5) -- (0,1.3); 
        \draw [myMiddleArrow] (0,-1) -- (0,0);
        \draw [myMiddleArrow,red] (0,0) -- (-1,0); 
        \draw [myMiddleArrow,red] (1,0) -- (0,0);
        \node [myDot] at (0,0) {};
        \node [myRedDot, label=left:$\scriptstyle{b(\omega)}$] at (0,0.5) {};
    \end{tikzpicture}.
\end{equation*}
Idempotence of $\omega\in A^*$ implies that this tensor generates an MPO with
zero correlation length; see~\cite{cirac_2017_mpdo}:
\begin{equation*}
    \begin{tikzpicture}[scale=0.5]
        \draw [rounded corners=0.125cm] (0,0) -- (0,1) -- (0.5,1) -- (0.5,0.25) -- (0.5,-1) -- (0,-1) -- (0,0);
        \draw [rounded corners=0.125cm] (1,0) -- (1,1) -- (1.5,1) -- (1.5,0.25) -- (1.5,-1) -- (1,-1) -- (1,0);
        \draw [myMiddleArrow] (0.5, 0.6) -- (0.5,0.3);
        \draw [myMiddleArrow] (1.5, 0.6) -- (1.5,0.3);
        \draw [myWhiteLine] (0.75,0)--(0.25,0);
        \draw [myWhiteLine] (1+0.75,0)--(1+0.25,0);
        \draw [myMiddleArrow,red] (0,0) -- (-1,0); 
        \draw [myMiddleArrow,red] (1,0) -- (0,0);
        \draw [myMiddleArrow,red] (2,0) -- (1,0);
        \node [myDot] at (0,0) {};
        \node [myDot] at (1,0) {};
        \node [myRedDot] at (0,0.4) {};
        \node [myRedDot] at (1,0.4) {};
    \end{tikzpicture}
    =
    \begin{tikzpicture}[scale=0.5]
        \draw [rounded corners=0.125cm] (0,0) -- (0,1) -- (0.5,1) --
            (0.5,0.25) -- (0.5,-1) -- (0,-1) -- (0,0);
        \draw [myMiddleArrow] (0.5, 0.6) -- (0.5,0.3);
        \draw [myWhiteLine] (0.75,0)--(0.25,0);
        \draw [myMiddleArrow,red] (0,0) -- (-1,0); 
        \draw [myMiddleArrow,red] (1,0) -- (0,0);
        \node [myDot] at (0,0) {};
        \node [myRedDot] at (0,0.4) {};
    \end{tikzpicture}.
\end{equation*}
It is clear that computations of correlation functions using the MPDOs generated
by the previous tensor will be length-independent. In particular, it induces the
following family of mixed states:


\begin{restatable}{theorem}{restatablethmrfp}
\label{thm:rfp}
Let $A$ be a biconnected C*-WHA and let
$(V,\Phi)$ be a faithful $*$-re\-pre\-sen\-ta\-tion of $A$. Then, the operators
$$
    \rho(x,n):= \langle\omega,x\rangle^{-1} b(\omega)^{\otimes n}
                \Phi^{\otimes n}( \Delta^{(n-1)}(x) )
    \in\mathfrak{L}(V^{\otimes n})
$$
are RFP MPDOs for all positive non-zero elements $x\in A$ and all $n\in\mathbb{N}$.
Specifically, there are quantum channels
$\mathfrak{T}:\mathfrak{L}(V)\to\mathfrak{L}(V\otimes V)$ and
$\mathfrak{S}:\mathfrak{L}(V\otimes V)\to\mathfrak{L}(V)$, known as
local fine-graining and coarse-graining maps, respectively, such that
$$
    \mathfrak{T}(\rho(x,1)) = \rho(x,2)
    \;\,\text{ and }\;\,
    \mathfrak{S}(\rho(x,2)) = \rho(x,1)
$$
for all positive non-zero elements $x\in A$ and all $n\in\mathbb{N}$.
\end{restatable}

Let us illustrate the construction with an extremely modest example.

\begin{example}\label{ex:Z2}
Let $A := \mathbb{C}\mathbb{Z}_2$ be the C*-HA arising from the
group $G:=\mathbb{Z}_2$ generated by $g\in G$; see \cref{ex:groupCHA}. 
It posseses only two sectors, namely the equivalence classes of the trivial
representation and the sign representation, each one-dimensional.
Consider that both physical and virtual spaces are $V := W:= \mathbb{C}^2$,
with basis elements $|1\rangle, |2\rangle$,
and consider the faithful $*$-representation of $A$
$\Phi\in\mathfrak{L}(A,\mathfrak{L}(\mathbb{C}^2))$
defined by
$\Phi(g):= \sigma_z$, the usual Pauli-Z matrix.
It is easy to see that both Frobenius-Perron dimensions are $1$ and hence the
canonical regular elements of $A$ and $A^*$ are given by $\Omega = 2^{-1}(e+g)$
and  $\omega(x) = (x,e)_V$, for all $x\in A$, respectively.
A tensor generating the corresponding MPOs is specified by the non-zero coefficients
\begin{equation*}
    \begin{tikzpicture}[scale=0.5]
        \draw [myMiddleArrow] (0,0) -- (0,0.51); 
        \draw [myMiddleArrow] (0,-0.51) -- (0,0);
        \draw [myMiddleArrow,red] (0,0) -- (-0.51,0); 
        \draw [myMiddleArrow,red] (0.51,0) -- (0,0);
        \node [myDot] at (0,0) {};
        \node at (-0.71,0) {$\scriptstyle{1}$};
        \node at (0.71,0) {$\scriptstyle{1}$};
        \node at (0,-0.71) {$\scriptstyle{1}$};
        \node at (0,0.71) {$\scriptstyle{1}$};
    \end{tikzpicture}
    = 
    \begin{tikzpicture}[scale=0.5]
        \draw [myMiddleArrow] (0,0) -- (0,0.51); 
        \draw [myMiddleArrow] (0,-0.51) -- (0,0);
        \draw [myMiddleArrow,red] (0,0) -- (-0.51,0); 
        \draw [myMiddleArrow,red] (0.51,0) -- (0,0);
        \node [myDot] at (0,0) {};
        \node at (-0.71,0) {$\scriptstyle{1}$};
        \node at (0.71,0) {$\scriptstyle{1}$};
        \node at (0,-0.71) {$\scriptstyle{2}$};
        \node at (0,0.71) {$\scriptstyle{2}$};
    \end{tikzpicture}
    =
    \begin{tikzpicture}[scale=0.5]
        \draw [myMiddleArrow] (0,0) -- (0,0.51); 
        \draw [myMiddleArrow] (0,-0.51) -- (0,0);
        \draw [myMiddleArrow,red] (0,0) -- (-0.51,0); 
        \draw [myMiddleArrow,red] (0.51,0) -- (0,0);
        \node [myDot] at (0,0) {};
        \node at (-0.71,0) {$\scriptstyle{2}$};
        \node at (0.71,0) {$\scriptstyle{2}$};
        \node at (0,-0.71) {$\scriptstyle{1}$};
        \node at (0,0.71) {$\scriptstyle{1}$};
    \end{tikzpicture}
    =-
    \begin{tikzpicture}[scale=0.5]
        \draw [myMiddleArrow] (0,0) -- (0,0.51); 
        \draw [myMiddleArrow] (0,-0.51) -- (0,0);
        \draw [myMiddleArrow,red] (0,0) -- (-0.51,0); 
        \draw [myMiddleArrow,red] (0.51,0) -- (0,0);
        \node [myDot] at (0,0) {};
        \node at (-0.71,0) {$\scriptstyle{2}$};
        \node at (0.71,0) {$\scriptstyle{2}$};
        \node at (0,-0.71) {$\scriptstyle{2}$};
        \node at (0,0.71) {$\scriptstyle{2}$};
    \end{tikzpicture}
    =1.
\end{equation*}
Moreover, in this case the weight is trivially given by $c_\omega = 2^{-1} e$ and thus 
$$
\rho(x,n) = \tfrac{1}{2^n}(\mathbf{1}^{\otimes n} + \tfrac{(x,g)_{V}}{(x,e)_V} \sigma_z^{\otimes n}),
$$
are the induced RFP MPDOs, for all positive non-zero $x\in A$.
In particular, $\rho(\Omega,n) =2^{-n} (\mathbf{1}^{\otimes n}+\sigma_z^{\otimes n})$ is the
boundary state of the toric code; see \cite{cirac_2017_mpdo}.
\end{example}

\begin{example}
\label{ex:Fib2}
Let $A_{\mathrm{LY}}$ be the Lee-Yang C*-WHA from
\cref{ex:Fib1}. It possesses only two sectors, denoted
$1$ and $\tau$, for which it is easy to check
that $d_1 = 1$ and $d_\tau = \zeta^{-2} = 2^{-1}(1+\sqrt{5})$, respectively. 
Consider that both physical and virtual spaces are $V:=W:=\mathbb{C}^5$ and let
$\Phi\in\mathfrak{L}(A_{\mathrm{LY}},\mathfrak{L}(\mathbb{C}^{5}))$
be the faithful $*$-representation arising from the string-net specification;
see \cite{molnar_2022_mpo,bultinck_2017_anyons} for a derivation.
A tensor generating the corresponding MPOs is then specified by the non-zero
coefficients
\begin{gather*}
\begin{tikzpicture}[scale=0.5]
\draw [myMiddleArrow] (0,0) -- (0,0.51); 
\draw [myMiddleArrow] (0,-0.51) -- (0,0);
\draw [myMiddleArrow,red] (0,0) -- (-0.51,0); 
\draw [myMiddleArrow,red] (0.51,0) -- (0,0);
\node [myDot] at (0,0) {};
\node at (-0.71,0) {$\scriptstyle{1}$};
\node at (0.71,0) {$\scriptstyle{1}$};
\node at (0,0.71) {$\scriptstyle{1}$};
\node at (0,-0.71) {$\scriptstyle{1}$};
\end{tikzpicture}
=
\begin{tikzpicture}[scale=0.5]
\draw [myMiddleArrow] (0,0) -- (0,0.51); 
\draw [myMiddleArrow] (0,-0.51) -- (0,0);
\draw [myMiddleArrow,red] (0,0) -- (-0.51,0); 
\draw [myMiddleArrow,red] (0.51,0) -- (0,0);
\node [myDot] at (0,0) {};
\node at (-0.71,0) {$\scriptstyle{1}$};
\node at (0.71,0) {$\scriptstyle{2}$};
\node at (0,0.71) {$\scriptstyle{3}$};
\node at (0,-0.71) {$\scriptstyle{3}$};
\end{tikzpicture}
=
\begin{tikzpicture}[scale=0.5]
\draw [myMiddleArrow] (0,0) -- (0,0.51); 
\draw [myMiddleArrow] (0,-0.51) -- (0,0);
\draw [myMiddleArrow,red] (0,0) -- (-0.51,0); 
\draw [myMiddleArrow,red] (0.51,0) -- (0,0);
\node [myDot] at (0,0) {};
\node at (-0.71,0) {$\scriptstyle{2}$};
\node at (0.71,0) {$\scriptstyle{1}$};
\node at (0,0.71) {$\scriptstyle{4}$};
\node at (0,-0.71) {$\scriptstyle{4}$};
\end{tikzpicture}
=
\begin{tikzpicture}[scale=0.5]
\draw [myMiddleArrow] (0,0) -- (0,0.51); 
\draw [myMiddleArrow] (0,-0.51) -- (0,0);
\draw [myMiddleArrow,red] (0,0) -- (-0.51,0); 
\draw [myMiddleArrow,red] (0.51,0) -- (0,0);
\node [myDot] at (0,0) {};
\node at (-0.71,0) {$\scriptstyle{2}$};
\node at (0.71,0) {$\scriptstyle{2}$};
\node at (0,0.71) {$\scriptstyle{2}$};
\node at (0,-0.71) {$\scriptstyle{2}$};
\end{tikzpicture}
=
\begin{tikzpicture}[scale=0.5]
\draw [myMiddleArrow] (0,0) -- (0,0.51); 
\draw [myMiddleArrow] (0,-0.51) -- (0,0);
\draw [myMiddleArrow,red] (0,0) -- (-0.51,0); 
\draw [myMiddleArrow,red] (0.51,0) -- (0,0);
\node [myDot] at (0,0) {};
\node at (-0.71,0) {$\scriptstyle{2}$};
\node at (0.71,0) {$\scriptstyle{2}$};
\node at (0,0.71) {$\scriptstyle{5}$};
\node at (0,-0.71) {$\scriptstyle{5}$};
\end{tikzpicture}
=
\begin{tikzpicture}[scale=0.5]
\draw [myMiddleArrow] (0,0) -- (0,0.51); 
\draw [myMiddleArrow] (0,-0.51) -- (0,0);
\draw [myMiddleArrow,red] (0,0) -- (-0.51,0); 
\draw [myMiddleArrow,red] (0.51,0) -- (0,0);
\node [myDot] at (0,0) {};
\node at (-0.71,0) {$\scriptstyle{3}$};
\node at (0.71,0) {$\scriptstyle{3}$};
\node at (0,0.71) {$\scriptstyle{2}$};
\node at (0,-0.71) {$\scriptstyle{1}$};
\end{tikzpicture}
=
\begin{tikzpicture}[scale=0.5]
\draw [myMiddleArrow] (0,0) -- (0,0.51); 
\draw [myMiddleArrow] (0,-0.51) -- (0,0);
\draw [myMiddleArrow,red] (0,0) -- (-0.51,0); 
\draw [myMiddleArrow,red] (0.51,0) -- (0,0);
\node [myDot] at (0,0) {};
\node at (-0.71,0) {$\scriptstyle{3}$};
\node at (0.71,0) {$\scriptstyle{4}$};
\node at (0,0.71) {$\scriptstyle{4}$};
\node at (0,-0.71) {$\scriptstyle{3}$};
\end{tikzpicture}
=
\begin{tikzpicture}[scale=0.5]
\draw [myMiddleArrow] (0,0) -- (0,0.51); 
\draw [myMiddleArrow] (0,-0.51) -- (0,0);
\draw [myMiddleArrow,red] (0,0) -- (-0.51,0); 
\draw [myMiddleArrow,red] (0.51,0) -- (0,0);
\node [myDot] at (0,0) {};
\node at (-0.71,0) {$\scriptstyle{3}$};
\node at (0.71,0) {$\scriptstyle{5}$};
\node at (0,0.71) {$\scriptstyle{5}$};
\node at (0,-0.71) {$\scriptstyle{3}$};
\end{tikzpicture}
=
\\
\begin{tikzpicture}[scale=0.5]
\draw [myMiddleArrow] (0,0) -- (0,0.51); 
\draw [myMiddleArrow] (0,-0.51) -- (0,0);
\draw [myMiddleArrow,red] (0,0) -- (-0.51,0); 
\draw [myMiddleArrow,red] (0.51,0) -- (0,0);
\node [myDot] at (0,0) {};
\node at (-0.71,0) {$\scriptstyle{4}$};
\node at (0.71,0) {$\scriptstyle{4}$};
\node at (0,0.71) {$\scriptstyle{1}$};
\node at (0,-0.71) {$\scriptstyle{2}$};
\end{tikzpicture}
=
\begin{tikzpicture}[scale=0.5]
\draw [myMiddleArrow] (0,0) -- (0,0.51); 
\draw [myMiddleArrow] (0,-0.51) -- (0,0);
\draw [myMiddleArrow,red] (0,0) -- (-0.51,0); 
\draw [myMiddleArrow,red] (0.51,0) -- (0,0);
\node [myDot] at (0,0) {};
\node at (-0.71,0) {$\scriptstyle{5}$};
\node at (0.71,0) {$\scriptstyle{4}$};
\node at (0,0.71) {$\scriptstyle{4}$};
\node at (0,-0.71) {$\scriptstyle{5}$};
\end{tikzpicture}
=
\begin{tikzpicture}[scale=0.5]
\draw [myMiddleArrow] (0,0) -- (0,0.51); 
\draw [myMiddleArrow] (0,-0.51) -- (0,0);
\draw [myMiddleArrow,red] (0,0) -- (-0.51,0); 
\draw [myMiddleArrow,red] (0.51,0) -- (0,0);
\node [myDot] at (0,0) {};
\node at (-0.71,0) {$\scriptstyle{5}$};
\node at (0.71,0) {$\scriptstyle{5}$};
\node at (0,0.71) {$\scriptstyle{2}$};
\node at (0,-0.71) {$\scriptstyle{2}$};
\end{tikzpicture}
=1 
,\;\,
\begin{tikzpicture}[scale=0.5]
\draw [myMiddleArrow] (0,0) -- (0,0.51); 
\draw [myMiddleArrow] (0,-0.51) -- (0,0);
\draw [myMiddleArrow,red] (0,0) -- (-0.51,0); 
\draw [myMiddleArrow,red] (0.51,0) -- (0,0);
\node [myDot] at (0,0) {};
\node at (-0.71,0) {$\scriptstyle{4}$};
\node at (0.71,0) {$\scriptstyle{5}$};
\node at (0,0.71) {$\scriptstyle{3}$};
\node at (0,-0.71) {$\scriptstyle{5}$};
\end{tikzpicture}
=
\begin{tikzpicture}[scale=0.5]
\draw [myMiddleArrow] (0,0) -- (0,0.51); 
\draw [myMiddleArrow] (0,-0.51) -- (0,0);
\draw [myMiddleArrow,red] (0,0) -- (-0.51,0); 
\draw [myMiddleArrow,red] (0.51,0) -- (0,0);
\node [myDot] at (0,0) {};
\node at (-0.71,0) {$\scriptstyle{5}$};
\node at (0.71,0) {$\scriptstyle{3}$};
\node at (0,0.71) {$\scriptstyle{5}$};
\node at (0,-0.71) {$\scriptstyle{4}$};
\end{tikzpicture}
=\zeta
,\;\,
\begin{tikzpicture}[scale=0.5]
\draw [myMiddleArrow] (0,0) -- (0,0.51); 
\draw [myMiddleArrow] (0,-0.51) -- (0,0);
\draw [myMiddleArrow,red] (0,0) -- (-0.51,0); 
\draw [myMiddleArrow,red] (0.51,0) -- (0,0);
\node [myDot] at (0,0) {};
\node at (-0.71,0) {$\scriptstyle{4}$};
\node at (0.71,0) {$\scriptstyle{3}$};
\node at (0,0.71) {$\scriptstyle{3}$};
\node at (0,-0.71) {$\scriptstyle{4}$};
\end{tikzpicture}
= -
\begin{tikzpicture}[scale=0.5]
\draw [myMiddleArrow] (0,0) -- (0,0.51); 
\draw [myMiddleArrow] (0,-0.51) -- (0,0);
\draw [myMiddleArrow,red] (0,0) -- (-0.51,0); 
\draw [myMiddleArrow,red] (0.51,0) -- (0,0);
\node [myDot] at (0,0) {};
\node at (-0.71,0) {$\scriptstyle{5}$};
\node at (0.71,0) {$\scriptstyle{5}$};
\node at (0,0.71) {$\scriptstyle{5}$};
\node at (0,-0.71) {$\scriptstyle{5}$};
\end{tikzpicture}
=\zeta^2.
\end{gather*}
Finally, it is straightforward to check that
$\Phi(c_\omega) = 2(5+5^{1/2})^{-1}\mathbf{1}_2\oplus 5^{-1/2}\mathbf{1}_3$.
\end{example}


With the aim of giving explicit definitions of both quantum channels and
prove \cref{thm:rfp}, we introduce the following auxiliary result.

\begin{restatable}{lemma}{restatablelemmaxi}
\label{lemma:xi}
Let $A$ be a biconnected C*-WHA. There exists a unique element $\xi\in{A}$
such that $\langle \omega,\xi T(\Omega_{(1)})\rangle \Omega_{(2)} = 1$.
Furthermore, it satisfies the following properties:
\begin{enumerate}
    \item it is positive, invertible and
          $ \xi^{-1} = \langle \omega , \Omega_{(1)}    \rangle \Omega_{(2)}
                     = \langle \omega , T(\Omega_{(1)}) \rangle \Omega_{(2)}$;
    \item it is invariant under $T\in\mathfrak{L}(A)$, i.e. $T(\xi) = \xi$;
    \item it satisfies the relation $T(x)^* = \xi T(x^*)\xi^{-1}$
          for all elements $x\in A$;
    \item $\mathrm{Tr}_\alpha(\xi^{-1}) = d_\alpha\langle\omega,\Omega\rangle$
          for all sectors $\alpha\in\mathrm{Irr}(A)$;
    \item it can be decomposed as $\xi = \xi_L \xi_R$ for two positive
          elements $\xi_L\in{A}_L$ and $\xi_R = S(\xi_L) = S^{-1}(\xi_L)\in{A}_R$;
\end{enumerate}
Dually, if we denote $\hat{\xi} = \hat{\xi}_L \hat{\xi}_R \in A^*$, then:
\begin{enumerate}
    \setcounter{enumi}{5}
    \item $x_{(1)}\langle \hat{\xi}_L, x_{(2)}\rangle = \xi_L^{-1}x$ and
    $ x_{(1)}\langle \hat{\xi}_R^{-1},x_{(2)}\rangle = x \xi_L^{-1}$ for all $x\in A$.
\end{enumerate}
Finally, if $A$ is a C*-HA, then
$\xi_L^2 = \xi_R^2 = \xi = \mathfrak{D}^2\varepsilon(1)1 = \langle \omega,\Omega\rangle^{-1} 1$.
\end{restatable}


See \cref{appendix:lemmaxi} for a proof. The fundamental property of
the definition of $\xi\in A$ here, interpreted in terms of tensor networks,
is provided by the following result.


\begin{restatable}{lemma}{restatablelemmaopen}
\label{lemma:open}
Let $A$ be a biconnected C*-WHA. Then,
\begin{equation*}
\begin{tikzpicture}[rotate = 0]
    \draw [myWhiteLine] (-0.65,0)--(0.65,0);
    %
    %
    \draw [rounded corners = 0.5*\myTrLength cm]
        (0.5 * \myTrLength,1) --++ (0, 0.25 + \myTrLength)       --++
        (-\myTrLength,0) --++ (0, -0.25 - 2 * \myTrLength -0.5) --++
        (\myTrLength,0) --   (0.5 * \myTrLength,0.5);
    \draw [myMiddleArrow] (0.5*\myTrLength,0.5)--++(0,0.25+0.5);
    \node [inner sep = 1pt, label={[label distance=-2pt]180:$\scriptstyle{b(\omega)\Phi(\xi)}$}] at (-0.5*\myTrLength,0.125+0.5+0.25) {};
    \node [myVioletDot] at (-0.5*\myTrLength, 0.125+0.5+0.25-0.1) {};
    \node [myRedDot] at (-0.5*\myTrLength, 0.125+0.5+0.25+0.1) {};
    %
    %
    \draw [myWhiteLine]                  (0   : 0.5) arc (0   : 180 : 0.5);
    \draw [myMiddleArrow, myVirtual] (0   : 0.5) arc (0   : 38 : 0.5);
    \draw [myVirtual] (38  : 0.5) arc (38  : 90-14 : 0.5);
    \draw [myMiddleArrow, myVirtual] (90-14 : 0.5) arc (90-14 : 180 : 0.5);
    %
    %
    \begin{scope}[yshift = 1.75 cm]
        \draw [myWhiteLine]                  (180 : 0.5) arc (180 : 360 : 0.5);
        \draw [myMiddleArrow, myVirtual] (180 : 0.5) arc (180 : 270+14 : 0.5);
        \draw [myMiddleArrow, myVirtual] (270+14 : 0.5) arc (270+14 : 360 : 0.5);
    \end{scope}
    %
    %
    \node [myDot] at (90-14 : 0.5) {};
    \node [inner sep = 1pt, label={[label distance=-1pt]0 : $\scriptstyle{b(\Omega)}$}] at (45 : 0.5) {};
    \node [myRedDot, rotate=45] at (45 : 0.5) {};
    \begin{scope}[yshift = 1.75 cm]
        \node [myWhiteDot] at (270 + 14 : 0.5) {};
    \end{scope}
    %
    %
    \begin{scope}[yshift = 1.75 cm]
        \node [myVirtual] at (320 : 0.5+0.15) {$\scriptscriptstyle{b}$};
    \end{scope}
    \node [myVirtual] at (130 : 0.5-0.15) {$\scriptscriptstyle{a}$};
\end{tikzpicture}
=
\delta_{ab}
\begin{tikzpicture}[scale=1, rotate=0]
    \draw [myWhiteLine] (-0.65,0)--(0.65,0);
    %
    %
    \node [myVirtual] at (-0.35, 0.875) {$\scriptscriptstyle{a}$};
    \node [myVirtual] at ( 0.35, 0.875) {$\scriptscriptstyle{a}$};
    %
    %
    \draw [myVirtual, myMiddleArrow, rounded corners] (-0.5, 1.75) %
        to ($ (0,1.75) + (180+20:0.5) $) %
        to [bend left=40] (180-20:0.5) %
        to (-0.5, 0);
    %
    %
    \draw [myVirtual, myMiddleArrowR, rounded corners] (+0.5, 1.75) %
        to ($ (0,1.75) + (-20:0.5) $) %
        to [bend right=40] (20:0.5) %
        to (0.5, 0);
\end{tikzpicture}
\end{equation*}
for all sectors $a,b\in\mathrm{Irr}({A})$, where $\delta_{ab}$ stands for the Kronecker delta.
\end{restatable}


\begin{proof}
Note that
\begin{align*}
\langle f,\Omega_{(1)}\rangle \langle \omega,\xi T(x)\Omega_{(2)}\rangle
 &= \langle f,T(\xi T(x))\Omega_{(1)}\rangle \langle\omega,\Omega_{(2)}\rangle
 \vphantom{\xi^{-1}}
 &&\text{by~\cref{eq:pt}}\\
 &= \langle f,x\xi\Omega_{(1)}\rangle \langle\omega,\Omega_{(2)}\rangle
 \vphantom{\xi^{-1}}
 &&\text{by~\cref{thm:Omega}}\\ 
 &= \langle f, x \xi \xi^{-1}\rangle = \langle f, x\rangle
 \vphantom{\xi^{-1}}
 && \text{by~\cref{lemma:xi}}
\end{align*}
for any two elements $x\in A$ and $f\in A^*$. Pictorially:
\begin{equation*}
\begin{tikzpicture}[rotate=90]
    \draw [white] (2.4,0)--(-1.2,0);
    \draw [white] (0,-0.965)--(0,0.965);
    \draw [rounded corners = 0.5*\myTrLength cm] (1,-0.5*\myTrLength) --++ (\myTrLength,0) --++ (0,\myTrLength) --++ (-0.5-2*\myTrLength,0) --++ (0,-\myTrLength) -- (0.5,-0.5*\myTrLength);
    \draw [myMiddleArrow] (0.5,-0.5*\myTrLength)--++(0.5,0);
    \begin{scope}[overlay]
        \node [inner sep = 1pt, label={[label distance=1pt]180:$\scriptstyle{b(\omega)\Phi(\xi)}$}] () at (0.5+\myTrLength,0.5*\myTrLength) {};
    \end{scope}
    \node [myVioletDot] () at (0.1+0.5+\myTrLength,0.5*\myTrLength) {};
    \node [myRedDot] () at (-0.1+0.5+\myTrLength,0.5*\myTrLength) {};
    \draw [rounded corners = 0.5*\myTrLength cm] (-0.5,-0.5*\myTrLength) --++ (-\myTrLength,0)
        --++ (0,\myTrLength) --++ (2*\myTrLength,0) --++ (0,-\myTrLength) -- cycle;
    \begin{scope}[]
        \node [myRedDot, label={[label distance=-2pt]270:$\scriptstyle{b(f)}$}] (nodeXi) at (-0.5-\myTrLength,0) {};
    \end{scope}
    \draw [myWhiteLine] (0:0.5) arc (0:90:0.5);
    \draw [myMiddleArrow,myVirtual] (0:0.5) arc (0:180:0.5);
    \draw [myMiddleArrow,myVirtual] (180:0.5) arc (180:270:0.5);
    \draw [myMiddleArrow,myVirtual] (-90:0.5) arc (-90:0:0.5);
    \begin{scope}[xshift=1.5cm]
        \draw [myWhiteLine] (90:0.5) arc (90:225:0.5);
        \draw [myMiddleArrow,myVirtual] (0:0.5) arc (0:180:0.5);
        \draw [myMiddleArrow,myVirtual] (180:0.5) arc (180:360:0.5);
    \end{scope}
    \node [myDot] at (-14:0.5) {};
    \node [myDot] at (180+14:0.5) {};
    \begin{scope}[overlay]
    \node [myRedDot, label={[label distance=-1pt]0:$\scriptstyle{b(\Omega)}$}] at (-90:0.5) {};
    \end{scope}
    \begin{scope}[xshift=1.5cm]
        \node [myWhiteDot] (nodeWhite) at (180+14:0.5) {};
        \node [myRedDot, label={[label distance=-2pt]90:$\scriptstyle{b(x)}$}] at (0:0.5) {};
    \end{scope}
\end{tikzpicture}
=
\begin{tikzpicture}[rotate=90]
    \draw [white] (2.4,0)--(-1.2,0);
    \draw [white] (0,-0.965)--(0,0.965);
    \draw [rounded corners = 0.5*\myTrLength cm] (0.5-0.075,-0.5*\myTrLength) --++ (\myTrLength,0)
            --++ (0,\myTrLength) --++ (-2*\myTrLength,0) --++ (0,-\myTrLength) -- cycle;
    \node [inner sep = 1pt, label={[label distance=0pt,rotate=90]0:$\scriptscriptstyle{b(\omega)\Phi(\xi)}$}] () at (0.5-0.075+\myTrLength,0) {};
    \node [myVioletDot,rotate=20] () at (0.5-0.075+\myTrLength-0.1,-0.12) {};
    \node [myRedDot,rotate=-20] () at (0.5-0.075+\myTrLength-0.1,0.12) {};
    \draw [rounded corners = 0.5*\myTrLength cm] (-0.5,-0.5*\myTrLength) --++ (-\myTrLength,0)
        --++ (0,\myTrLength) --++ (2*\myTrLength,0) --++ (0,-\myTrLength) -- cycle;
    \begin{scope}
        \node [myRedDot, label={[label distance=-2pt]270:$\scriptstyle{b(f)}$}] at (-0.5-\myTrLength,0) {};
    \end{scope}
    \draw [myWhiteLine] (10:0.5-0.075) arc (10:170:0.5-0.075);
    \draw [myMiddleArrow,myVirtual] (0:0.5-0.075) arc (0:180:0.5-0.075);
    \draw [myMiddleArrow,myVirtual] (180:0.5-0.075) arc (180:270:0.5-0.075);
    \draw [myMiddleArrow,myVirtual] (270:0.5-0.075) arc (270:360:0.5-0.075);
    \draw [myVirtual] (90-20:0.5+0.075) arc (90-20:270+20:0.5+0.075);
    \node [myDot] (nodeBlack) at (-15:0.5-0.075) {};
    \node [myDot] (nodeBlack) at (180+15:0.5-0.075) {};
    \node [myDot] (nodeBlack) at (180+12:0.5+0.075) {};
    \begin{scope}[overlay]
    \node [myRedDot,label={[label distance=2pt]0:$\scriptstyle{b(\Omega)}$}] (nodeRed) at (-90:0.5-0.075) {};
    \end{scope}
    \begin{scope}[xshift=1.5cm]
        \draw [myVirtual] (-90-20:0.5) arc (-90-20:110:0.5);
        \node [myRedDot, label={[label distance=-2pt]90:$\scriptstyle{b(x)}$}] at (0.5,0) {};
    \end{scope}
    \draw [myVirtual,myMiddleArrow] (1.5,0)++(110:0.5) to [bend left=20] (70:0.5+0.075);
    \draw [myVirtual,myMiddleArrowR] (1.5,0)++(-110:0.5) to [bend right=20] (-70:0.5+0.075);
\end{tikzpicture}
=
\begin{tikzpicture}[rotate=90]
    \draw [white] (2.4,0)--(-1.2,0);
    \draw [white] (0,-0.965)--(0,0.965);
    \draw [rounded corners = 0.5*\myTrLength cm] (-0.5,-0.5*\myTrLength) --++ (-\myTrLength,0)
        --++ (0,\myTrLength) --++ (2*\myTrLength,0) --++ (0,-\myTrLength) -- cycle;
    \begin{scope}[]
    \node [myRedDot, label={[label distance=-2pt]270:$\scriptstyle{b(f)}$}] at (-0.5-\myTrLength,0) {};
    \end{scope}
    \draw [myWhiteLine] (10:0.5-0.075) arc (10:170:0.5-0.075);
    \draw [myMiddleArrow,myVirtual] (0:0.5-0.075) arc (0:180:0.5-0.075);
    \draw [myMiddleArrow,myVirtual] (180:0.5-0.075) arc (180:270:0.5-0.075);
    \draw [myMiddleArrow,myVirtual] (270:0.5-0.075) arc (270:360:0.5-0.075);
    \draw [myVirtual] (90-20:0.5+0.075) arc (90-20:270+20:0.5+0.075);
    \node [myDot] (nodeBlack) at (180+15:0.5-0.075) {};
    \node [myDot] (nodeBlack) at (180+12:0.5+0.075) {};
    \begin{scope}[overlay]
    \node [myRedDot,label={[label distance=2pt]0:$\scriptstyle{b(1)}$}] (nodeRed) at (-90:0.5-0.075) {};
    \end{scope}
    \begin{scope}[xshift=1.5cm]
        \draw [myVirtual] (-90-20:0.5) arc (-90-20:110:0.5);
        \node [myRedDot, label={[label distance=-2pt]90:$\scriptstyle{b(x)}$}] at (0.5,0) {};
    \end{scope}
    \draw [myVirtual,myMiddleArrow] (1.5,0)++(110:0.5) to [bend left=20] (70:0.5+0.075);
    \draw [myVirtual,myMiddleArrowR] (1.5,0)++(-110:0.5) to [bend right=20] (-70:0.5+0.075);
\end{tikzpicture}
=
\begin{tikzpicture}[rotate=90]
    \draw [white] (2.4,0)--(-1.2,0);
    \draw [white] (0,-0.965)--(0,0.965);
    %
    \draw [rounded corners = 0.5*\myTrLength cm] (-0.5,-0.5*\myTrLength) --++ (-\myTrLength,0)
        --++ (0,\myTrLength) --++ (2*\myTrLength,0) --++ (0,-\myTrLength) -- cycle;
    \begin{scope}[]
    \node [myRedDot, label={[label distance=-2pt]270:$\scriptstyle{b(f)}$}] (nodeXi) at (-0.5-\myTrLength,0) {};
    \end{scope}
    \draw [myWhiteLine] (90-20:0.5) arc (90-20:180:0.5);
    \draw [myWhiteLine] (180:0.5) arc (180:270+20:0.5);
    \draw [myVirtual] (90-20:0.5) arc (90-20:180:0.5);
    \draw [myVirtual] (180:0.5) arc (180:270+20:0.5);
    \begin{scope}[xshift=1.5 cm]
        \draw [myWhiteLine] (-90:0.5) arc (-90:90:0.5);
        \draw [myVirtual] (0:0.5) arc (0:90+20:0.5);
        \draw [myVirtual] (270-20:0.5) arc (270-20:360:0.5);
        \node [myRedDot, label={[label distance=-2pt]90:$\scriptstyle{b(x)}$}] at (0.5,0) {};
    \end{scope}
    \draw [myVirtual,myMiddleArrow] (1.5,0)++(110:0.5) to [bend left=20] (70:0.5);
    \draw [myVirtual,myMiddleArrowR] (1.5,0)++(-110:0.5) to [bend right=20] (-70:0.5);
    \node [myDot] (nodeBlack) at (180+14:0.5) {};
\end{tikzpicture}
=
\begin{tikzpicture}[rotate=90]
    \draw [white] (2.4,0)--(-1.2,0);
    \draw [white] (0,-0.965)--(0,0.965);
    \draw [myWhiteLine] (90-20:0.5) arc (90-20:180:0.5);
    \draw [myWhiteLine] (180:0.5) arc (180:270+20:0.5);
    \draw [myVirtual] (90-20:0.5) arc (90-20:180:0.5);
    \draw [myVirtual] (180:0.5) arc (180:270+20:0.5);
    \begin{scope}[xshift=1.5 cm]
        \draw [myWhiteLine] (-90:0.5) arc (-90:90:0.5);
        \draw [myVirtual] (0:0.5) arc (0:90+20:0.5);
        \draw [myVirtual] (270-20:0.5) arc (270-20:360:0.5);
        \node [myRedDot, label={[label distance=-2pt]90:$\scriptstyle{b(x)}$}] at (0.5,0) {};
    \end{scope}
    \draw [myVirtual,myMiddleArrow] (1.5,0)++(110:0.5) to [bend left=20] (70:0.5);
    \draw [myVirtual,myMiddleArrowR] (1.5,0)++(-110:0.5) to [bend right=20] (-70:0.5);
    \node [myRedDot, label={[label distance=-2pt]270:$\scriptstyle{\Psi(f)}$}] (nodeXi) at (-0.5,0) {};
\end{tikzpicture}
\end{equation*}
The results follows from the surjectivity of $b$ and $\Psi$ in each block.
\end{proof}


We are now in the position to partially prove that the MPOs generated by the
MPO tensor presented above are RFP.

\begin{proof}[Proof of \cref{thm:rfp}]
Define the map $\mathfrak{T}\in\mathfrak{L}(\mathfrak{L}(V),\mathfrak{L}(V\otimes V))$ by
$$ X\mapsto 
    \begin{tikzpicture}[scale=0.5, baseline={(0,-.5ex)} ]
        \draw [myMiddleArrow] (180:1+0.75) -- (180:1); 
        \draw [rounded corners=0.125cm] (180:1) --++ (0.75+0.1,0) --++ (0,-0.5) --++ (-2.55-0.25-0.1+0.3,0) --++ (0,0.5) -- (180:1+0.75);
        \node [myVioletDot,label={[label distance=-1.211mm]above  :$\scriptscriptstyle{\;\;\Phi(\xi)}$}] at (180:1-0.5) {};
        \draw [myWhiteLine] (-160:1) arc (-160:-100:1); 
        \draw [red,myMiddleArrow] (-45:1) arc (-45:45:1);
        \draw [red,myMiddleArrow] (45:1) arc (45:180:1);
        \draw [red,myMiddleArrow] (-180:1) arc (-180:-90:1);
        \draw [red] (-90:1) arc (-90:-45:1);
	    \foreach \i in {45, -45} {
	        \draw [myMiddleArrow] (\i:1-0.75) -- (\i:1);  
	        \draw [myMiddleArrowOUT] (\i:1) -- (\i:1.75); 
	        \node [myDot] at (\i:1) {};
            \node [myRedDot,rotate=45] at (\i:1+0.3) {};
	    }
	        \node [myWhiteDot] at (180:1) {};
	    \begin{scope}[xshift=-2.75cm-0.6cm+0.3cm]
            \node [fill=white, circle, inner sep=0.5pt] at (1,0) {$\scriptscriptstyle{X}$};
	    \end{scope}
	    \node [myRedDot,label={[label distance=-0.5mm]below :$\scriptscriptstyle{b(\Omega)}$}] at (-90:1) {};
    \end{tikzpicture}.
$$ 
Trivially, it has the property of duplicating the tensor defining the MPDO:
\begin{align*}
    \begin{tikzpicture}[scale=0.5, baseline={([yshift=-.5ex]0,0)}]
            \draw [myWhiteLine] (-90:1) arc (-90:-20:1); 
            \draw [red,myMiddleArrow] (0:1) arc (0:90:1);
            \draw [red,myMiddleArrow] (-90:1) arc (-90:0:1);
            \node [myDot] at (0:1) {};
            \draw [myMiddleArrow] (0:1-0.75)--(0:1); 
            \draw [myMiddleArrowOUT] (0:1)--(0:1+0.75);
            \node [myRedDot] at (0:1+0.3) {};
    \end{tikzpicture}
    \overset{\mathcal{T}}{\longmapsto}
    &\begin{tikzpicture}[scale=0.5, baseline={(0,-.5ex)} ]
        \draw [myMiddleArrow] (180:1+0.75) -- (180:1); 
        \draw [rounded corners=0.125cm] (180:1) --++ (0.75+0.1,0) --++ (0,-0.5) --++ (-2.55-0.25-0.1+0.3,0) --++ (0,0.5) -- (180:1+0.75);
        \node [myVioletDot,label={[label distance=-1.211mm]above  :$\scriptscriptstyle{\;\;\Phi(\xi)}$}] at (180:1-0.5) {};
        \draw [myWhiteLine] (-160:1) arc (-160:-100:1); 
        \draw [red,myMiddleArrow] (-45:1) arc (-45:45:1);
        \draw [red,myMiddleArrow] (45:1) arc (45:180:1);
        \draw [red,myMiddleArrow] (-180:1) arc (-180:-90:1);
        \draw [red] (-90:1) arc (-90:-45:1);
	    \foreach \i in {45, -45} {
	        \draw [myMiddleArrow] (\i:1-0.75) -- (\i:1);  
	        \draw [myMiddleArrowOUT] (\i:1) -- (\i:1.75); 
	        \node [myDot] at (\i:1) {};
            \node [myRedDot,rotate=45] at (\i:1+0.3) {};
	    }
	        \node [myWhiteDot] at (180:1) {};
	    \begin{scope}[xshift=-2.75cm-0.6cm+0.3cm]
            \draw [myWhiteLine] (-90:1) arc (-90:-20:1); 
            \draw [red,myMiddleArrow] (0:1) arc (0:90:1);
            \draw [red,myMiddleArrow] (-90:1) arc (-90:0:1);
            \node [myDot] at (0:1) {};
            \node [myRedDot] at (0:1+0.3) {};
	    \end{scope}
	    \node [myRedDot,label={[label distance=-0.5mm]below :$\scriptscriptstyle{b(\Omega)}$}] at (-90:1) {};
    \end{tikzpicture}
    =
    \begin{tikzpicture}[scale=0.5,baseline={([yshift=-.5ex]0,0)}]
        \draw [myMiddleArrow] (180:1+0.75) -- (180:1); 
        \draw [rounded corners=0.125cm] (180:1) --++ (0.75+0.1,0) --++ (0,-0.5) --++ (-2.55-0.25-0.1+0.6,0) --++ (0,0.5) -- (180:1+0.75);
        \node [myVioletDot] at (180:1-0.6) {};
        \draw [myWhiteLine] (-160:1) arc (-160:-100:1); 
        \draw [red,myMiddleArrow] (-45:1) arc (-45:45:1);
        \draw [red,myMiddleArrow] (45:1) arc (45:180:1);
        \draw [red,myMiddleArrow] (-180:1) arc (-180:-90:1);
        \draw [red] (-90:1) arc (-90:-45:1);
	    \foreach \i in {45, -45} {
	        \draw [myMiddleArrow] (\i:1-0.75) -- (\i:1); 
	        \draw [myMiddleArrowOUT] (\i:1) -- (\i:1.75);  
	        \node [myDot] at (\i:1) {};
            \node [myRedDot,rotate=45] at (\i:1+0.3) {};
	    }
	        \node [myWhiteDot] at (180:1) {};
	        \node [myRedDot] at (180:1-0.3) {};
	    \begin{scope}[xshift=-2.75cm]
            \draw [myWhiteLine] (-90:1) arc (-90:-20:1); 
            \draw [red,myMiddleArrow] (0:1) arc (0:90:1);
            \draw [red,myMiddleArrow] (-90:1) arc (-90:0:1);
            \node [myDot] at (0:1) {};
	    \end{scope}
	    \node [myRedDot,label={[label distance=-0.5mm]below :$\scriptscriptstyle{b(\Omega)}$}] at (-90:1) {};
    \end{tikzpicture}
    =
    \begin{tikzpicture}[scale=0.5,baseline={([yshift=-.5ex]0,0)}]
            \draw [red] (-90:1) arc (-90:-45:1);
            \draw [red,myMiddleArrow] (-45:1) arc (-45:45:1);
            \draw [red] (45:1) arc (45:90:1);
    	    \foreach \i in {45, -45} {
    	        \draw [myMiddleArrow] (\i:1-0.75) -- (\i:1); 
    	        \draw [myMiddleArrowOUT] (\i:1) -- (\i:1.75); 
    	        \node [myDot] at (\i:1) {};
                \node [myRedDot,rotate=45] at (\i:1+0.3) {};
    	    }
    \end{tikzpicture}.
\end{align*}
In the first equality we have used that the weight $\Phi(c_\omega)\in \mathfrak{L}(V)$
can be freely moved along the physical indices since $c_\omega\in A$ is a central element.
The second equality follows from \cref{lemma:open}.
We postpone the proof of the fact that it is a quantum channel and the
definition of the quantum channel $\mathfrak{S}$ to \cref{appendix:proofrfp}.
\end{proof}

\section{RFP MPDOs are boundary states of topological 2D PEPS}
\label{sec:boundaries}

In this section we show that RFP MPDOs $\rho(\Omega,n)$ defined
in \cref{thm:rfp} arise as boundary states of topological 2D PEPS with
certain properties. As commented above, PEPS are tensor networks built using 2D arrays
of tensors for the particular case of a rectangular lattice. To construct a PEPS,
one associates a tensor describing a map from some virtual vector space to the
physical Hilbert space, to each site of a lattice and performs tensor
contractions on the virtual space according to the graph of the lattice.
PEPS exhibiting topological order has been constructed from unitary fusion categories.
The same approach can be reformulated using biconnected C*-WHAs.
See \cite[Section 7]{molnar_2022_mpo} for a detailed discussion.
As in the 1D case of MPS, global properties of PEPS can be characterized locally
using the virtual level of the individual tensors. In particular, PEPS
exhibiting topological order are characterized by tensors with MPO symmetries
acting purely at the virtual level. That corresponds to the pulling-through
condition of the MPOs on the PEPS tensors.
Finally, in this setting, the boundary state associated to the 2D PEPS is obtained
by contracting the physical indices of the 2D PEPS with open boundaries and its
conjugate transpose. Pictorially:
\begin{equation*}
    \begin{tikzpicture}[3d view = {65}{40},scale=0.45]
        \begin{scope}[canvas is xy plane at z=0]
            \foreach \k in {2,4,-1} {
                \draw [] (-0.75,\k)--(3.5+0.22,\k);
                \draw [] (4.5-0.22,\k)--(5.75,\k);
            }
            \foreach \k in {0,2,5} {
                \draw [] (\k,0.5-0.22)--(\k,4.75);
                \draw [] (\k,-1.75)--(\k,-0.5+0.22);
            }
        \end{scope}
        \begin{scope}[canvas is xy plane at z=0]
            \foreach \k in {-1,2,4} { \node [transform shape] at (4.05,\k) {$\cdots$}; }
            \foreach \k in {0,2,5} { \node [transform shape, rotate=90] at (\k,0.05) {$\cdots$}; }
        \end{scope}
        \begin{scope}[canvas is xy plane at z=1]
            \foreach \k in {2,4,-1} {
                \draw [myWhiteLine] (-0.75,\k)--(3.5,\k);
                \draw [myWhiteLine] (4.5,\k)--(5.75,\k);
                \draw [myWhiteLine] (3.5,\k)--(4.5,\k);
                \draw (-0.75,\k)--(3.5+0.22,\k);
                \draw (4.5-0.22,\k)--(5.75,\k);
            }
            \foreach \k in {0,2,5} {
                \draw [myWhiteLine] (\k,0.5)--(\k,1.5);
                \draw [myWhiteLine] (\k,2.5)--(\k,3.5);
                \draw (\k,0.5-0.22)--(\k,4.75);
                \draw (\k,-1.75)--(\k,-0.5+0.22);
            }
        \end{scope}
        \begin{scope}[canvas is xy plane at z=1]
            \foreach \k in {-1,2,4} { \node [transform shape] at (4.05,\k) {$\cdots$}; }
            \foreach \k in {0,2,5} { \node [transform shape, rotate=90] at (\k,0.05) {$\cdots$}; }
        \end{scope}
        \foreach \i in {0,2,5} {
            \foreach \j in {-1,2,4} {
                \draw [line width=1] (\i,\j,0)--++(0,0,1);
                \begin{scope}[canvas is xy plane at z=0]
                \node [rectangle, draw = black, fill = violet!5, inner sep = 4pt, transform shape] at (\i,\j) {};
                \end{scope}
                \begin{scope}[canvas is xy plane at z=1]
                \node [rectangle, draw = black, fill = violet!5, inner sep = 4pt, transform shape] at (\i,\j) {};
                \end{scope}
            }
        }
    \end{tikzpicture}
    \;\,=\;\,
    \begin{tikzpicture}[3d view = {65}{40}, scale=0.5]
    \begin{scope}[canvas is xy plane at z=0.5]
        \foreach \k in {-1,4} { \node [transform shape] at (3,\k) {$\cdots$}; }
        \foreach \k in {0,5} { \node [transform shape, rotate=90] at (\k,+0.45) {$\cdots$}; }
        \draw [red, rounded corners = 0.125cm] (0,-0.5) -- (0,-1) -- (0.5,-1);
        \draw [red, rounded corners = 0.125cm] (0,3.5) -- (0,4) -- (0.5,4);
        \draw [red, rounded corners = 0.125cm] (5,-0.5) -- (5,-1) -- (4.5,-1);
        \draw [red, rounded corners = 0.125cm] (4.5,4) -- (5,4) -- (5,3.5);
    \end{scope}
    \foreach \k in {0.75,2,4.25} {
        \draw [rounded corners = 0.05cm] (\k,4,0.5) --++ (0,0,0.5) --++ (0,0.75,0);
        \draw [rounded corners = 0.05cm] (\k,4,0.5) --++ (0,0,-0.5) --++ (0,0.75,0);
        \node [myRedDot] at (\k, 4, 0.5+0.35) {};
    }
    \foreach \k in {4,-1} {
        \draw [myWhiteLine] (1.3,\k,0.5)--(1.75,\k,0.5);
        \draw [myWhiteLine] (3.5,\k,0.5)--(3.95,\k,0.5);
        \draw [red] (0.5,\k,0.5)--(3-0.25,\k,0.5);
        \draw [red] (3+0.25,\k,0.5)--(4.5,\k,0.5);
    }
    \foreach \k in {-0.5,2,3.5} {
        \draw [rounded corners = 0.05cm] (5,\k,0.5) --++ (0,0,0.5) --++ (0.75,0,0);
        \draw [rounded corners = 0.05cm] (5,\k,0.5) --++ (0,0,-0.5) --++ (0.75,0,0);
        \node [myRedDot] at (5,\k, 0.5+0.35) {};
    }
    \foreach \k in {0,5} {
        \draw [red] (\k,-0.5,0.5)--(\k,0.45-0.25,0.5);
        \draw [myWhiteLine] (\k,1,0.5)--(\k,1.75,0.5);
        \draw [myWhiteLine] (\k,2.5,0.5)--(\k,3.25,0.5);
        \draw [red] (\k,0.45+0.25,0.5)--(\k,3.5,0.5);
    }
    \foreach \k in {-0.5,2,3.5} {
        \draw [myWhiteLine,rounded corners = 0.05cm] (0,\k,0.5) --++ (0,0,0.5) --++ (-0.75,0,0);
        \draw [myWhiteLine,rounded corners = 0.05cm] (0,\k,0.5) --++ (0,0,-0.5) --++ (-0.75,0,0);
        \draw [rounded corners = 0.05cm] (0,\k,0.5) --++ (0,0,0.5) --++ (-0.75,0,0);
        \draw [rounded corners = 0.05cm] (0,\k,0.5) --++ (0,0,-0.5) --++ (-0.75,0,0);
        \node [myRedDot] at (0,\k, 0.5+0.35) {};
    }
    \foreach \k in {0.75,2,4.25} {
        \draw [myWhiteLine,rounded corners = 0.05cm] (\k,-1,0.5) --++ (0,0,0.5) --++ (0,-0.75,0);
        \draw [myWhiteLine,rounded corners = 0.05cm] (\k,-1,0.5) --++ (0,0,-0.5) --++ (0,-0.75,0);
        \draw [rounded corners = 0.05cm] (\k,-1,0.5) --++ (0,0,0.5) --++ (0,-0.75,0);
        \draw [rounded corners = 0.05cm] (\k,-1,0.5) --++ (0,0,-0.5) --++ (0,-0.75,0);
        \node [myRedDot] at (\k, -1, 0.5+0.35) {};
    }
    \node [myDot] at (0.75,-1,0.5) {};
    \node [myDot] at (2,-1,0.5) {};
    \node [myDot] at (4.25,-1,0.5) {};
    \node [myDot] at (0.75,4,0.5) {};
    \node [myDot] at (2,4,0.5) {};
    \node [myDot] at (4.25,4,0.5) {};
    \node [myDot] at (0,-0.5,0.5) {};
    \node [myDot] at (0,2,0.5) {};
    \node [myDot] at (0,3.5,0.5) {};
    \node [myDot] at (5,-0.5,0.5) {};
    \node [myDot] at (5,2,0.5) {};
    \node [myDot] at (5,3.5,0.5) {};
    \end{tikzpicture},
\end{equation*}
Let us prove the following theorem.

\begin{theorem}
For any regular biconnected C*-WHA, RFP MPDOs defined in \cref{thm:rfp}
are boundary states of topological 2D PEPS fulfilling a renormalization fixed point property.
\end{theorem}

\begin{proof}
Fix a regular biconnected C*-WHA ${A}$ and a faithful
$*$-representation $(V,\Phi)$. As commented in \cref{sec:preliminaries}, the
associate MPO tensors are described in terms of another $*$-representation
$(W,\Psi)$ of ${A}^*$ in the virtual level.
Let us first construct  the ansatz tensor
for the 2D PEPS whose boundary state is the given matrix product density
operator $\rho(\Omega,n)$. For the sake of simplicity, we will restrict to
underlying geometries described by square lattices, although the proof
works for any 2D PEPS defined on any directed pseudo-graph. In this case,
we will consider the 2D PEPS tensor depicted as follows:
\begin{equation}
    \label{eq:2DPEPStensor}
    \begin{tikzpicture}[3d view = {65}{40}, scale = 0.75]
        \draw [white] (65:2.5)--(180+65:2.5);
        \begin{scope}[canvas is xy plane at z = 0]
            \draw [myMiddleArrow]  (0   : 1+1) -- (0   : 0);
            \draw [myMiddleArrowR] (90  : 1+1) -- (90  : 0);
            \draw [myMiddleArrowR] (180 : 1+1) -- (180 : 0);
            \draw [myMiddleArrow]  (270 : 1+1) -- (270 : 0);
        \end{scope}
        \begin{scope}[canvas is xy plane at z = 0]
        \node [rectangle, draw = black, fill = violet!5, inner sep = 10pt, transform shape] at (0,0) {};
        \end{scope}
        \draw [line width = 0.65pt] ( 0, 0, 0) --++ (0,0,2.5);
    \end{tikzpicture}
    :=
    \begin{tikzpicture}[3d view = {65}{40}, scale = 0.75]
        \draw [white] (65:3)--(180+65:3);
        \begin{scope}[canvas is xy plane at z = 0]
            \draw [myVirtual,myMiddleArrowR] (90  : 1) arc (90  : 0  : 1);
            \draw [myVirtual]                (180 : 1) arc (180 : 90 : 1);
            \draw [myVirtual]                (270 : 1) arc (270 : 180 : 1);
            \draw [myVirtual,myMiddleArrowR] (360 : 1) arc (360 : 270 : 1);
            \draw [myMiddleArrow] (0   : 1+1) -- (0   : 1);
            \draw []              (180 : 1) -- (180 : 1+1);
            \draw [myMiddleArrow] (270 : 1+1) -- (270 : 1);
            \draw [myMiddleArrow] (90 : 1) -- (90 : 1+1);
        \end{scope}
        \draw [myWhiteLine, rounded corners] ( 1,  0, 0) --++ (-0.4,0,0) --++ (0,0,2) --++ (0.4,0,0);
        \draw [myWhiteLine, rounded corners] (-1,  0, 0) --++ (0.4,0,0)  --++ (0,0,2) --++ (-0.4,0,0);
        \draw [myWhiteLine, rounded corners] ( 0,  1, 0) --++ (0,-0.4,0) --++ (0,0,2) --++ (0,0.4,0);
        \draw [myWhiteLine, rounded corners] ( 0, -1, 0) --++ (0,0.4,0)  --++ (0,0,2) --++ (0,-0.4,0);
        \draw [rounded corners, myMiddleArrowR] ( 1-0.4,  0, 2.5) --++  (0,0,-2.5) --++ (0.4,0,0); 
        \draw [rounded corners, myMiddleArrowR] (-1,  0, 0) --++ (0.4,0,0)  --++ (0,0,2.5);
        \draw [rounded corners, myMiddleArrow]  ( 0,  1-0.4, 2.5) --++ (0,0,-2.5) --++ (0,0.4,0);
        \draw [rounded corners, myMiddleArrow]  ( 0, -1, 0) --++ (0,0.4,0)  --++ (0,0,2.5);
        \begin{scope}[canvas is xy plane at z = 0]
            \node [myWhiteDot]          at (0   : 1) {};
            \node [myDot]               at (90  : 1) {};
            \node [myDot]               at (180 : 1) {};
            \node [myRedDot,rotate=-20] at (225 : 1) {};
            \node [myWhiteDot]          at (270 : 1) {};
            \node [rotate = -20, inner sep = 1pt, label={[label distance=-3pt]200:$\scriptstyle{b(\Omega)}$}] at (225 : 1) {};
        \end{scope}
        \begin{scope}[canvas is xy plane at z = 1.5]
            \node [myRedDot] at (0 : 1-0.4) {};
            \node [myRedDot, label={[label distance=-1pt]0:$\scriptstyle{b(\omega)^{\frac{1}{4}}}$}] at (90 : 1-0.4) {};
            \node [myRedDot] at (180 : 1-0.4) {};
            \node [myRedDot] at (270 : 1-0.4) {};
        \end{scope}
        \begin{scope}[canvas is xy plane at z = 1.7, overlay]
            \node [myVioletDot] at (0 : 1-0.4) {};
            \node [myVioletDot, label={[label distance=1pt]180:$\scriptstyle{   b(\omega)^{\frac{1}{4}} \Phi(\xi)^{\frac{1}{2}}   }$}] at (270 : 1-0.4) {};
        \end{scope}
        \begin{scope}[canvas is xy plane at z = 0]
            \node [myVioletDot, label={[label distance=-3pt]0:$\scriptstyle{\Psi(\hat \xi_L)}$}] at (43 : 1) {};
        \end{scope}
    \end{tikzpicture}
    ~.
\end{equation}
Here, the physical space is given by the tensor product
$V\otimes V\otimes V^*\otimes V^*$ and there are four virtual indices, each of them corresponding to $V$ or $V^*$. If read in clockwise direction starting
from the virtual weight $b(\Omega)\in\mathfrak{L}(W)$, the tensor is
algebraically described by the expresion
\begin{equation*} %
    (b(\omega)^{\frac{1}{4}})^{\otimes 4} 
        \big( \Phi(\Omega_{(1)})
              \otimes \Phi(\Omega_{(2)})
              \otimes \langle\hat\xi_L,\Omega_{(3)}\rangle
                      \Phi(\xi^{\frac{1}{2}}T(\Omega_{(4)}))
              \otimes \Phi(\xi^{\frac{1}{2}}T(\Omega_{(5)})) 
        \big)
\end{equation*}
as an operator from physical to virtual spaces.
Recall that $b(\omega)\in\mathfrak{L}(V)$ is an invertible positive central
operator and hence it can be freely moved along the physical indices. 
Let us now show that the boundary operator is the desired operator.

\vspace{0.5em}\noindent\emph{Step 1.} Let us first simplify the transfer
operator associated to the previous 2D PEPS tensor,
$\mathbb{E} = \mathbb{E}(A,V,\Phi) \in \mathfrak{L}(V\otimes V\otimes V^*\otimes V^*)$
obtained by contracting the physical indices of the 2D PEPS tensor and its
corresponding conjugate transpose if regarded as an operator. Algebraically, it
is given by the expression
\begin{multline*}
    \mathbb{E} = (b(\omega)^{\frac{1}{2}})^{\otimes 4}
    \big(
        \Phi(\Omega_{(1)}) \Phi(\Omega_{(1')})^\dagger 
        \otimes
        \Phi(\Omega_{(2)}) \Phi(\Omega_{(2')})^\dagger \otimes
        \overline{\langle\hat{\xi}_L^{-1},\Omega_{(3')}\rangle}
        \langle\hat{\xi}_L^{-1},\Omega_{(3')}\rangle
    \\
        \cdot 
        \Phi (T(\Omega_{(4')}))^\dagger \Phi(\xi)  \Phi(T(\Omega_{(4)}))
        \otimes
        \Phi (T(\Omega_{(5')}))^\dagger \Phi(\xi) \Phi(T(\Omega_{(5)}))
    \big),
\end{multline*}
where we have employed that $b(\omega)\in\mathfrak{L}(V)$ is positive and
central and $\Phi(\xi)\in\mathfrak{L}(V)$ is positive, since
$\Phi\in\mathfrak{L}(A,\mathfrak{L}(V))$ is a $*$-representation
and $\xi\in A$ is positive. Note that the order of composition is reversed for
the terms associated to white tensors. In order to fully describe $\mathbb{E}$ in terms
of tensor networks, note that
\begin{equation*}
    \overline{   \langle\hat{\xi}_L^{-1} , x   \rangle}
    =
    \langle ( \hat\xi_L^{-1} )^* , S(x)^* \rangle
    =
    \langle \hat\xi_L^{-1} , S(x)^* \rangle
    =
    \langle \hat\xi_L^{-1} , S^{-1}(x^*) \rangle
    =
    \langle \hat\xi_R^{-1} , x^* \rangle
\end{equation*}
for all $x\in A$, where the first equality is due to \cref{remark:dualcwha},
the second equality follows from the positivity of
$\hat\xi_L\in {A}^*$, the third equality is due to \cref{remark:propsS} and the fourth equality follows from the
definition of
$\hat \xi_R\in A^*$, see \cref{lemma:xi}. In addition, recall that
$\Phi\in\mathfrak{L}(A,\mathfrak{L}(V))$ is a $*$-re\-pre\-sen\-tation and
$T(x)^*\xi = \xi T(x^*)$ for all $x\in A$, see \cref{lemma:xi}. Therefore:
\begin{multline*}
    \mathbb{E} =
    (b(\omega)^{\frac{1}{2}})^{\otimes 4}
    \big(
        \Phi(\Omega_{(1)}\Omega_{(1')}^*) \otimes \Phi(\Omega_{(2)}\Omega_{(2')}^*) 
    \\
        \otimes
        \langle\hat{\xi}_R^{-1},\Omega_{(3')}^*\rangle\langle\hat{\xi}_L^{-1},\Omega_{(3)}\rangle
        \Phi (\xi T(\Omega_{(4')}^*) T(\Omega_{(4)}))
        \otimes \Phi (\xi T(\Omega_{(5')}^*) T(\Omega_{(5)})).
    \big)
\end{multline*}
Hence, the transfer operator can be represented graphically as follows:
\begin{equation*}
    \mathbb{E}
    =
    \begin{tikzpicture}[3d view = {65}{40}, scale = 0.7]
        \begin{scope}[canvas is xy plane at z = 0]
            \draw [myVirtual,myMiddleArrowR] (90  : 1) arc (90  : 0  : 1);
            \draw [myVirtual]                (180 : 1) arc (180 : 90 : 1);
            \draw [myVirtual]                (270 : 1) arc (270 : 180 : 1);
            \draw [myVirtual,myMiddleArrowR] (360 : 1) arc (360 : 270 : 1);
            \draw [myMiddleArrow] (0   : 1+1) -- (0   : 1);
            \draw [myMiddleArrow] (90   : 1) -- (90   : 1+1);
            \draw []              (180 : 1) -- (180 : 1+1);
            \draw [myMiddleArrow] (270 : 1+1) -- (270 : 1);
        \end{scope}
        \draw [myWhiteLine, rounded corners] ( 1,  0, 0) --++ (-0.4,0,0) --++ (0,0,3) --++ (0.4,0,0);
        \draw [myWhiteLine, rounded corners] (-1,  0, 0) --++ (0.4,0,0)  --++ (0,0,3) --++ (-0.4,0,0);
        \draw [myWhiteLine, rounded corners] ( 0,  1, 0) --++ (0,-0.4,0) --++ (0,0,3) --++ (0,0.4,0);
        \draw [myWhiteLine, rounded corners] ( 0, -1, 0) --++ (0,0.4,0)  --++ (0,0,3) --++ (0,-0.4,0);
        \draw [rounded corners,myMiddleArrowR] ( 1,  0, 3) --++ (-0.4,0,0) --++ (0,0,-1.5);
        \draw [rounded corners,myMiddleArrowR] ( 1-0.4,  0, 1.5) --++  (0,0,-1.5) --++ (0.4,0,0);
        \draw [rounded corners,myMiddleArrowR] (-1,  0, 0) --++ (0.4,0,0)  --++ (0,0,3) --++ (-0.4,0,0);
        \draw [rounded corners]                ( 0,  1, 3) --++ (0,-0.4,0) --++ (0,0,-1.5);
        \draw [rounded corners,myMiddleArrow]  ( 0,  1-0.4, 1.5) --++  (0,0,-1.5) --++ (0,0.4,0);
        \draw [rounded corners,myMiddleArrow]  ( 0, -1, 0) --++ (0,0.4,0)  --++ (0,0,3) --++ (0,-0.4,0);
        \begin{scope}[canvas is xy plane at z = 3]
            \draw [myWhiteLine] (90  : 1) arc (90  : 0   : 1);
            \draw [myWhiteLine] (180 : 1) arc (180 : 90  : 1);
            \draw [myWhiteLine] (270 : 1) arc (270 : 180 : 1);
            \draw [myWhiteLine] (360 : 1) arc (360 : 270 : 1);
            \draw [myVirtual,myMiddleArrowR] (90  : 1) arc (90  : 0   : 1);
            \draw [myVirtual] (180 : 1) arc (180 : 90 : 1);
            \draw [myVirtual]                (270 : 1) arc (270 : 180 : 1);
            \draw [myVirtual,myMiddleArrowR] (360 : 1) arc (360 : 270 : 1);
            \draw [myWhiteLine] (0   : 1+1) -- (0   : 1);
            \draw [myWhiteLine] (90  : 1+1) -- (90  : 1);
            \draw [myWhiteLine] (180 : 1+1) -- (180 : 1);
            \draw [myWhiteLine] (270 : 1+1) -- (270 : 1);
            \draw [myMiddleArrow] (0   : 1+0.5) -- (0   : 1+1);
            \draw (0   : 1) -- (0   : 1+0.5);
            \draw (90  : 1+0.5) -- (90  : 1);
            \draw [myMiddleArrow] (90  : 1+1) -- (90  : 1+0.5);
            \draw (180 : 1+0.5) -- (180 : 1);
            \draw [myMiddleArrow] (180 : 1+1) -- (180 : 1+0.5);
            \draw (270 : 1) -- (270 : 1+0.5);
            \draw [myMiddleArrow] (270 : 1+0.5) -- (270 : 1+1);
        \end{scope}
        \begin{scope}[canvas is xy plane at z = 0]
            \node [myWhiteDot] at (0   : 1) {};
            \node [myDot] at (90  : 1) {};
            \node [myDot] at (180 : 1) {};
            \node [myRedDot,rotate=-20] at (225 : 1) {};
            \node [myWhiteDot] at (270 : 1) {};
            \node [rotate=-20, inner sep = 1pt,label ={[label distance=-3pt]200:$\scriptstyle{b(\Omega)}$}] at (225 : 1) {};
        \end{scope}
        \begin{scope}[canvas is xy plane at z = 3]
            \node [myRedDot,rotate=30] at (0 : 1.5) {};
            \node [myRedDot,rotate=20] at (90 : 1.5) {};
            \node [inner sep = 1pt,label ={[label distance=-2pt]0:$\scriptstyle{b(\omega)^{\frac{1}{2}}}$}] at (175 : 1.8) {};
            \node [myRedDot,rotate=30] at (180 : 1.5) {};
            \node [myRedDot,rotate=20] at (270 : 1.5) {};
            \node [rotate=-20, inner sep = 1pt,label ={[label distance=-5pt]180:$\scriptstyle{b(\Omega^*)}$}] at (225 : 1) {};
        \end{scope}
        \begin{scope}[canvas is xy plane at z = 3, label distance = -2mm]
            \node [myWhiteDot] at (0   : 1) {};
            \node [myDot] at (90  : 1) {};
            \node [myDot] at (180 : 1) {};
            \node [myRedDot,rotate=-12.5] at (225 : 1) {};
            \node [myWhiteDot] at (270 : 1) {};
        \end{scope}
        \begin{scope}[canvas is xy plane at z = 1.5]
            \node [myVioletDot] at (0:1-0.4) {};
            \node [myVioletDot, label={[label distance=2pt]180:$\scriptstyle{\Phi(\xi)}$}] at (-90:1-0.4) {};
        \end{scope}
        \begin{scope}[canvas is xy plane at z = 0]
            \node [myVioletDot, label={[label distance=-2pt]0:$\scriptstyle{\Psi(\hat \xi_L^{-1})}$}] at (43:1) {};
        \end{scope}
        \begin{scope}[canvas is xy plane at z = 3]
            \node [myVioletDot, label={[label distance=-2pt]0:$\scriptstyle{\Psi(\hat \xi_R^{-1})}$}] at (43:1) {};
        \end{scope}
    \end{tikzpicture}
    =
    \begin{tikzpicture}[3d view = {65}{40}, scale = 0.7]
        \begin{scope}[canvas is xy plane at z = 0]
            \draw [myVirtual,myMiddleArrowR] (90  : 1) arc (90  : 0  : 1);
            \draw [myVirtual]                (180 : 1) arc (180 : 90 : 1);
            \draw [myVirtual]                (270 : 1) arc (270 : 180 : 1);
            \draw [myVirtual,myMiddleArrowR] (360 : 1) arc (360 : 270 : 1);
            \draw [myMiddleArrow] (0   : 1+1) -- (0   : 1);
            \draw [myMiddleArrow] (90   : 1) -- (90   : 1+1);
            \draw []              (180 : 1) -- (180 : 1+1);
            \draw [myMiddleArrow] (270 : 1+1) -- (270 : 1);
        \end{scope}
        \draw [myWhiteLine, rounded corners] ( 1,  0, 0) --++ (-0.4,0,0) --++ (0,0,3) --++ (0.4,0,0);
        \draw [myWhiteLine, rounded corners] (-1,  0, 0) --++ (0.4,0,0)  --++ (0,0,3) --++ (-0.4,0,0);
        \draw [myWhiteLine, rounded corners] ( 0,  1, 0) --++ (0,-0.4,0) --++ (0,0,3) --++ (0,0.4,0);
        \draw [myWhiteLine, rounded corners] ( 0, -1, 0) --++ (0,0.4,0)  --++ (0,0,3) --++ (0,-0.4,0);
        \draw [rounded corners,myMiddleArrowR] ( 1,  0, 3) --++ (-0.4,0,0) --++ (0,0,-1.5);
        \draw [rounded corners,myMiddleArrowR] ( 1-0.4,  0, 1.5) --++  (0,0,-1.5) --++ (0.4,0,0);
        \draw [rounded corners,myMiddleArrowR] (-1,  0, 0) --++ (0.4,0,0)  --++ (0,0,3) --++ (-0.4,0,0);
        \draw [rounded corners]                ( 0,  1, 3) --++ (0,-0.4,0) --++ (0,0,-1.5);
        \draw [rounded corners,myMiddleArrow]  ( 0,  1-0.4, 1.5) --++  (0,0,-1.5) --++ (0,0.4,0);
        \draw [rounded corners,myMiddleArrow]  ( 0, -1, 0) --++ (0,0.4,0)  --++ (0,0,3) --++ (0,-0.4,0);
        \begin{scope}[canvas is xy plane at z = 3]
            \draw [myWhiteLine] (90  : 1) arc (90  : 0   : 1);
            \draw [myWhiteLine] (180 : 1) arc (180 : 90  : 1);
            \draw [myWhiteLine] (270 : 1) arc (270 : 180 : 1);
            \draw [myWhiteLine] (360 : 1) arc (360 : 270 : 1);
            \draw [myVirtual,myMiddleArrowR] (90  : 1) arc (90  : 0   : 1);
            \draw [myVirtual] (180 : 1) arc (180 : 90 : 1);
            \draw [myVirtual]                (270 : 1) arc (270 : 180 : 1);
            \draw [myVirtual,myMiddleArrowR] (360 : 1) arc (360 : 270 : 1);
            \draw [myWhiteLine] (0   : 1+1) -- (0   : 1);
            \draw [myWhiteLine] (90  : 1+1) -- (90  : 1);
            \draw [myWhiteLine] (180 : 1+1) -- (180 : 1);
            \draw [myWhiteLine] (270 : 1+1) -- (270 : 1);
            \draw [myMiddleArrow] (0   : 1+0.5) -- (0   : 1+1);
            \draw (0   : 1) -- (0   : 1+0.5);
            \draw (90  : 1+0.5) -- (90  : 1);
            \draw [myMiddleArrow] (90  : 1+1) -- (90  : 1+0.5);
            \draw (180 : 1+0.5) -- (180 : 1);
            \draw [myMiddleArrow] (180 : 1+1) -- (180 : 1+0.5);
            \draw (270 : 1) -- (270 : 1+0.5);
            \draw [myMiddleArrow] (270 : 1+0.5) -- (270 : 1+1);
        \end{scope}
        \begin{scope}[canvas is xy plane at z = 0]
            \node [myWhiteDot] at (0   : 1) {};
            \node [myDot] at (90  : 1) {};
            \node [myDot] at (180 : 1) {};
            \node [myRedDot,rotate=-20] at (225 : 1) {};
            \node [myWhiteDot] at (270 : 1) {};
            \node [rotate=-20, inner sep = 1pt,label ={[label distance=-3pt]200:$\scriptstyle{b(\Omega)}$}] at (225 : 1) {};
        \end{scope}
        \begin{scope}[canvas is xy plane at z = 3]
            \node [myRedDot,rotate=30] at (0 : 1.5) {};
            \node [myRedDot,rotate=20] at (90 : 1.5) {};
            \node [myRedDot,rotate=30] at (180 : 1.5) {};
            \node [myRedDot,rotate=20] at (270 : 1.5) {};
            \node [rotate=-20, inner sep=1pt,label ={[label distance=-5pt]180:$\scriptstyle{b(\Omega^*)}$}] at (225 : 1) {};
        \end{scope}
        \begin{scope}[canvas is xy plane at z = 3]
            \node [myWhiteDot] at (0   : 1) {};
            \node [myDot] at (90  : 1) {};
            \node [myDot] at (180 : 1) {};
            \node [myRedDot, rotate=-12.5] at (225 : 1) {};
            \node [myWhiteDot] at (270 : 1) {};
        \end{scope}
        \begin{scope}[canvas is xy plane at z = 3]
            \node [myVioletDot] at (0:1+0.3) {};
            \node [myVioletDot, label={[label distance=-4pt]135:$\scriptstyle{b(\omega)^{\frac{1}{2}}\Phi(\xi)}$}] at (-90:1+0.3) {};
        \end{scope}
        \begin{scope}[canvas is xy plane at z = 0]
            \node [myVioletDot, label={[label distance=-2pt]0:$\scriptstyle{\Psi(\hat \xi_L^{-1})}$}] at (43:1) {};
        \end{scope}
        \begin{scope}[canvas is xy plane at z = 3]
            \node [myVioletDot, label={[label distance=-2pt]0:$\scriptstyle{\Psi(\hat \xi_R^{-1})}$}] at (43:1) {};
        \end{scope}
    \end{tikzpicture}
\end{equation*}
On the other hand, $\Psi(\hat\xi_L^{-1})$ and $\Psi(\hat\xi_R^{-1})$ can be ``moved''
from the virtual to the physical spaces using the following identities:
\begin{equation*}
    x_{(1)} \langle\hat\xi_L^{-1},x_{(2)}\rangle   = \xi_L^{-1} x 
    \;\,\text{ and }\;\,
    x_{(1)} \langle\hat\xi_R^{-1},x_{(2)}\rangle = x \xi_L^{-1} 
\end{equation*}
for all elements $x\in {A}$; see \cref{lemma:xi}. In graphical
notation, the previous formulas are rephrased in the following form:
\begin{equation}
    \label{eq:xiPhysVirt}

    .
\end{equation*}
Note that in the previous equations the inner circle representing
$\mathrm{Tr}^a(\hat\xi^{-1})$ is not independent of the outer shape and
hence it gives rise to possibly different constant in each sector, as it is a
sum over all sectors $a \in\mathrm{Irr}(A)$. As showed in \cref{lemma:xi}, these
are precisely the Frobenius-Perron dimensions which define, in each sector, the
canonical regular element $\Omega\in {A}$. Therefore we can rewrite it in terms
of the weight $b(\Omega)\in\mathfrak{L}(W)$, as done in the last equality.
Iterating this procedure for each plaquette of the lattice proves that matrix
product density operators defined in the previous section arise naturally as
boundary states of topological 2D PEPS.

Note also that \cref{eq:twoTO} is
nothing but a natural 2D generalization of the renormalization fixed point condition for MPS
defined in \cite{cirac_2017_mpdo}. In that sense, we can conclude that the RFP
MPDOs considered in this paper are boundary states of PEPS fulfilling
this renormalization fixed point property.
\end{proof}

\section{Classification via shallow circuits of quantum channels}
\label{sec:classif}

In this section we prove that RFP MPDOs arising from C*-HAs belong
to the trivial phase. Namely, we provide explicit definitions of depth-two
circuits of finite-range quantum channels that map the maximally mixed state
to these RFP MPDOs. Finally, we show that our construction cannot be extended
to arbitrary biconnected C*-WHAs, which lead us to the conjecture
that there are non-trivial phases in this context.

In order to deepen the intuition towards the general case of C*-HAs,
let first examine the simplest non-trivial example.

\begin{example}\label{example:circuitZ2}
RFP MPDOs arising from the group C*-HA $A:=\mathbb{CZ}_2$, introduced
in \cref{ex:groupCHA} and \cref{ex:Z2}, are in the trivial phase.
Specifically, we build
$$ \rho(\Omega,n) = \tfrac{1}{2^{n}}(\mathbf{1}^{\otimes n} + \sigma_z^{\otimes n}) $$
via a depth-two circuit of range-two quantum channels from the maximally mixed
state $\mathrm{Tr}(\mathbf{1})^{-n} \mathbf{1}^{\otimes n}$. We assume without
loss of generality that $n\in\mathbb{N}$ is even and propose the following
procedure:

\vspace{0.5em}\noindent\emph{Step 1 (``initialization'').}
We first construct $n/2$ copies $(\rho_{2})^{\otimes n/2}$ of the mixed state
$\rho_2$ between pairs of nearest neighbors by replacing the product states
separately. This is easily done by means of the quantum channel
$\mathfrak{N}:X\otimes Y\mapsto \mathrm{Tr}(X\otimes Y)\rho_{2}$. In the
Choi-Jamio{\l}kowski picture, this process can be depicted as follows:
\begin{equation*}
    \begin{tikzpicture}[baseline={(0,0.45cm-0.5ex)}, scale=0.5]
        \foreach \i in {0,...,8} {
            \ifnum \i=6 {}
            \else {
                \draw [myMiddleArrow,xshift = \i cm] (-0.25,2) -- (-0.25,0.95);
                \draw [myMiddleArrow,xshift = \i cm] (0.25,0.95) -- (0.25,2);
                \draw [rounded corners=0.0625cm, xshift = \i cm] (-0.25,0.1) -- (-0.25,-0.3) -- (0.25,-0.3) -- (0.25,0.1);
                \draw [myMiddleArrow,xshift = \i cm] (-0.15,-0.3)--(0.15,-0.3);
            }
            \fi
        }
        \foreach \i in {0,1,2,3.5} {
            \node [myGateStyle] at (0.5+2*\i,0.5) {$\scriptstyle\mathfrak{N}$};
        }
        \node at (6.1,0.5) {$\cdots$};
        \node at (6.1,1.375) {$\cdots$};
    \end{tikzpicture}
    =
    \begin{tikzpicture}[baseline={(0,0.45cm-0.5ex)}, scale=0.5]
        \foreach \i in {0,...,8} {
            \ifnum \i=6 {}
            \else {
                \draw [myMiddleArrow,xshift = \i cm] (-0.25,2) -- (-0.25,0.99);
                \draw [myMiddleArrow,xshift = \i cm] (0.25,0.99) -- (0.25,2);
            }
            \fi
        }
        \foreach \i in {0,1,2,3.5} {
            \node [myGateStyle] at (0.5+2*\i,0.5) {$\scriptstyle{\rho_{2}\vphantom{A^a}}$};
        }
        \node at (6.1,0.5) {$\cdots$};
        \node at (6.1,1.375) {$\cdots$};
    \end{tikzpicture}
\end{equation*}
When the system size is an odd natural number simply replace three of them with
the mixed state $\rho_3$, for example.

\vspace{0.5em}\noindent\emph{Step 2 (``gluing'').} Now, we ``glue'' together all
these copies of $\rho_{2} $ in order to obtain the target mixed state $\rho_{n}$.
This is done inductively by means of the following quantum channel, called from
now on \emph{gluing} map:
\begin{equation}
\mathfrak{G}: X \otimes Y \mapsto
    \tfrac{1}{2^{2}}(\mathrm{Tr}(X\otimes Y)\mathbf{1}\otimes\mathbf{1}+
    \mathrm{Tr}(X \sigma_z\otimes Y\sigma_z)\sigma_z\otimes \sigma_z)
\end{equation}
for all $X,Y\in\mathfrak{L}(\mathbb{C}^2)$. It is easy to check that it is a
quantum channel and that
\begin{equation*}
\mathrm{Id}\otimes\mathfrak{G}\otimes\mathrm{Id}(\tfrac{1}{2^{2}}(\mathbf{1}^{\otimes 2}+ \sigma_z^{\otimes 2})\otimes \tfrac{1}{2^{2}}(\mathbf{1}^{\otimes 2} + \sigma_z^{\otimes 2})) = \tfrac{1}{2^{4}}(\mathbf{1}^{\otimes 4} + \sigma_z^{\otimes 4}). 
\end{equation*}
By induction, it is clear that simultaneous applications of these quantum
channels lead to the mixed state $\rho_n$. Again, in the Choi-Jamio{\l}kowski
picture this procedure can be depicted as follows:
\begin{equation*}
    \begin{tikzpicture}[baseline={(0,0.75cm-0.5ex)}, scale=0.5]
        \foreach \i in {0,...,8} {
            \ifnum \i=6 {}
            \else {
                \draw [xshift = \i cm] (-0.25,2.5) -- (-0.25,0.1);
                \draw [xshift = \i cm] (0.25,0.1) -- (0.25,2.5);
                \draw [myMiddleArrow,xshift = \i cm] (-0.25,2.5) -- (-0.25,3.5);
                \draw [myMiddleArrow,xshift = \i cm] (0.25,3.5) -- (0.25,2.5);
            } \fi
        }
        \foreach \i in {0,1,2,3.5} {
            \node [myGateStyle] at (0.5+2*\i,0.5) {$\scriptstyle{\rho_{2}\vphantom{A^a}}$};
        }
        \foreach \i in {0,1} {
            \node [myGateStyle] at (1.5+2*\i,1.875) {$\scriptstyle\mathfrak{G}$};
        }
        \foreach \i in {2,2.5} {
            \node [myGateStyle] at (1.5+2*\i,1.875) {$\phantom{\scriptstyle\mathfrak{G}}$};
        }
        \filldraw [draw=none,fill=white] (5.5,1) rectangle ++ (1,1.5);
        \node at (6.1,0.5) {$\cdots$};
        \node at (6.1,1.875) {$\cdots$};
    \end{tikzpicture}
    =
    \begin{tikzpicture}[baseline={(0,0.75cm-0.5ex)}, scale=0.5]
        \foreach \i in {0,...,8} {
            \ifnum \i=6 {}
            \else {
                \draw [xshift = \i cm] (-0.25,2.5) -- (-0.25,0.1);
                \draw [xshift = \i cm] (0.25,0.1) -- (0.25,2.5);
                \draw [myMiddleArrow,xshift = \i cm] (-0.25,2.5) -- (-0.25,3.5);
                \draw [myMiddleArrow,xshift = \i cm] (0.25,3.5) -- (0.25,2.5);
            } \fi
        }
        \node [myGateStyle, minimum width=4.4cm] at (4,0.5) {$\scriptstyle{\rho_{n}\phantom{A^a}}$};
        \filldraw [draw=none,fill=white] (5.5,-0.1) rectangle ++ (1,1.5);
        \node at (6.1,0.5) {$\cdots$};
        \node at (6.1,1.875) {$\cdots$};
    \end{tikzpicture}
\end{equation*}
\end{example}

The previous construction can be generalized to arbitrary C*-HAs as
follows. In the first place, the role of the previous element is replaced by the
RFP MPDO associated to the canonical regular element. In addition, we introduce
a family of quantum channels that ``glue'' together two RFP MPDOs associated to
the canonical regular element $\Omega\in A$ into a larger one, associated to any
arbitrary positive non-zero element of $A$.

\begin{restatable}{lemma}{restatableglueHA}
\label{lemma:glueHA}
Let $A$ be a C*-HA and let $(V,\Phi)$ be a faithful $*$-re\-pre\-sen\-ta\-tion
of $A$. Then, for all positive non-zero elements $x\in A$ there exists a quantum channel
$\mathfrak{G}_x\in\mathfrak{L}(\mathfrak{L}(V\otimes V))$,
called ``gluing'' map, such that
\begin{equation}
\label{eq:glueHA}
    (\mathrm{Id}^{\otimes m-1}\otimes \mathfrak{G}_x \otimes\mathrm{Id}^{\otimes n-1})
    (\rho(\Omega,m)\otimes\rho(\Omega,n))
    =
    \rho(x,m+n)
\end{equation}
for all $m,n\in\mathbb{N}$.
\end{restatable}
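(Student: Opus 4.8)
The plan is to imitate the construction of the coarse-graining channel $\mathcal{T}$ from Proposition~\ref{prop:rfp} and to generalize the explicit map of Example~\ref{example:circuitZ2}. I begin with the purely algebraic reduction: by coassociativity of $\Delta$ one has $\Delta^{(M+N-1)}=(\Delta^{(M-1)}\otimes\Delta^{(N-1)})\circ\Delta$, so that, writing $\Delta(x)=x_{(1)}\otimes x_{(2)}$,
\begin{equation*}
\rho_{M+N}(x)=\tfrac1{\omega(x)}\,\phi^{\otimes M}\bigl(c_\omega^{\otimes M}\Delta^{(M-1)}(x_{(1)})\bigr)\otimes\phi^{\otimes N}\bigl(c_\omega^{\otimes N}\Delta^{(N-1)}(x_{(2)})\bigr),
\end{equation*}
whereas $\rho_M(\Omega)\otimes\rho_N(\Omega)=\omega(\Omega)^{-2}\,\phi^{\otimes M}(c_\omega^{\otimes M}\Delta^{(M-1)}(\Omega))\otimes\phi^{\otimes N}(c_\omega^{\otimes N}\Delta^{(N-1)}(\Omega))$. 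In the tensor-network language of Section~\ref{sec:wha} the input is a pair of disjoint closed MPO loops, of lengths $M$ and $N$, each carrying the boundary tensor $b(\Omega)$, while the target is a single closed loop over all $M+N$ sites carrying $b(x)$; since the bulk tensor (the MPO tensor of Equation~\ref{eq:repphii} weighted by $\phi(c_\omega)$, cf.\ Proposition~\ref{prop:mpdos}) is common to all three operators, only the two sites at the junction need to be touched.

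Next I construct $\mathcal{G}_x$ explicitly. It will be a completely positive trace-preserving map of the form $\mathcal{G}_x(Z)=\sum_\alpha \mathrm{tr}(Z\,A_\alpha)\,B_\alpha$, where the $A_\alpha\in\mathrm{End}(V\otimes V)$ implement, on the two junction legs, a spectral-type decomposition assembled from the self-adjoint operator $\phi^{\otimes 2}(c_\omega^{\otimes 2}\Delta(\Omega))$, and the $B_\alpha\in\mathrm{End}(V\otimes V)$ are the matched pieces of $\phi^{\otimes 2}(c_\omega^{\otimes 2}\Delta(x))$, carrying on one site the scalar $\mathcal{D}^2$ that comes from $\xi=\mathcal{D}^2 1$ in the Hopf case (Lemma~\ref{lemma:xi}); for a general biconnected C*-weak Hopf algebra this would be a genuine two-site operator $\phi(\xi)$, and it is its scalarity here that keeps $\mathcal{G}_x$ of range two. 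Complete positivity and trace preservation I would establish directly, by rewriting $\mathcal{G}_x$ through honest Kraus operators built from $\phi$ applied to $c_\omega$, $x$, $\xi$ and $\Omega$; this is legitimate precisely because $c_\omega$, $x$ and $\xi$ are positive (Remark~\ref{remark:omegaext}, Lemma~\ref{lemma:xi}) and because $\Omega$ is a positive idempotent with $\Omega^*=\Omega$. Trace preservation then reduces, after fixing the overall scalar $\omega(\Omega)^2/\omega(x)$, to the idempotence and faithfulness of $\omega$ recorded in Lemma~\ref{lemma:omega}, which is what makes the decomposition a genuine resolution of the identity on $V\otimes V$; the same facts guarantee that the unit traces of $\rho_M(\Omega)$, $\rho_N(\Omega)$ and $\rho_{M+N}(x)$ are consistent, since $(\omega^{\otimes K})\circ\Delta^{(K-1)}=\omega$.

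I would then verify Equation~\ref{eq:glueHA} by a single finite graphical computation, on the model of the chain of equalities following Proposition~\ref{prop:rfp}. Applying $\mathrm{id}^{\otimes M-1}\otimes\mathcal{G}_x\otimes\mathrm{id}^{\otimes N-1}$ to $\rho_M(\Omega)\otimes\rho_N(\Omega)$: the central weights $\phi(c_\omega)$ may be slid along the physical line (centrality of $c_\omega$); the measurement/re-preparation at the junction, via the pulling-through identity of Theorem~\ref{thm:Om} (Equations~\ref{eq:pt}--\ref{eq:pt2}, graphically Equation~\ref{eq:TN-pt}) together with cocentrality of $\Omega$ (Equation~\ref{eq:Omegacoce}), merges the two $\Omega$-caps and re-sews the four free virtual ends into one loop; the defining property of $\xi$ (Equation~\ref{eq:omegaOnOmega}, graphically Equation~\ref{eq:TN-xi}) absorbs the would-be superfluous $\Omega$-trace into the trivial boundary $b(1)$; and multiplicativity of the coproduct (Equation~\ref{eq:TN-multxy}) fuses $c_\omega x_{(1)}\otimes c_\omega x_{(2)}$ back into $c_\omega^{\otimes(M+N)}\Delta^{(M+N-1)}(x)$, i.e.\ into $\omega(x)\,\rho_{M+N}(x)$, which is the claim. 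Every step is a fixed local identity: $M$ and $N$ enter only through the length of the untouched bulk, so the single map $\mathcal{G}_x$ works for all $M,N$; specializing to $A=\mathbb{C}\mathbb{Z}_2$ one recovers the channel $\mathcal{G}$ of Example~\ref{example:circuitZ2}, with $\rho_1(\Omega)=\tfrac12(\mathbf{1}+\sigma_z)$ pure, which is exactly what makes the same $\mathcal{G}_x$ simultaneously correct at $M=N=1$ and for larger blocks.

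The main obstacle I expect is not the graphical identity but two points hidden inside it. First, one must show that the decomposition used to define $\mathcal{G}_x$ is truly compatible with $\rho_2(x)$ for \emph{every} positive non-zero $x$, i.e.\ that the $B_\alpha$ are well defined, positive, and trace-consistent; this is where the $\xi$-twist between the two sites and the integral properties of $\Omega$ and $\omega$ are indispensable, and also the decomposition $\xi=\xi_L\xi_R$ with $\xi_L\in A^L$, $\xi_R\in A^R$ of Lemma~\ref{lemma:xi}. Second, one must confirm that this one channel acts correctly uniformly in $M,N$, in particular for $M=N=1$ where there is no bulk to absorb discrepancies; this rests on the fact that $\rho_M(\Omega)$ is an \emph{exact} renormalization fixed point, so its reduced state on two adjacent sites, together with the way it is correlated with the rest, is independent of $M$. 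Both points are also the place where Hopf-ness is genuinely used --- $\Delta(1)=1\otimes1$ (so that $\rho_2$ is supported on all of $V\otimes V$ rather than on the proper subspace $V\boxtimes V$), $\xi=\mathcal{D}^2 1$ central, and the antipode involutive --- which is why this construction reaches only the trivial sector of a general biconnected C*-weak Hopf algebra.
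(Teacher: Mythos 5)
Your overall strategy is the paper's: only the two junction sites are touched, the channel is of measure-and-prepare type there, and the verification of Equation~\ref{eq:glueHA} runs exactly through the toolbox you list (centrality of $c_\omega$, the pulling-through identity, cocentrality of $\Omega$, the identity $\omega(\Omega_{(1)})\Omega_{(2)}=\omega(\Omega)1$ from Lemma~\ref{lemma:xi}, and multiplicativity of $\Delta$). However, there is a genuine gap: the channel is never actually written down, and the one structural claim you do make about it points in the wrong direction. You propose measurement operators $A_\alpha$ ``assembled from $\phi^{\otimes2}(c_\omega^{\otimes2}\Delta(\Omega))$'' and prepared operators $B_\alpha$ from $\phi^{\otimes2}(c_\omega^{\otimes2}\Delta(x))$. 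The paper instead measures the two junction legs against $\phi(S(x_{(1)}))\otimes\phi(S(x_{(3)}))$ and prepares $\phi(c_\omega x_{(2)})$ on a single site (then applies $\mathcal{T}$ to split it back into two); concretely, $\mathcal{G}_x=\mathcal{T}\circ\mathcal{G}$ with
\begin{equation*}
\mathcal{G}(X\otimes Y)=\tfrac{1}{\omega(x)}\,\mathrm{tr}\bigl(\phi(S(x_{(1)}))X\bigr)\,\phi(c_\omega x_{(2)})\,\mathrm{tr}\bigl(\phi(S(x_{(3)}))Y\bigr).
\end{equation*}
The antipode on the outer coproduct legs of $x$ --- absent from your description of the Kraus operators --- is precisely what makes the computation go through: it is the hook for the pulling-through identity ($T=S$ in the Hopf case), which converts $\omega(S(x_{(1)})\Omega_{(2)})$ into a factor $\omega(\Omega_{(2)})$ while pushing $x_{(1)}$ onto the adjacent $\Omega$-block, and it is what permits the regrouping $x=yy^*$, $S(y_{(1)}y_{(1')}^*)=S(y_{(1')})^*S(y_{(1)})$, into the Kraus form $\mathrm{tr}_{1,3}(Q(X\otimes\mathbf{1}\otimes Y)Q^\dagger)$ that proves complete positivity. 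Note in particular that the individual Sweedler components of $\Delta(\Omega)$ or $\Delta^{(2)}(x)$ are not positive operators (already for $\mathbb{CZ}_2$ one of them is $\sigma_z\otimes\sigma_z$), so a ``spectral-type decomposition'' reading of your $A_\alpha$, $B_\alpha$ does not yield manifest complete positivity, and without the antipode the gluing identity itself fails.

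A second, smaller inaccuracy: you attribute trace preservation to ``idempotence and faithfulness of $\omega$.'' What is actually needed is the resolution-of-identity statement $x_{(1)}\otimes\omega(x_{(2)})x_{(3)}=\omega(x)\,1\otimes1$ (Lemma~\ref{lemma:omegaMidHA} in the paper), whose proof uses the non-degeneracy of $\Omega$ (via the Radon--Nikodym derivative $\mu_x$ of Equation~\ref{eq:copnondeg}), cocentrality of $\Omega$, Equation~\ref{eq:omegaOnOmega}, and $\Delta(1)=1\otimes1$ --- not merely $\omega^2=\omega$. Your closing diagnosis of where Hopf-ness enters ($\Delta(1)=1\otimes1$, $\xi=\mathcal{D}^2 1$, $S$ involutive) is correct, and your remark that a single channel works uniformly in $M,N$ because the bulk tensors are untouched is also right; but as it stands the proposal is a plan whose central object is undefined and whose sketched definition would not support the pulling-through step.
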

See \cref{sec:proofs-classif-hopf} for a proof, but let us propose now an
explicit expression for the gluing map and check using graphical notation
that \cref{eq:glueHA} holds. To this end, fix any positive non-zero
element $x\in A$ and assume without loss of generality that $m = n = 2$. Define
the map $\mathfrak{G}_x\in\mathfrak{L}(\mathfrak{L}(V\otimes V))$ by the
expression
$$
    X\otimes Y \mapsto
    \tfrac{1}{\langle\omega,x\rangle}
    \begin{tikzpicture}[scale=0.5]
    \begin{scope}[shift={(-2.1,0.9)},rotate=-45] 
        \draw [rounded corners=0.125cm] (90:1) -- (90:2.5) --++ (-0.5,0) --++ (0,-2) --++ (0.5,0) -- (90:1);
        \draw [myMiddleArrow] (-0.5,1.5-0.15) --++ (0,-0.5);
        \node [fill=white, circle, inner sep=0.5pt] at (0,1.25) {$\scriptscriptstyle{X}$};
    \end{scope}
    \begin{scope}[shift={(2.1,0.9)},rotate=45] 
        \draw [rounded corners=0.125cm] (90:1) -- (90:2.5) --++ (0.5,0) --++ (0,-2) --++ (-0.5,0) -- (90:1);
        \draw [myMiddleArrow] (0.5,1.5-0.15) --++ (0,-0.5);
        \node [fill=white, circle, inner sep=0.5pt] at (0,1.25) {$\scriptscriptstyle{Y}$};
    \end{scope}
    \begin{scope}[shift={(0,3)}] 
        \draw [myWhiteLine] (-30:1) arc (-30:0:1);
        \draw [myWhiteLine] (180:1) arc (180:210:1);
        \draw [myMiddleArrow,red] (45:1) arc (45:90+45:1);
        \draw [myMiddleArrow,red] (135:1) arc (135:90+135:1);
        \draw [red] (225:1) arc (225:90+225:1);
        \draw [myMiddleArrow,red] (-45:1) arc (-45:90-45:1);
        \draw [myMiddleArrow] (45:1-0.75) -- (45:1); 
        \draw [myMiddleArrowOUT] (45:1) -- (45:1+0.75);
        \draw [myMiddleArrow] (135:1-0.75) -- (135:1); 
        \draw [myMiddleArrowOUT] (135:1) -- (135:1+0.75);
        \node [myDot] at (90+45:1) {};
        \node [myRedDot, rotate=45] at (90+45:1.3) {};
        \node [myDot] at (90-45:1) {};
        \node [myRedDot, rotate=45] at (90-45:1.3) {};
        \node [myRedDot,label={[label distance=-0.7mm]below :$\scriptscriptstyle{b(x)}$}] at (-90:1) {};
        \node [myWhiteDot] at (180+45:1) {};
        \node [myWhiteDot] at (-45:1) {};
    \end{scope}
    \end{tikzpicture}.
$$
for all $X,Y\in\mathfrak{L}(V)$.
To prove that \cref{eq:glueHA} holds, recall first that $\Phi(c_\omega)\in\mathfrak{L}(V)$ can be moved freely
along the physical vector spaces. By virtue of \cref{lemma:open}:
\begin{equation*}
\tfrac{1}{\langle\omega,x\rangle}\tfrac{1}{\langle\omega,\Omega\rangle^2}
\begin{tikzpicture}[baseline={([yshift=-.5ex]0,1)},scale=0.5]
\begin{scope}[shift={(-2.1,0.9)},rotate=-45] 
    \draw [rounded corners=0.125cm] (90:1) -- (90:2+0.5) --++ (-0.5,0) --++ (0,-2) --++ (0.5,0)  -- (90:1);
\end{scope}
\begin{scope}[shift={(2.1,0.9)},rotate=45] 
    \draw [rounded corners=0.125cm] (90:1) -- (90:2+0.5) --++ (0.5,0) --++ (0,-2) --++ (-0.5,0) -- (90:1);
\end{scope}
\begin{scope}[shift={(0,3)}] 
    \draw [myWhiteLine] (-30:1) arc (-30:0:1);
    \draw [myWhiteLine] (180:1) arc (180:210:1);
    \draw [myMiddleArrow,red] (45:1) arc (45:90+45:1);
    \draw [myMiddleArrow,red] (135:1) arc (135:90+135:1);
    \draw [red] (225:1) arc (225:90+225:1);
    \draw [myMiddleArrow,red] (-45:1) arc (-45:90-45:1);
    \draw [myMiddleArrow] (45:1-0.75) -- (45:1); 
    \draw [] (45:1) -- (45:1+0.3);
    \draw [myMiddleArrow] (45:1+0.3) -- (45:1+0.3+0.75);
    \draw [myMiddleArrow] (135:1-0.75) -- (135:1); 
    \draw [] (135:1) -- (135:1+0.3);
    \draw [myMiddleArrow] (135:1+0.3) -- (135:1+0.3+0.75);
    \node [myDot] at (90+45:1) {};
    \node [myRedDot, rotate=45] at (90+45:1.3) {};
    \node [myDot] at (90-45:1) {};
    \node [myRedDot, rotate=45] at (90-45:1.3) {};
    \node [myRedDot,label={[label distance=-0.7mm]below :$\scriptscriptstyle{b(x)}$}] at (-90:1) {};
\end{scope}
\begin{scope}[shift={(-2.1,0.9)},rotate=-45] 
    \draw [myWhiteLine] (100:1) arc (100:130:1);
    \draw [red,myMiddleArrow] (0:1) arc (0:90:1);
    \draw [red,myMiddleArrow] (90:1) arc (90:270:1);
    \draw [red,myMiddleArrow] (-90:1) arc (-90:0:1);
    \draw [] (-90:1)--(-90:1+0.3);
    \draw [myMiddleArrow] (-90:1+0.3)--(-90:1+0.3+0.75);
    \draw [myMiddleArrow] (-90:1-0.75)--(-90:1); 
    \node [myDot] at (90:1) {};
    \node [myRedDot, rotate=45] at (90:1+0.3) {};
    \node [myWhiteDot] at (90:2) {};
    \node [myRedDot, rotate=45] at (-90:1+0.3) {};
    \node [myRedDot, rotate=45,label={[label distance=-1.5mm]below :$\scriptscriptstyle{b(\Omega)}$}] at (0:1) {};
    \node [myDot] at (-90:1) {};
\end{scope}
\begin{scope}[shift={(2.1,0.9)},rotate=45] 
    \draw [myWhiteLine] (30:1) arc (30:75:1);
    \draw [red,myMiddleArrow] (-90:1) arc (-90:90:1);
    \draw [red,myMiddleArrow] (90:1) arc (90:180:1);
    \draw [red,myMiddleArrow] (180:1) arc (180:270:1);
    \draw [] (-90:1)--(-90:1+0.3);
    \draw [myMiddleArrow] (-90:1+0.3)--(-90:1+0.3+0.75);
    \draw [myMiddleArrow] (-90:1-0.75)--(-90:1); 
    \node [myDot] at (90:1) {};
    \node [myRedDot, rotate=45] at (90:1+0.3) {};
    \node [myWhiteDot] at (90:2) {};
    \node [myRedDot, rotate=45] at (-90:1+0.3) {};
    \node [myRedDot, rotate=-45, label={[label distance=-1.5mm]below :$\scriptscriptstyle{b(\Omega)}$}] at (180:1) {};
    \node [myDot] at (-90:1) {};
\end{scope}
\end{tikzpicture}
=
\tfrac{1}{\langle\omega,x\rangle}\tfrac{1}{\langle\omega,\Omega\rangle}
\begin{tikzpicture}[scale=0.5]
\begin{scope}[shift={(2.1,0.9)},rotate=45] 
    \draw [rounded corners=0.125cm] (90:1) -- (90:2.5) --++ (0.5,0) --++ (0,-2) --++ (-0.5,0) -- (90:1);
\end{scope}
\begin{scope}[shift={(0,3)}] 
    \draw [myWhiteLine] (-30:1) arc (-30:0:1);
    \draw [myMiddleArrow,red] (45:1) arc (45:90+45:1);
    \draw [myMiddleArrow,red] (135:1) arc (135:90+135:1);
    \draw [red] (225:1) arc (225:90+225:1);
    \draw [myMiddleArrow,red] (-45:1) arc (-45:90-45:1);
    \draw [myMiddleArrow] (45:1-0.75) -- (45:1); 
    \draw [] (45:1) -- (45:1+0.3);
    \draw [myMiddleArrow] (45:1.3) -- (45:1+0.3+0.75);
    \draw [myMiddleArrow] (135:1-0.75) -- (135:1); 
    \draw [] (135:1) -- (135:1+0.3); 
    \draw [myMiddleArrow] (135:1+0.3) -- (135:1+0.3+0.75); 
    \draw [myMiddleArrow] (225:1-0.75) -- (225:1); 
    \draw [] (225:1) -- (225:1+0.3); 
    \draw [myMiddleArrow] (225:1+0.3) -- (225:1+0.3+0.75); 
    \node [myDot] at (135:1) {};
    \node [myRedDot,rotate=45] at (135:1.3) {};
    \node [myDot] at (45:1) {};
    \node [myRedDot,rotate=45] at (45:1.3) {};
    \node [myDot] at (225:1) {};
    \node [myRedDot,rotate=45] at (225:1.3) {};
    \node [myRedDot,label={[label distance=-0.5mm]below :$\scriptscriptstyle{b(x)}$}] at (-90:1) {};
\end{scope}
\begin{scope}[shift={(2.1,0.9)},rotate=45] 
    \draw [myWhiteLine] (30:1) arc (30:75:1);
    \draw [red,myMiddleArrow] (-90:1) arc (-90:90:1);
    \draw [red,myMiddleArrow] (90:1) arc (90:180:1);
    \draw [red,myMiddleArrow] (180:1) arc (180:270:1);
    \draw [myMiddleArrowOUT] (-90:1)--(-90:1.75);
    \draw [myMiddleArrow] (-90:1-0.75)--(-90:1); 
    \node [myDot] at (90:1) {};
    \node [myRedDot,rotate=45] at (90:1+0.3) {};
    \node [myWhiteDot] at (90:2) {};
    \node [myRedDot,rotate=45] at (-90:1+0.3) {};
    \node [myRedDot,rotate=-45,label={[label distance=-1.5mm]below :$\scriptscriptstyle{b(\Omega)}$}] at (180:1) {};
    \node [myDot] at (-90:1) {};
\end{scope}
\end{tikzpicture}
=
\tfrac{1}{\langle\omega,x\rangle}
\begin{tikzpicture}[scale=0.5]
\begin{scope}
    \draw [myMiddleArrow,red] (45:1) arc (45:90+45:1);
    \draw [myMiddleArrow,red] (135:1) arc (135:90+135:1);
    \draw [red] (225:1) arc (225:90+225:1);
    \draw [myMiddleArrow,red] (-45:1) arc (-45:90-45:1);
    \draw [myMiddleArrow] (45:1-0.75) -- (45:1); 
    \draw [myMiddleArrowOUT] (45:1) -- (45:1+0.75);
    \draw [myMiddleArrow] (135:1-0.75) -- (135:1); 
    \draw [myMiddleArrowOUT] (135:1) -- (135:1+0.75);
    \draw [myMiddleArrow] (225:1-0.75) -- (225:1); 
    \draw [myMiddleArrowOUT] (225:1) -- (225:1+0.75);
    \draw [myMiddleArrow] (-45:1-0.75) -- (-45:1); 
    \draw [myMiddleArrowOUT] (-45:1) -- (-45:1+0.75);
    \node [myDot] at (135:1) {};
    \node [myRedDot,rotate=45] at (135:1.3) {};
    \node [myDot] at (45:1) {};
    \node [myRedDot,rotate=45] at (45:1.3) {};
    \node [myDot] at (-45:1) {};
    \node [myRedDot,rotate=45] at (-45:1.3) {};
    \node [myDot] at (225:1) {};
    \node [myRedDot,rotate=45] at (225:1.3) {};
    \node [myRedDot,label={[label distance=-0.5mm]below :$\scriptscriptstyle{b(x)}$}] at (-90:1) {};
\end{scope}
\end{tikzpicture},
\end{equation*}
since $\Phi(\xi) = \langle \omega,\Omega\rangle^{-1} 1$ by \cref{lemma:xi}.

Similar to the construction described for the boundary state of the
toric code, the existence of such a
quantum channel immediately induces a finite-depth circuit of quantum channels
manifesting the triviality of these states.

\begin{theorem}\label{thm:Hopf}
Let $A$ be a C*-HA and let $(V,\Phi)$ be a faith\-ful
$*$-re\-pre\-sen\-ta\-tion of $A$. Then, for all positive non-zero elements $x\in A$ and
all $n\in\mathbb{N}$ there exists a depth-two circuit
of bounded-range quantum channels that maps $\mathrm{Tr}(\mathbf{1})^{-n} \mathbf{1}^{\otimes n}$
into $\rho(x,n)$.
That is, the sequence $(\rho(x,n))_{n=1}^\infty$ is in the
trivial phase.
\end{theorem}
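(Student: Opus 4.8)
The plan is to generalise the two-step procedure of Example~\ref{example:circuitZ2}: a first ``initialization'' layer preparing many small RFP MPDOs from the maximally mixed state, followed by a single ``gluing'' layer that fuses them into $\rho_N(x)$ by means of the channels of Lemma~\ref{lemma:glueHA}. I will assume $N\ge 4$; for $N\le 3$ the single preparation channel $Z\mapsto\mathrm{tr}(Z)\,\rho_N(x)$ already does the job in depth one.

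\emph{Initialization layer.} Split the chain into consecutive blocks --- $N/2$ blocks of two sites when $N$ is even, and one block of three sites followed by $(N-3)/2$ blocks of two sites when $N$ is odd. On each two-site (resp.\ three-site) block apply the channel $\mathcal{N}: Z\mapsto\mathrm{tr}(Z)\,\rho_2(\Omega)$ (resp.\ $Z\mapsto\mathrm{tr}(Z)\,\rho_3(\Omega)$). Each such map is completely positive, and it is trace preserving because $\mathrm{tr}\,\rho_k(\Omega)=1$: using Remark~\ref{remark:omegaext} and the centrality of $c_\omega$ one gets $\mathrm{tr}\,\rho_k(y)=\tfrac{1}{\omega(y)}\,\big(\omega^{\otimes k}\circ\Delta^{(k-1)}\big)(y)=\tfrac{1}{\omega(y)}\,\omega(y)=1$ for every positive non-zero $y$, the idempotence of $\omega$ (Lemma~\ref{lemma:omega}) collapsing $\omega^{\otimes k}\circ\Delta^{(k-1)}$ to $\omega$. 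Since the blocks are disjoint this is a single layer of channels of range at most $3$, and it carries $\mathrm{tr}(\mathbf 1)^{-N}\mathbf 1^{\otimes N}$ to $\rho_2(\Omega)^{\otimes N/2}$ in the even case, and to $\rho_3(\Omega)\otimes\rho_2(\Omega)^{\otimes(N-3)/2}$ in the odd case.

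\emph{Gluing layer.} The seams between consecutive blocks are pairwise disjoint two-site windows. On the \emph{last} seam I place $\mathcal{G}_x$ from Lemma~\ref{lemma:glueHA}, and on each of the remaining ones I place $\mathcal{G}_\Omega$; being disjoint, these form a single layer. To see that the output is $\rho_N(x)$, note that channels with disjoint supports commute, so the simultaneous application equals the left-to-right composition: the successive $\mathcal{G}_\Omega$'s are exactly the instances of Lemma~\ref{lemma:glueHA} with fusing element $\Omega$ and with the right-hand block of size two, so they fuse the blocks one at a time, producing $\rho_4(\Omega),\rho_6(\Omega),\dots$ and leaving, in the even case, $\rho_{N-2}(\Omega)$ on the first $N-2$ sites alongside one untouched $\rho_2(\Omega)$; the final $\mathcal{G}_x$ then fuses $\rho_{N-2}(\Omega)\otimes\rho_2(\Omega)$ into $\rho_N(x)$, once more by Lemma~\ref{lemma:glueHA}. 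The odd case is identical, the first fusion now involving the block of size three. The crucial point is that every intermediate operator is of the form $\rho_k(\Omega)$, hence a legitimate input for Lemma~\ref{lemma:glueHA}, and the arbitrary element $x$ enters only at the last fusion, so no strengthening of Lemma~\ref{lemma:glueHA} is required.

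Concatenating the two layers gives a depth-two circuit of bounded-range quantum channels sending $\mathrm{tr}(\mathbf 1)^{-N}\mathbf 1^{\otimes N}$ to $\rho_N(x)$, for every positive non-zero $x\in A$ and every $N\in\mathbb N$, which is the assertion. The only substantive ingredient is Lemma~\ref{lemma:glueHA} itself (Appendix~\ref{sec:proofs-classif-hopf}), and I expect the real obstacle to lie there rather than in the circuit assembly above: one must verify, in the graphical calculus, that the fusion channel realises exactly $\rho_M(\Omega)\otimes\rho_N(\Omega)\mapsto\rho_{M+N}(x)$ with the correct normalisation $1/\omega(x)$ --- concretely, that the $\phi(\xi)$-weight built into $\mathcal{G}_x$ (which by Lemma~\ref{lemma:xi} degenerates to the scalar $\mathcal{D}^2$ in the C*-Hopf case) together with the pulling-through identity absorbs the two auxiliary $b(\Omega)$-caps of the inputs into the trivial cap $b(1)$, so that a single $b(x)$ survives in the fused ring MPO.
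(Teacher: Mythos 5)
Your proposal is correct and follows essentially the same route as the paper's proof: an initialization layer preparing $\rho_2(\Omega)^{\otimes N/2}$ (with a $\rho_3(\Omega)$ patch for odd $N$), followed by a single layer of disjoint gluing channels, all equal to $\mathcal{G}_\Omega$ except the last, which is $\mathcal{G}_x$, invoking Lemma~\ref{lemma:glueHA}. You spell out the commutation-of-disjoint-channels and sequential-fusion bookkeeping slightly more explicitly than the paper does, but the argument is the same and the substantive content indeed resides in Lemma~\ref{lemma:glueHA}, as you note.
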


\begin{proof}
Assume without loss of generality that $n\in\mathbb{N}$ is even.
The circuit consists of two layers, as presented above in \cref{example:circuitZ2}.
In the first layer, we replace the maximally mixed state
$\mathrm{Tr}(\mathbf{1})^{-n}  \mathbf{1}^{\otimes n}$ with the sequence of $n/2$ tensor products $\rho(\Omega,2)\otimes\cdots\otimes\rho(\Omega,2)$ as previously done.
Now, by virtue of \cref{lemma:glueHA}, let
$\mathrm{Id}\otimes \mathfrak{G}_\Omega\otimes\cdots\otimes \mathfrak{G}_\Omega\otimes \mathfrak{G}_x\otimes\mathrm{Id}$ be the second layer of quantum channels, where all
subindices are $\Omega\in A$ except for one, which is $x\in A$.
This second layer of channels then glues together all local MPDOs into the
single MPDO $\rho(x,n)$.
\end{proof}

For general RFP MPDOs constructed from biconnected C*-WHAs a
straightforward generalization of the previous procedure is
not possible anymore, since the comultiplication is no longer
unit-preserving.

\begin{restatable}{remark}{restatablenogogluing}
\label{rem:nogluing}
There are no trace-preserving gluing maps for general biconnected C*-WHAs
such that \cref{eq:glueHA} holds for all elements $x\in A$.
\end{restatable}

See \cref{sec:proofs-classif-weakhopf} for a proof. Unfortunately, the
description of the phases in this general case is still an open problem.
Nevertheless, some evidence indicate the existence of non-trivial phases,
as we conjecture here.

\begin{conjecture}
RFP MPDOs arising from the Lee-Yang C*-WHA of \cref{ex:Fib1} do not belong to the trivial phase.
\end{conjecture}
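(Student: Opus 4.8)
The plan is as follows; since this is a conjecture, what I describe is a line of attack rather than a complete argument. To show that the RFP MPDOs $\rho_{N}(x)$ built from the Lee--Yang algebra $A_{\mathrm{LY}}$ are \emph{not} in the trivial phase, one must exhibit an obstruction: a quantity $\mathcal{I}$ attached to an RFP MPDO that (i) is computable directly from the underlying C*-weak Hopf algebra via the zero-correlation-length structure of Section~\ref{sec:rfp}, (ii) takes its trivial value on every product state, (iii) can only belong to a restricted ``group-theoretical'' range after \emph{any} finite-depth, bounded-range circuit of quantum channels, and (iv) is pushed outside that range by the Lee--Yang data. The most natural candidate for $\mathcal{I}$ is read off from the Frobenius--Perron dimensions: either the total quantum dimension $\mathcal{D}^{2}=d_{1}^{2}+\cdots+d_{r}^{2}$, appearing as a universal sub-leading constant in a (suitably symmetrised) Rényi mutual information or topological negativity of the MPDO, or --- since $\mathcal{D}^{2}$ alone is not a circuit invariant, the toric-code boundary state of Theorem~\ref{thm:Hopf} already having $\mathcal{D}^{2}=2\neq1$ --- the finer, categorical datum of whether the MPO algebra is \emph{integral}, i.e.\ all $d_{\alpha}\in\mathbb{Z}$ (more ambitiously, whether it is weakly group-theoretical). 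For $A_{\mathrm{LY}}$ one has $d_{1}=1$, $d_{\tau}=\zeta^{-2}=(1+\sqrt5)/2$ and hence $\mathcal{D}^{2}=(5+\sqrt5)/2\notin\mathbb{Q}$, whereas every $*$-representation category of a C*-Hopf algebra is integral, consistently with the triviality established in Theorem~\ref{thm:Hopf}.

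Concretely I would proceed in three steps. First, fix $\mathcal{I}$ operationally as such a sub-leading constant and verify, using the explicit fixed-point computations of Section~\ref{sec:rfp}, that it records $\log\mathcal{D}$ for $\rho_{N}(\Omega)$ and reduces to an integer for all C*-Hopf algebras and to $0$ for product states. Second --- the heart of the matter --- establish a data-processing statement: if $\rho_{N}(x)$ is reachable from $\mathrm{tr}(\mathbf{1})^{-N}\mathbf{1}^{\otimes N}$ by a finite-depth bounded-range channel circuit, then its $\mathcal{I}$ (equivalently, the associated fusion ring) must be of group-theoretical type, in particular $\mathcal{D}^{2}\in\mathbb{Z}$. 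Here Remark~\ref{rem:nogluing} is the available partial evidence: the direct analogue of the gluing map of Lemma~\ref{lemma:glueHA} fails because $\Delta(1)\neq1\otimes1$, so any trivialising circuit for the Lee--Yang MPDO would have to be of a genuinely different shape, and one would have to close the gap by controlling \emph{all} such circuits. Third, conclude: since $(5+\sqrt5)/2$ is irrational, $\rho_{N}(x)$ falls outside the reachable set, which is the claim.

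The hard part is unquestionably the second step. In contrast to the closed-system setting, where the impossibility of building global correlations with a shallow unitary circuit is elementary, shallow \emph{channel} circuits really can create long-range order --- Theorem~\ref{thm:Hopf} does exactly that --- so $\mathcal{I}$ has to be strictly finer than ``presence of long-range correlations'', and the familiar pure-state SPT invariants (second-cohomology classes) have no literal mixed-state analogue to lean on. Worse, by the bulk--boundary correspondence these RFP MPDOs are boundary states of string-net PEPS, so a proof of the monotonicity step would in effect have to contain the widely believed but, in general, still open statement that Fibonacci --- and, more broadly, non-weakly-group-theoretical --- topological order cannot be assembled from a product state by a finite-depth circuit of local channels. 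I would therefore expect a successful attack to be genuinely one-dimensional, exploiting the weak-Hopf-algebra/MPDO structure directly --- for instance, tracking how the fusion ring of the boundary algebra can change under coarse-graining by channels --- rather than reducing to the 2D preparability problem, which is itself unresolved.
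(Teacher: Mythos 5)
The statement you are addressing is presented in the paper as a conjecture: the authors do not prove it, and the only evidence they offer is Remark~\ref{rem:nogluing}, which shows that the specific gluing construction of Lemma~\ref{lemma:glueHA} cannot be carried over to general biconnected C*-weak Hopf algebras because $\Delta(1)\neq 1\otimes 1$ (concretely, tracing out two sites of $\rho_4(\Omega)$ does not yield a product state, while gluing two copies of $\rho_2(\Omega)$ would force it to). So there is no proof in the paper to compare yours against, and your proposal --- as you yourself acknowledge --- is a programme rather than a proof.

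Within that caveat, your diagnosis is accurate and your candidate obstruction is the natural one: the non-integrality of the Frobenius--Perron data ($d_\tau=(1+\sqrt5)/2$, $\mathcal{D}^2=(5+\sqrt5)/2$), refined to (weak) group-theoreticality of the fusion ring, since $\mathcal{D}^2$ alone cannot serve as the invariant (Theorem~\ref{thm:Hopf} trivializes states with $\mathcal{D}^2=2$). The genuine gap is exactly where you place it: your second step, a data-processing or monotonicity statement showing that \emph{no} finite-depth, bounded-range circuit of quantum channels --- not merely circuits built from gluing maps --- can produce from a product state an MPDO whose associated MPO algebra has non-integral dimensions. Remark~\ref{rem:nogluing} does not imply this; it only closes off one construction, and you correctly refrain from claiming otherwise. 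Absent such a monotonicity theorem (which, as you note, would essentially settle whether Fibonacci-type order is preparable by shallow local channels), the conjecture remains open and your proposal does not establish it; it is, however, a faithful account of what a proof would have to contain.
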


However, these obstructions can be circumvented if one restricts to the trivial
sector. The following result
establishes the existence of an special gluing map,
motivated by the characterization of simple RFP MPDO tensors in \cite{cirac_2017_mpdo}.
\begin{restatable}{lemma}{restatableGlueWHA}
\label{lemma:glueVacuum}
Let $A$ be a biconnected C*-WHA and let $(V,\Phi)$ be a faithful
$*$-re\-pre\-sen\-ta\-tion of $A$. There is a quantum channel
$\mathfrak{G}_1\in\mathfrak{L}(\mathfrak{L}(V\otimes V))$,
called ``gluing'' map, such that
\begin{equation}
    \label{eq:glueWHA}
    (\mathrm{Id}^{\otimes m-1}\otimes \mathfrak{G}_1 \otimes\mathrm{Id}^{\otimes n-1})
    (\rho(1,m)\otimes\rho(1,n)) = \rho(1,m+n)
\end{equation}
for all $m,n\in\mathbb{N}$.
\end{restatable}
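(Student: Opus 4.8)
The plan is to mimic the construction of $\mathcal{G}_x$ carried out after Lemma~\ref{lemma:glueHA}, but to compensate for the failure of $\Delta(1)=1\otimes 1$ by letting the gluing channel act nontrivially on the virtual space too. Since verifying Equation~\ref{eq:glueWHA} only involves the MPO tensors near the two glued sites and the bulk tensor is site-independent, it is enough to exhibit a channel $\mathcal{G}_1$ on $\mathrm{End}(V\otimes V)$ and check the case $M=N=2$; the general case follows from the same local manipulation at the junction. Conceptually, $\rho_N(1)$ is the trivial-sector (``simple'') RFP MPDO in the sense of \cite{MPDORFP}, and what we need is a concrete realization of the canonical gluing map for simple RFP tensors.

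Concretely, I would take for $\mathcal{G}_1$ the tensor network obtained from the one defining $\mathcal{G}_x$ by replacing the inserted boundary element $b(x)$ with $b(1)$ and absorbing the correction $\phi(\xi)$ of Lemma~\ref{lemma:xi} at the junction via its factorization $\xi=\xi_L\xi_R$, placing $\phi(\xi_L)$ and $\phi(\xi_R)$ on the two sides; explicitly, $\mathcal{G}_1$ feeds the two inputs $X,Y\in\mathrm{End}(V)$ into two copies of the RFP tensor (the MPO tensor of Equation~\ref{eq:repphii} weighted on the physical leg by $\phi(c_\omega)$), joins their two open virtual legs, and divides by $\omega(1)$. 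Complete positivity of $\mathcal{G}_1$ is immediate: it is a composition of the $*$-homomorphism $\phi$ on a larger space, partial traces, and conjugations by the positive operators $\phi(c_\omega)$, $\phi(\xi_L)$, $\phi(\xi_R)$. Trace preservation is the delicate point, and is where the restriction to $x=1$ is essential. To check $\mathrm{tr}(\mathcal{G}_1(X\otimes Y))=\mathrm{tr}(X\otimes Y)$ one traces out every physical leg: by Remark~\ref{remark:omegaext} each such partial trace replaces the weight $\phi(c_\omega)$ by the functional $\omega$, and the resulting scalar collapses to $\omega(1)$ using the idempotence $(\omega\otimes\omega)\circ\Delta=\omega$ of Lemma~\ref{lemma:omega} and the defining relation of $\xi$ in Equation~\ref{eq:TN-xi}; the two factors $1/\omega(1)$ then cancel. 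The identity making this close up is the weak comultiplicativity of the unit, $1_{(1)}\otimes 1_{(2)}1_{(1')}\otimes 1_{(2')}=1_{(1)}\otimes 1_{(2)}\otimes 1_{(3)}$ of Equation~\ref{eq:defCWHA1}, i.e. exactly the statement that two copies of $\Delta^{(\cdot)}(1)$ glue into one; replacing $1$ by an arbitrary positive $x$ would require $\Delta$ to be unit-preserving, which is precisely the obstruction recorded in Remark~\ref{rem:nogluing}.

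It then remains to prove Equation~\ref{eq:glueWHA} itself, which I would do by graphical manipulation following the computation displayed after Lemma~\ref{lemma:glueHA}. Unfolding $\rho_2(1)\otimes\rho_2(1)$ as two $b(1)$-capped MPO chains and applying $\mathcal{G}_1$ at the junction produces two further RFP tensors joined through the inserted element; then one (i) uses the pulling-through identities of Theorem~\ref{thm:Om} in the graphical forms of Equation~\ref{eq:TN-pt}, together with Equation~\ref{eq:TN-xi}, to absorb the $\phi(\xi_L),\phi(\xi_R)$ corrections and reduce the junction to a plain virtual contraction; (ii) uses multiplicativity of the coproduct (Equation~\ref{eq:TN-multxy}) with the unit identity of Equation~\ref{eq:defCWHA1} to merge $\Delta^{(M-1)}(1)\otimes\Delta^{(N-1)}(1)$ with the two inserted copies of $\Delta(1)$ into $\Delta^{(M+N-1)}(1)$; and (iii) uses centrality of $c_\omega$ to spread the weights $\phi(c_\omega)$ uniformly along the resulting chain. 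The outcome is exactly the tensor network for $\rho_{M+N}(1)$, and tracking the scalars yields the normalization $1/\omega(1)$ as required.

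I expect the main obstacle to be pinning down the precise placement of $\xi_L,\xi_R$ and of the boundary elements so that $\mathcal{G}_1$ is simultaneously exactly trace-preserving and reproduces $\rho_{M+N}(1)$ on the nose; this is genuinely rigid because, by Remark~\ref{rem:nogluing}, the extra freedom available in the Hopf case is absent, so every step must exploit the specific idempotent structure of $\Delta(1)$ and of $\omega$, which only cooperates for the trivial element $1\in A$. By contrast, the reduction to $M=N=2$ and the complete positivity of $\mathcal{G}_1$ are routine.
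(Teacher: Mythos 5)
Your overall blueprint matches the paper's: the gluing channel is the $\mathcal{G}_x$ construction with $x=1$ and with the correction $\xi=\xi_L\xi_R$ split between the two sides of the junction, and the verification of Equation~\ref{eq:glueWHA} does hinge on the weak comultiplicativity of the unit merging $\Delta^{(M-1)}(1)\otimes\Delta^{(N-1)}(1)$ into $\Delta^{(M+N-1)}(1)$. However, two of the three things you must verify are not handled by the ingredients you cite. First, the pulling-through identity you invoke (Theorem~\ref{thm:Om}, Equation~\ref{eq:TN-pt}) is a statement about $\Omega$, whereas the inputs here are $\rho_M(1)$, built on coproducts of the unit, not of $\Omega$; so it simply does not apply at the junction. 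What is actually needed is a separate trivial-sector pulling-through identity,
\begin{equation*}
x_L S(1_{(1)})\otimes 1_{(2)}\otimes S(1_{(3)})\,y_R \;=\; S(1_{(1)})\otimes y_R\,1_{(2)}\,x_L\otimes S(1_{(3)}),
\qquad x_L\in A^L,\; y_R\in A^R,
\end{equation*}
which is proved from the commutation relations of $A^L$ and $A^R$ with the legs of $\Delta^{(2)}(1)$ rather than from Equation~\ref{eq:pt}; it works precisely because $1_{(1)}\otimes 1_{(2)}\in A^R\otimes A^L$, and this is the engine that moves the contents of $X$ and $Y$ onto the output leg. Relatedly, your trace-preservation sketch leans on the scalar identity $(\omega\otimes\omega)\circ\Delta=\omega$, but what is required is the operator identity $1_{(1)}\otimes\omega(1_{(2)})\,1_{(3)}=\mathcal{D}^2\,\xi_R^{-1}\otimes\xi_L^{-1}$, whose proof passes through the Haar functional $\hat h$ of $A^*$ (via $1_{(1)}\hat h(1_{(2)})\otimes 1_{(3)}=\varepsilon(1)^{-1}1\otimes 1$ and $\hat h(\Omega_{(1)})\Omega_{(2)}\propto 1$) and is not a consequence of idempotence of $\omega$ alone.

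The more serious gap is your claim that complete positivity is ``immediate.'' The map is not a composition of conjugations and partial traces: it has the form $X\otimes Y\mapsto\sum_i \mathrm{tr}(a_iX)\,b_i\,\mathrm{tr}(c_iY)$ with the sum running over the Sweedler legs of $\Delta^{(2)}(1)$, and the inserted $\xi_L$, $\xi_R$ sit between the antipoded legs and the inputs. To exhibit a Kraus form one writes $1=1\cdot 1^*$, uses multiplicativity of $\Delta$, antimultiplicativity of $S$ and $S(x^*)=S^{-1}(x)^*$, and then must commute $\xi_L$ and $\xi_R$ past the starred antipodes via
\begin{equation*}
\xi_R\,S(x_L^*)=S(x_L)^*\,\xi_R
\quad\text{and}\quad
S(y_R)\,\xi_L=\xi_L\,S(y_R^*)^*,
\end{equation*}
identities that follow from $T(\xi)=\xi$ and $T(x^*)=\xi^{-1}T(x)^*\xi$ in Lemma~\ref{lemma:xi} and which fail for a generic positive insertion. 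Without them the network does not factor into a layer and its adjoint, and positivity of $\xi_L$, $\xi_R$, $c_\omega$ by itself proves nothing. This is exactly the point where the weak Hopf case is harder than the Hopf case, so it cannot be waved through as routine.
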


A proof is given in \cref{sec:proofs-classif-weakhopf}.
As an immediate corollary, similar to the case of C*-HAs, we obtain
the following result.
\begin{theorem}\label{thm:wHopf}
Let $A$ be a biconnected C*-WHA and let $(V,\Phi)$ be a faithful
$*$-re\-pre\-sen\-ta\-tion of $A$. Then, for all $n\in\mathbb{N}$ there exist
two depth-two circuits of bounded-range
quantum channels that map
$\mathrm{Tr}(\mathbf{1})^{-n} \mathbf{1}^{\otimes n}$
into $\rho(1,n)$ and $\rho(\mathrm{Tr}^1,n)$.
That is, the sequences $(\rho(1,n))_{n=1}^\infty$ and $(\rho(\mathrm{Tr}^1,n))_{n=1}^\infty$ are in the trivial
phase.
\end{theorem}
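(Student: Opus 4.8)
The plan is to mirror, for each of the two target families, the two-layer circuit from the proof of Theorem~\ref{thm:Hopf} displayed in Example~\ref{example:circuitZ2}, with the weak-Hopf vacuum gluing map of Lemma~\ref{lemma:glueVacuum} in place of the Hopf gluing map. Assume $N$ is even (for odd $N$ one of the blocks below is taken of length three). For the family $(\rho_N(1))_N$ the first layer is the \emph{initialization} layer, consisting of the range-two channel $\mathcal N\colon X\mapsto\mathrm{tr}(X)\,\rho_2(1)$ applied simultaneously on the disjoint pairs $(1,2),(3,4),\dots,(N-1,N)$. Since the maximally mixed state factorizes over these pairs and each two-site factor has unit trace, this layer outputs $\rho_2(1)^{\otimes N/2}$; here one uses that $\rho_2(1)$ is a genuine density operator, which follows from Proposition~\ref{prop:mpdos} together with $\mathrm{tr}\,\rho_N(x)=1$ — the latter obtained by iterating the idempotence $(\omega\otimes\omega)\circ\Delta=\omega$ of Lemma~\ref{lemma:omega} and using $\mathrm{tr}(\phi(c_\omega)\phi(y))=\omega(y)$ for $y\in A$. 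The second layer is the \emph{gluing} layer, made of copies of $\mathcal G_1$ on the $N/2-1$ seams $(2,3),(4,5),\dots,(N-2,N-1)$; these have pairwise disjoint supports, so the layer has depth one and range two, and the whole circuit has depth two and bounded range.

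To see that the gluing layer sends $\rho_2(1)^{\otimes N/2}$ to $\rho_N(1)$ one argues by induction on the number of blocks using \eqref{eq:glueWHA}. Writing $\rho_2(1)^{\otimes(m+1)}=\rho_2(1)^{\otimes m}\otimes\rho_2(1)$, the seam maps supported on the first $2m$ sites produce $\rho_{2m}(1)$ there by the inductive hypothesis — their supports being disjoint from the last seam $(2m,2m+1)$ — and then the last seam map yields $\mathrm{id}^{\otimes 2m-1}\otimes\mathcal G_1\otimes\mathrm{id}\,(\rho_{2m}(1)\otimes\rho_2(1))=\rho_{2m+2}(1)$, again by \eqref{eq:glueWHA}. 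Because all maps in a given layer have disjoint supports, simultaneous and sequential application coincide, so the circuit maps $\mathrm{tr}(\mathbf{1})^{-N}\mathbf{1}^{\otimes N}$ to $\rho_N(1)$. For odd $N$ one substitutes the range-three channel $X\mapsto\mathrm{tr}(X)\rho_3(1)$ for one block in the initialization layer and relabels the seams; the induction is otherwise unchanged.

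The family $(\rho_N(\hat\chi_1))_N$ is treated by the identical circuit once one has the analogue of Lemma~\ref{lemma:glueVacuum} with $\hat\chi_1$ in place of $1$, i.e.\ a channel $\mathcal G_{\hat\chi_1}$ on $\mathrm{End}(V\otimes V)$ with $\mathrm{id}^{\otimes M-1}\otimes\mathcal G_{\hat\chi_1}\otimes\mathrm{id}^{\otimes N-1}(\rho_M(\hat\chi_1)\otimes\rho_N(\hat\chi_1))=\rho_{M+N}(\hat\chi_1)$. This holds because, on the dual side, $\hat\chi_1$ plays the role that $1$ plays for $A$: the character of the trivial representation of $A^*$, read in $A^{**}=A$ through the self-duality relations \eqref{eq:WHAselfdual1}--\eqref{eq:WHAselfdual2}, is the unit of $A^*$, so that $\rho_N(\hat\chi_1)$ is again a \emph{simple} RFP MPDO in the sense of \cite{MPDORFP}, and the gluing map of Lemma~\ref{lemma:glueVacuum} is constructed for this simple sector by the same argument. (For a C*-Hopf algebra $\hat\chi_1=1$ and the two families coincide, recovering Theorem~\ref{thm:Hopf}.) The main obstacle — indeed the only genuinely non-routine part of the whole argument — is precisely the construction of these gluing channels and the verification that they are completely positive, trace preserving and of bounded range; this is where the weak-Hopf failure of $\Delta(1)=1\otimes 1$ must be absorbed, and it is the content of Lemma~\ref{lemma:glueVacuum}, proved in Appendix~\ref{sec:proofs-classif-weakhopf}. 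By contrast no such channel exists for a general positive $x$, which is the obstruction of Remark~\ref{rem:nogluing} and the reason one cannot hope for a single circuit of this shape reaching all $\rho_N(x)$.
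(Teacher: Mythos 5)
Your circuit for $(\rho_N(1))_{N}$ is correct and is exactly the paper's route: an initialization layer of channels $X\mapsto\mathrm{tr}(X)\rho_2(1)$ on disjoint pairs followed by a layer of vacuum gluing maps $\mathcal G_1$ on the seams, with the inductive use of Equation~\ref{eq:glueWHA} and the observation that disjointly supported channels can be applied simultaneously. That half of the theorem is fine.

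The gap is in the second family. You reduce $(\rho_N(\hat\chi_1))_N$ to an ``analogue of Lemma~\ref{lemma:glueVacuum} with $\hat\chi_1$ in place of $1$'' and justify it by asserting that $\hat\chi_1$, read through self-duality, \emph{is} the unit of $A^*$, so that ``the same argument'' applies. This identification is wrong: the unit of $A^*$ is the counit $\varepsilon\in A^*$, whereas $\hat\chi_1$ is the character of the trivial representation of $A^*$, an element of $A^{**}\cong A$ which for a proper C*-weak Hopf algebra is \emph{not} equal to $1$ (if it were, the theorem would not list the two families separately; they coincide only in the Hopf case, as you note). More importantly, the proof of Lemma~\ref{lemma:glueVacuum} is not sector-agnostic: it leans on structural facts specific to the coproduct of the unit, namely $\Delta^{(2)}(1)\in A^R\otimes A\otimes A^L$, the pulling-through identity \eqref{eq:pullthrvac} for elements of $A^L$ and $A^R$, and Lemmas~\ref{lemma:trvac2} and \eqref{eq:omega1omega}, none of which transfer verbatim to $\hat\chi_1$. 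So ``the same argument'' does not produce the channel $\mathcal G_{\hat\chi_1}$ you invoke; a separate construction (for instance, in the spirit of Theorem~\ref{thm:Hopf}, a channel gluing two copies of $\rho(1)$ into $\rho(\hat\chi_1)$ to be used at a single seam, together with a verification that it is completely positive and trace preserving) is required. To be fair, the paper itself states the theorem as an ``immediate corollary'' of Lemma~\ref{lemma:glueVacuum}, which as written only covers $\rho_N(1)$, so the $\hat\chi_1$ case is left implicit there as well; but your written justification for it, as it stands, rests on a false premise and should be replaced by an actual construction.
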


\section*{Acknowledgements}

This work has received support from the European Union’s Horizon 2020 program
through the ERC CoG SEQUAM (No. 863476) and the ERC CoG GAPS (No. 648913),
from the Spanish Ministry of Science and Innovation 
through the Agencia Estatal de Investigación MCIN/AEI/10.13039/501100011033
(PID2020-113523GB-I00 and grant BES-2017-081301 under the ``Severo Ochoa Programme
for Centres of Excellence in R\&D'' CEX2019-000904-S and ICMAT Severo Ochoa
project SEV-2015-0554), from CSIC Quantum Technologies Platform PTI-001, from
Comunidad Autónoma de Madrid through the grant QUITEMAD-CM (P2018/TCS-4342).


\appendix

\section{Graphical notation for tensor networks}
\label{sec:graphical}

In this appendix we introduce a slightly adapted version of the usual graphical notation from the
literature of tensor networks. From now on, let $V$ be a finite dimensional complex vector space.
First, a vector $v\in V$ is depicted by a shape (e.g. a circle) and a line sticking out of it,
associated to the vector space and labeled accordingly; we convey to draw the arrow outgoing from
the shape. Second, any element $f\in V^*$ is a vector from $V^*$ or a linear
functional on $V$, and we alternatively draw it with an arrow ingoing to the shape. Pictorially,
\begin{equation*}
    v = 
    \begin{tikzpicture}[baseline = {(0,-0.5ex)}]
        \node [circle, draw = white, fill = white, inner sep = 1pt, label = above:\color{black!50}$\scriptstyle{V\vphantom{f}}$] at (0.5,0) {};
        \draw [myMiddleArrow] (0,0) -- (1,0);
        \node [myDot, label = above:$\scriptstyle{v\vphantom{f}}$] at (0,0) {};
    \end{tikzpicture}
    \quad\text{ and }\quad
    f =
    \begin{tikzpicture}[baseline = {(0,-0.5ex)}]
        \node [circle, draw = white, fill = white, inner sep = 1pt, label = above:\color{black!50}$\scriptstyle{V^*\vphantom{f}}$] at (0.5,0) {};
        \draw [myMiddleArrow] (1,0) -- (0,0);
        \node [myDot, label = above:$\scriptstyle{f}$] at (1,0) {};
    \end{tikzpicture}
    =
    \begin{tikzpicture}[baseline = {(0,-0.5ex)}]
        \node [circle, draw = white, fill = white, inner sep = 1pt, label = above:\color{black!50}$\scriptstyle{V\vphantom{}}$] at (0.5,0) {};
        \draw [myMiddleArrow] (0,0) -- (1,0);
        \node [myDot, label = above:$\scriptstyle{f}$] at (1,0) {};
    \end{tikzpicture}.
\end{equation*}
Note that, although we have represented the previous elements using straight horizontal lines, no
reading order has been prescribed. Here, we identify each (finite dimensional complex) vector space
$V$ and its double dual $V^{**}$, which are canonically isomorphic. Therefore, one can regard any
vector $v\in V$ as a linear functional on $V^*$ and thus, exchange the orientation of any line by
labeling the dual:
\begin{equation*}
    \begin{tikzpicture}[baseline = {(0,-0.5ex)}]
        \node [circle, draw = white, fill = white, inner sep = 1pt, label = above:\color{black!50}$\scriptstyle{V\vphantom{f}}$] at (0.5,0) {};
        \draw [myMiddleArrow] (0,0) -- (1,0);
        \node [myDot, label = above:$\scriptstyle{v\vphantom{f}}$] at (0,0) {};
    \end{tikzpicture}
    =
    \begin{tikzpicture}[baseline = {(0,-0.5ex)}]
        \node [circle, draw = white, fill = white, inner sep = 1pt, label = above:\color{black!50}$\scriptstyle{V^{**}\vphantom{f}}$] at (0.5,0) {};
        \draw [myMiddleArrow] (0,0) -- (1,0);
        \node [myDot, label = above:$\scriptstyle{v\vphantom{f}}$] at (0,0) {};
    \end{tikzpicture}
    =
    \begin{tikzpicture}[baseline = {(0,-0.5ex)}]
        \node [circle, draw = white, fill = white, inner sep = 1pt, label = above:\color{black!50}$\scriptstyle{V^{*}\vphantom{f}}$] at (0.5,0) {};
        \draw [myMiddleArrow] (1,0) -- (0,0);
        \node [myDot, label = above:$\scriptstyle{v\vphantom{f}}$] at (0,0) {};
    \end{tikzpicture}.
\end{equation*}
Now, let $V$ and $W$ be two (finite dimensional complex) vector spaces. Any vector $v\in V\otimes W$
in the tensor product is depicted by a shape with two lines, e.g.:
\begin{equation*}
    \begin{tikzpicture}[baseline = {(0,-0.5ex)}]
        \node [circle, draw = white, fill = white, inner sep = 1pt, label = above:\color{black!50}$\scriptstyle{V\vphantom{f}}$] at (-0.5,0) {};
        \node [circle, draw = white, fill = white, inner sep = 1pt, label = above:\color{black!50}$\scriptstyle{W\vphantom{f}}$] at (0.5,0) {};
        \draw [myMiddleArrow] (-0.8,0) -- (0,0);
        \draw [myMiddleArrow] (0.8,0)  -- (0,0);
        \node [myDot, label = above:$\scriptstyle{v\vphantom{f}}$] at (0,0) {};
    \end{tikzpicture}.
\end{equation*}
By virtue of the previous identification, one can rewrite
\begin{equation*}
    \begin{tikzpicture}[baseline = {(0,-0.5ex)}]
        \node [circle, draw = white, fill = white, inner sep = 1pt, label = above:\color{black!50}$\scriptstyle{V\vphantom{f}}$] at (-0.5,0) {};
        \node [circle, draw = white, fill = white, inner sep = 1pt, label = above:\color{black!50}$\scriptstyle{W\vphantom{f}}$] at (0.5,0) {};
        \draw [myMiddleArrow] (-0.8,0) -- (0,0);
        \draw [myMiddleArrow] (0.8,0)  -- (0,0);
        \node [myDot, label = above:$\scriptstyle{v\vphantom{f}}$] at (0,0) {};
    \end{tikzpicture}
=
    \begin{tikzpicture}[baseline = {(0,-0.5ex)}]
        \node [circle, draw = white, fill = white, inner sep = 1pt, label = above:\color{black!50}$\scriptstyle{V^*\vphantom{f}}$] at (-0.5,0) {};
        \node [circle, draw = white, fill = white, inner sep = 1pt, label = above:\color{black!50}$\scriptstyle{W\vphantom{f}}$] at (0.5,0) {};
        \draw [myMiddleArrow] (0,0) -- (-0.8,0);
        \draw [myMiddleArrow] (0.8,0)  -- (0,0);
        \node [myDot, label = above:$\scriptstyle{v\vphantom{f}}$] at (0,0) {};
    \end{tikzpicture}
=
    \begin{tikzpicture}[baseline = {(0,-0.5ex)}]
        \node [circle, draw = white, fill = white, inner sep = 1pt, label = above:\color{black!50}$\scriptstyle{V^*\vphantom{f}}$] at (-0.5,0) {};
        \node [circle, draw = white, fill = white, inner sep = 1pt, label = above:\color{black!50}$\scriptstyle{W^*\vphantom{f}}$] at (0.5,0) {};
        \draw [myMiddleArrow] (0,0) -- (-0.8,0);
        \draw [myMiddleArrow] (0,0)  -- (0.8,0);
        \node [myDot, label = above:$\scriptstyle{v\vphantom{f}}$] at (0,0) {};
    \end{tikzpicture}
    =
    \begin{tikzpicture}[baseline = {(0,-0.5ex)}]
        \node [circle, draw = white, fill = white, inner sep = 1pt, label = above:\color{black!50}$\scriptstyle{V\vphantom{f}}$] at (-0.5,0) {};
        \node [circle, draw = white, fill = white, inner sep = 1pt, label = above:\color{black!50}$\scriptstyle{W^*\vphantom{f}}$] at (0.5,0) {};
        \draw [myMiddleArrow] (-0.8,0) -- (0,0);
        \draw [myMiddleArrow] (0,0)  -- (0.8,0);
        \node [myDot, label = above:$\scriptstyle{v\vphantom{f}}$] at (0,0) {};
    \end{tikzpicture}.
\end{equation*}
In addition, the tensor product of two vectors $v\in V$ and $w\in W$ is depicted by placing both
representations in the same picture, next to each other:
\begin{equation*}
    \begin{tikzpicture}[baseline = {(0,-0.5ex)}]
        \node [circle, draw = white, fill = white, inner sep = 1pt, label = above:\color{black!50}$\scriptstyle{V\vphantom{f}}$] at (-0.5,0) {};
        \node [circle, draw = white, fill = white, inner sep = 1pt, label = above:\color{black!50}$\scriptstyle{W\vphantom{f}}$] at (0.5,0) {};
        \draw [myMiddleArrow] (-0.8,0) -- (0,0);
        \draw [myMiddleArrow] (0.8,0)  -- (0,0);
        \node [circle, draw = black, fill = black, inner sep = 1pt, label = above:$\scriptstyle{v\otimes w\vphantom{f}}$] at (0,0) {};
    \end{tikzpicture}
    =
    \begin{tikzpicture}[baseline = {(0,-0.5ex)}]
        \node [circle, draw = white, fill = white, inner sep = 1pt, label = above:\color{black!50}$\scriptstyle{V\vphantom{f}}$] at (-0.7,0) {};
        \node [circle, draw = white, fill = white, inner sep = 1pt, label = above:\color{black!50}$\scriptstyle{W\vphantom{f}}$] at (0.7,0) {};
        \draw [myMiddleArrow] (-1,0) -- (-0.2,0);
        \draw [myMiddleArrow] (1,0)  -- (0.2,0);
        \node [myDot, label = above:$\scriptstyle{v\vphantom{f}}$] at (-0.2,0) {};
        \node [myDot, label = above:$\scriptstyle{w\vphantom{f}}$] at (0.2,0) {};
    \end{tikzpicture}.
\end{equation*}
Note that, although the labels $V$ and $W$ should allow to identify the
corresponding indices, even if $V = W$, it may not be enough. One solution,
used in this paper, consists on prescribing colors to each index. Additionally,
note that we have implicitly identified $V\otimes W$ and $W\otimes V$, since no
order is prescribed in the previous picture. However, these are again
canonically isomorphic and no preference is considered here. In this context,
it is natural to represent the action of the canonical pairing by joining the
corresponding lines, i.e.
\begin{equation*}
    f\otimes v =
    \begin{tikzpicture}[baseline = {(0,-0.5ex)}]
        \node [circle, draw = white, fill = white, inner sep = 1pt, label = above:\color{black!50}$\scriptstyle{V\vphantom{f}}$] at (0.4,0) {};
        \draw [myMiddleArrow] (0,0) -- (0.8,0);
        \node [circle, draw = black, fill = black, inner sep = 1pt, label = above:$\scriptstyle{v\vphantom{f}}$] at (0,0) {};
        \begin{scope}[xshift = +1cm]
            \node [circle, draw = white, fill = white, inner sep = 1pt, label = above:\color{black!50}$\scriptstyle{V\vphantom{f}}$] at (0.6,0) {};
            \draw [myMiddleArrow] (0.2,0) -- (1,0);
            \node [circle, draw = black, fill = black, inner sep = 1pt, label = above:$\scriptstyle{f}$] at (1,0) {};
        \end{scope}
    \end{tikzpicture}
    \mapsto
    \begin{tikzpicture}[baseline = {(0,-0.5ex)}]
        \node [circle, draw = white, fill = white, inner sep = 1pt, label = above:\color{black!50}$\scriptstyle{V\vphantom{f}}$] at (0.5,0) {};
        \draw [myMiddleArrow] (0,0) -- (1,0);
        \node [circle, draw = black, fill = black, inner sep = 1pt, label = above:$\scriptstyle{v\vphantom{f}}$] at (0,0) {};
        \node [circle, draw = black, fill = black, inner sep = 1pt, label = above:$\scriptstyle{f}$] at (1,0) {};
    \end{tikzpicture}
    =\langle f,v\rangle.
\end{equation*}
On the other hand, since $\mathfrak{L}(V,W)$ is canonically isomorphic to $V^*\otimes W$, we can also represent any linear map $F\in\mathfrak{L}(V,W)$ in the following form
\begin{equation*}
    \begin{tikzpicture}[baseline = {(0,-0.5ex)}]
        \node [circle, draw = white, fill = white, inner sep = 1pt, label = above:\color{black!50}$\scriptstyle{V^*\vphantom{f}}$] at (-0.5,0) {};
        \node [circle, draw = white, fill = white, inner sep = 1pt, label = above:\color{black!50}$\scriptstyle{W\vphantom{f}}$] at (0.5,0) {};
        \draw [myMiddleArrow] (-0.8,0) -- (0,0);
        \draw [myMiddleArrow] (0.8,0)  -- (0,0);
        \node [circle, draw = black, fill = black, inner sep = 1pt, label = above:$\scriptstyle{F\vphantom{f}}$] at (0,0) {};
    \end{tikzpicture}
    =
    \begin{tikzpicture}[baseline = {(0,-0.5ex)}]
        \node [circle, draw = white, fill = white, inner sep = 1pt, label = above:\color{black!50}$\scriptstyle{V\vphantom{f}}$] at (-0.5,0) {};
        \node [circle, draw = white, fill = white, inner sep = 1pt, label = above:\color{black!50}$\scriptstyle{W\vphantom{f}}$] at (0.5,0) {};
        \draw [myMiddleArrow] (0,0) -- (-0.8,0);
        \draw [myMiddleArrow] (0.8,0)  -- (0,0);
        \node [circle, draw = black, fill = black, inner sep = 1pt, label = above:$\scriptstyle{F\vphantom{f}}$] at (0,0) {};
    \end{tikzpicture}.
\end{equation*}
Remarkably, the distinction between $F\in\mathfrak{L}(V,W)$ and its transpose $F^{\mathrm{t}}\in\mathfrak{L}(W^*,V^*)$, $W^*\ni g\mapsto g\circ F\in V^*$, is only reflected in the diagram by the arrows
and their labels.

\section{The Canonical Regular Element}
\label{appendix:cre}

Here we recall additional results on the framework
of C*-WHA not introduced in the main text. As a matter of fact, we are
interested in describing the canonical regular element in terms of these.
First, it is well-known that in any C*-WHA ${A}$ there exists a unique
non-degenerate two-sided normalized integral $h\in {A}$, known as the \emph{Haar integral of ${A}$};
see Definition~3.24 and Theorem~4.5 in \cite{bohm_1999_weak}. In particular, 
\begin{equation}
h^2 = h^* = h = S(h).
\end{equation}
By self-duality, let $\hat{h}\in {A}^*$ denote the Haar integral of the dual C*-WHA.  We also recall the existence of $\Lambda\in {A}$, known as the \emph{dual left-integral of $\hat{h}$}, such that
\begin{equation}\label{eq:dualLeftIntHaar}
\langle\hat{h},\Lambda_{(1)}\rangle \Lambda_{(2)} = 1
\;\,\text{ and }\;\,
S(\Lambda_{(1)}) \otimes\Lambda_{(2)} =
\Lambda_{(2)}\otimes \Lambda_{(1)};
\end{equation}
see e.g. Theorem~3.18 and Lemma~3.20 in \cite{bohm_1999_weak}.
Second, there is a unique positive element $g\in {A}$ implementing the antipode squared as an inner automorphism, i.e.
\begin{equation}\label{eq:S2gxg1}
    S^2(x) = g x g^{-1}
\end{equation}
for all elements $x\in{A}$,
among other properties, known as the \emph{canonical group-like element of ${A}$};
see Proposition~4.9 in \cite{bohm_1999_weak}.
As its name implies, it is a \emph{group-like element}, i.e. it satisfies the following property:
\begin{equation}
    g_{(1)}\otimes g_{(2)} = g 1_{(1)}\otimes g1_{(2)} = 1_{(1)}g\otimes 1_{(2)}g.
\end{equation}
Moreover, it can be decomposed in the form $g = g_L g_R^{-1}$ for two $g_L,g_R > 0$ given by
\begin{equation}\label{eq:defgL}
    g_L:= (\langle\hat{h},h_{(1)}\rangle h_{(2)})^{\frac{1}{2}}\in {A}_L
    \;\,\text{ and }\;\,
    g_R := S(g_L) = S^{-1}(g_L) \in {A}_R.
\end{equation}
By self-duality, we denote by $\hat{g}\in {A}^*$ the canonical group-like element of the
dual C*-WHA. Finally, let us recall the following formula.
\begin{proposition}
\label{prop:arrowsghat}
For any C*-WHA,
\begin{equation*}
    x_{(1)}\langle\hat{g},x_{(2)}\rangle = g_R x g_R^{-1}
    \;\;\text{ and }\;\;
    \langle\hat{g},x_{(1)}\rangle x_{(2)} = g_L x g_L^{-1}
\end{equation*}
for all elements $x\in A$. In particular,
$$
    1_{(1)}\langle\hat{g},1_{(2)}\rangle = 1 = \langle\hat{g},1_{(1)}\rangle 1_{(2)}.
$$
\end{proposition}


\begin{proof}
See Scholium~2.7 and Lemma~4.13 in \cite{bohm_1999_weak} for a proof.
\end{proof}


\begin{proposition}[see~\cite{nikshych_2004_semisimple}]
\label{prop:chiagda}
For any connected C*-WHA ${A}$,
\begin{equation*}
    \mathrm{Tr}_\alpha(g) = \langle\varepsilon,1\rangle d_\alpha 
\end{equation*}
for all sectors $\alpha\in\mathrm{Irr}(A)$.
\end{proposition}



\begin{proposition}\label{prop:OmegaIsTau}
In any connected C*-WHA
$$
    \langle\omega,x\rangle =
    \mathfrak{D}^{-2}\varepsilon(1)^{-1} \langle\hat{h},g_L^{-1} g_R^{-1} x\rangle
    =
    \mathfrak{D}^{-2}\varepsilon(1)^{-1} \langle\hat{h},x g_L^{-1} g_R^{-1}\rangle
$$
for all elements $x\in {A}$. Equivalently,
for any coconnected C*-WHA,
$$
\Omega = \mathfrak{D}^{-2}\varepsilon(1)^{-1} \Lambda_{(1)} \langle\hat{g}^{-1},\Lambda_{(2)}\rangle
= \mathfrak{D}^{-2}\varepsilon(1)^{-1} \langle\hat{g}^{-1},\Lambda_{(1)}\rangle \Lambda_{(2)}.
$$
\end{proposition}


\begin{proof}
Assume first that ${A}$ is a connected C*-WHA. There exists a well-known
element, called the $S$-invariant trace of ${A}$, see~\cite{bohm_1999_weak}, given by the expression
$\sum_\alpha\mathrm{Tr}_\alpha(g)\mathrm{Tr}_\alpha$.
By virtue of \cref{thm:Omega} and \cref{prop:chiagda}, one easily checks that both
elements are proportional.
\end{proof}


\begin{remark}[see~\cite{molnar_2022_mpo}]
\label{prop:expDefT}
The linear map $T\in\mathfrak{L}({A})$ in \cref{thm:Omega} is given by
\begin{equation*}
T(x) = S(x_{(1)})\langle \hat{g},x_{(2)}\rangle  = \langle\hat{g},x_{(1)}\rangle S^{-1}(x_{(2)})
\end{equation*}
\end{remark}


\begin{remark}[see~\cite{molnar_2022_mpo}]
\label{remark:wSwwT}
In any coconnected C*-WHA, $\omega \circ T =\omega\circ S = \omega$.
\end{remark}


Finally, let us particularize the previous notions and results in the context of
C*-Hopf algebras. We refer the reader to \cite{montgomery_2001_rep} for more details.

\begin{proposition}\label{prop:someOnHopf}
Let ${A}$ be a C*-HA. Then:
\begin{enumerate}
    \item $S^2 = \mathrm{Id}$ and the canonical grouplike element is $g = 1$;
    \item $d_\alpha = \dim_{\mathbb{C}}(V_\alpha)$ for all sectors $\alpha\in\mathrm{Irr}(A)$;
    \item the dual left integral of the Haar measure $\hat{h}\in A^*$ is $t = \mathfrak{D}^2 \Omega$;
    \item the canonical regular element and the Haar integral coincide, i.e. $\Omega = h$;
    \item the linear map $T\in\mathfrak{L}({A})$ coincides with the antipode $S\in\mathfrak{L}({A})$;
    \item $g_L = g_R = \mathfrak{D}^{-1} 1$.
\end{enumerate}
\end{proposition}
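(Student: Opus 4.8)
The plan is to specialize the general structure theory recalled in Appendix~\ref{sec:cre} to the Hopf case; the one observation that makes everything collapse is that $\Delta(1)=1\otimes 1$ forces $A^L=A^R=\mathbb{C}1$. Indeed, if $x\in A^L$ then Equation~\ref{eq:defAL} together with $\Delta(1)=1\otimes 1$ gives $\Delta(x)=x\otimes 1$, and applying $\varepsilon\otimes\mathrm{id}$ yields $x=\varepsilon(x)1$; symmetrically for $A^R$. I also note $\varepsilon(1)=1$, since $\varepsilon$ is an algebra homomorphism in the Hopf case. I would then prove the items in the order (i), (v), (iii), (ii), (iv), (vi).

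For (i): by Equation~\ref{eq:defgL} the canonical group-like element decomposes as $g=g_Lg_R^{-1}$ with $g_L\in A^L$ positive and invertible, so $g_L=\lambda 1$ with $\lambda>0$; then $g_R=S(g_L)=\lambda 1$ as well (using $S(1)=1$), hence $g=1$, and substituting into Equation~\ref{eq:S2gxg} gives $S^2(x)=gxg^{-1}=x$. Applying the same reasoning to the C*-Hopf algebra $A^*$ gives the companion fact $\hat g=1_{A^*}=\varepsilon_A$, used below. Items (v), (iii), (ii) then follow by substitution into the explicit formulas: in Proposition~\ref{prop:expDefT}, $T(x)=S(x_{(1)})\hat g(x_{(2)})=S(x_{(1)})\varepsilon(x_{(2)})=S(x)$, giving (v); in Equation~\ref{eq:OmegaIsTau}, $\Omega=\mathcal{D}^{-2}\varepsilon(t_{(1)})t_{(2)}=\mathcal{D}^{-2}t$, i.e.\ $t=\mathcal{D}^2\Omega$, giving (iii); and for (ii) the identity $\chi_\alpha(g)=\varepsilon(1)d_\alpha$ recalled in the proof of Proposition~\ref{prop:OmegaIsTau}, combined with $g=1$ and $\varepsilon(1)=1$, yields $\dim_{\mathbb{C}}V_\alpha=\chi_\alpha(1)=d_\alpha$ (equivalently, since $\Delta(1)=1\otimes 1$ the forgetful functor $\mathrm{Rep}(A)\to\mathrm{Vec}$ is a strict monoidal fiber functor, so Frobenius--Perron and vector-space dimensions coincide).

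For (iv): from (iii) and the middle relation of Equation~\ref{eq:dualLeftIntHaar}, using $1_{(1)}\otimes 1_{(2)}=1\otimes 1$, one gets $tx=t\,\varepsilon(x)$, so $t$ is a left integral. The space of left integrals of a finite-dimensional Hopf algebra is one-dimensional and contains the two-sided integral $h$, so $t=ch$ for a scalar $c$; since $\Omega$ is idempotent (Theorem~\ref{thm:Om}) we have $t^2=\mathcal{D}^4\Omega^2=\mathcal{D}^2 t$, while a left integral satisfies $t^2=\varepsilon(t)t$, so $\varepsilon(t)=\mathcal{D}^2$, and with $\varepsilon(h)=1$ (an idempotent integral) we get $c=\mathcal{D}^2$, hence $\Omega=t/\mathcal{D}^2=h$. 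For (vi): from (i), $g_L=g_R=\lambda 1$ with $\lambda>0$, so it remains to identify $\lambda$; applying (iv) to $A^*$ gives $\omega=\hat h$ (the canonical regular element of $A^*$ is the Haar integral of $A^*$), while Equation~\ref{eq:OmegaSmallIsTau} with $\varepsilon(1)=1$ and $g_L^{-1}g_R^{-1}=\lambda^{-2}1$ reads $\omega(x)=(\mathcal{D}^2\lambda^2)^{-1}\hat h(x)$; comparing forces $\mathcal{D}^2\lambda^2=1$, i.e.\ $\lambda=\mathcal{D}^{-1}$.

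There is no real obstacle here: the entire content sits in the identity $A^L=A^R=\mathbb{C}1$, after which each item is a one-line substitution into Appendix~\ref{sec:cre}. The only external input is the standard fact (used in (iv)) that a finite-dimensional Hopf algebra has a one-dimensional space of integrals; the mildly delicate point is simply to keep straight that, although $\Omega$ is always a non-degenerate positive idempotent, it need not be an \emph{integral} in the weak case, so the argument for $\Omega=h$ must go through (iii) rather than through the uniqueness of $h$ directly.
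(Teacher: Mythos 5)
Your proposal is correct and follows the same overall strategy as the paper: specialize the weak--Hopf formulas of Appendix~\ref{sec:cre} using $\Delta(1)=1\otimes 1$ and $\varepsilon(1)=1$; items (ii), (iii), (v) and the core of (iv) (one-dimensionality of the space of integrals plus idempotence pinning down the scalar $\eta=\mathcal{D}^2$) are essentially verbatim the paper's argument. The two places where you deviate are both sound and, if anything, more self-contained. For (i) the paper invokes the well-known fact $S^2=\mathrm{id}$ and then gets $g=1$ from uniqueness of the canonical group-like element, whereas you first establish $A^L=A^R=\mathbb{C}1$, deduce $g=g_Lg_R^{-1}=1$ from the decomposition in Equation~\ref{eq:defgL}, and only then read off $S^2=\mathrm{id}$ from Equation~\ref{eq:S2gxg}; this buys you a proof of $S^2=\mathrm{id}$ rather than a citation. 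For (vi) the paper's (terse) route is to evaluate $g_L=(h_{(1)}\hat h(h_{(2)}))^{1/2}$ directly using (iii) and (iv), while you instead dualize (iv) to get $\omega=\hat h$ and compare with Equation~\ref{eq:OmegaSmallIsTau} to solve for the scalar $\lambda$ in $g_L=g_R=\lambda 1$; both computations give $\lambda=\mathcal{D}^{-1}$. Your closing caveat — that $\Omega$ is not an integral in the general weak case, so (iv) must pass through (iii) rather than uniqueness of $h$ alone — is exactly the right point to flag.
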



\begin{proof}
(1) It is well-known that for any C*-HA it holds that
$S^2 = \mathrm{Id}$ \cite{larson_1988_semisimple,larson_1988_finite}. Since the unit
element $1\in{A}$ satisfies
the defining properties of the canonical group-like element too, which is unique, we can
conclude that $g = 1$. (2) Consider that $\varepsilon(1) = 1$ by \cref{def:CHA} and hence
\cref{prop:chiagda} proves that
$\dim_{\mathbb{C}}(V_\alpha) = \mathrm{Tr}_\alpha(1) = \mathrm{Tr}_\alpha(g) =
\varepsilon(1) d_\alpha = d_\alpha$
for all sectors $\alpha \in\mathrm{Irr}(A)$.
(3) Since the axioms of C*-HAs are self-dual $\hat{g} = \varepsilon$ and hence
$\Omega = \mathfrak{D}^{-2}\varepsilon(1)^{-1} t = \mathfrak{D}^{-2} t$, where the first expression
follows from \cref{prop:OmegaIsTau}.
(4) Every C*-HA is \emph{unimodular}, see~\cite{montgomery_2001_rep}, i.e. every left integral is
a two-sided integral, and the subspace of two-sided integrals is one-dimensional.
Hence $t = \eta h$ for some
$\eta\in\mathbb{C}$. Since $\Omega^2 = \Omega$ and $h^2 = h$, the only possibility left is
$\eta = \mathfrak{D}^{2}$. (5) This follows trivially as a consequence of \cref{prop:expDefT} since
$\hat{g} =\varepsilon$. (6) Recall the definition of $g_L$ and $g_R$ in \cref{eq:defgL} and consider
both steps (3) and (4).
\end{proof}

\section{Proof of \texorpdfstring{\cref{lemma:xi}}{Lemma~3.6}}
\label{appendix:lemmaxi}

Here we restate and prove the following result.

\restatablelemmaxi*

\begin{proof}
Since $\Omega\in A$ is non-degenerate there
exists a linear functional $f\in {A}^*$ such that
$\langle f,\Omega_{(1)}\rangle \Omega_{(2)} = 1$;
see \cref{eq:nondegenerate} and \cref{thm:Omega}.
On the other hand, since $\omega\in {A}^*$ is non-degenerate, there exists an
element $\xi\in A$ such that
$\langle \omega,\xi x\rangle = \langle f \circ T,x\rangle$ for all elements
$x\in {A}$. Therefore,
\begin{equation*}
    \langle \omega , \xi T( \Omega_{(1)}) \rangle    \Omega_{(2)} =
    \langle f\circ T , T(\Omega_{(1)}) \rangle \Omega_{(2)} =
    \langle f ,\Omega_{(1)} \rangle \Omega_{(2)} = 1,
\end{equation*}
where in the second equality we have used that $T\in\mathfrak{L}(A)$ is involutive,
i.e. $T\circ T = \mathrm{Id}$.
Recall that $\omega\in A^*$ is cocentral, it is a trace-like linear functional
of $A^*$, see \cref{remark:dualcwha}. It follows by the pulling-through identity in \cref{eq:pt}
that
\begin{equation*}
    1 = \langle\omega,\xi T(\Omega_{(1)})\rangle\Omega_{(2)} = \langle\omega,T(\Omega_{(1)})\xi\rangle \Omega_{(2)} = \langle\omega,T(\Omega_{(1)})\rangle \xi\Omega_{(2)},
\end{equation*}
and hence $\xi\in{A}$ is invertible. Its inverse is then trivially given by the expression
\begin{equation*}
\xi^{-1} = \langle\omega,T(\Omega_{(1)})\rangle\Omega_{(2)} = \langle\omega,\Omega_{(1)}\rangle\Omega_{(2)}.
\end{equation*}
where the last equality follows from \cref{remark:wSwwT}.
Let us prove now (4).
By virtue of \cref{prop:OmegaIsTau}, it is easy to conclude by its defining property $\langle\omega,\xi T(\Omega_{(1)})\rangle\Omega_{(2)} = 1$ that $\xi\in {A}$ is necessarily given by the expression
\begin{equation}\label{eq:defExpXi}
 \xi  = \mathfrak{D}^4\varepsilon(1)^2 g_L g_R.
\end{equation}
Consequently, a natural choice of positive elements $\xi_L\in {A}_L$ and
$\xi_R\in {A}_R$ is
\begin{equation}\label{eq:defExpXiLXiR}
    \xi_L := \mathfrak{D}^2\varepsilon(1) g_L
    \;\,\text{ and }\;\,
    \xi_R := \mathfrak{D}^2\varepsilon(1) g_R = S(\xi_L).
\end{equation}
Since $g_L,g_R > 0$, $\xi$ is strictly positive, as we wanted to prove.
We now prove (2), i.e. that $ T(\xi) = \xi $,
note by the previous expressions that it turns out to be enough to check that $T(g_L) = g_R$ and
$T(g_R) = g_L$. We refer to \cref{eq:TxL01,eq:TyR01} below for elementary proofs
of these facts. In addition, note that (4) is straightforward by the eigenvalue
equation \cref{eq:OmegaEigen}. See Scholium~2.7 and Lemma~4.13 from \cite{bohm_1999_weak}
for a proof of (6). Let us now move to the proof of (3). For simplicity, we prove
the equivalent formula $\xi T(x)\xi^{-1} = T(x^*)^*$ for all $x\in {A}$. To this end, we
 recall first that
\begin{equation}
    \label{eq:xiyxiInv}
    \xi y \xi^{-1} =
    g_L g_R y g_L^{-1} g_R^{-1} =
    \langle\hat{g},y_{(1)}\rangle y_{(2)}\langle\hat{g},y_{(3)}\rangle
\end{equation}
for all elements $y\in A$,
see \cref{prop:arrowsghat}.
On the other hand, by virtue of the the fact that $S^{-1}(\hat{g}) = \hat{g}^{-1}$
and the positivity of $\hat{g}\in {A}^*$,
\begin{equation}
    \label{eq:ghatinvis}
    \langle\hat{g}^{-1},y\rangle      =
    \langle\hat{g},S^{-1}(y)\rangle   =
    \langle\hat{g}^*,S^{-1}(y)\rangle =
    \overline{ \langle\hat{g},S(S^{-1}(y))^*\rangle } =
    \overline{ \langle\hat{g},y^*\rangle}
\end{equation}
for all elements $y\in A$.
Thus,
\begin{align*}
    \xi T(x)\xi^{-1}
     &  =
    \xi S(x_{(1)}) \xi^{-1} \langle\hat{g},x_{(2)}\rangle
    \vphantom{\overline{ (x_{(2)}^*) }}
    &&\text{by~\cref{prop:expDefT}}
    \\ &= 
    \langle\hat{g},S(x_{(1)})_{(1)}\rangle S(x_{(1)})_{(2)} \langle\hat{g}, S(x_{(1)})_{(3)}\rangle \langle\hat{g},x_{(2)}\rangle
    \vphantom{\overline{ (x_{(2)}^*) }}
    &&\text{by~\cref{eq:xiyxiInv}}
    \\   & =
    \langle\hat{g},S(x_{(3)})\rangle S(x_{(2)}) \langle\hat{g},S(x_{(1)})\rangle \langle\hat{g},x_{(4)}\rangle
    \vphantom{\overline{ (x_{(2)}^*) }}
    &&\text{by~\cref{remark:propsS}}
    \\  &=
    \langle\hat{g}^{-1},x_{(1)}\rangle S(x_{(2)}) \langle\hat{g}^{-1},x_{(3)}\rangle \langle\hat{g},x_{(4)}\rangle
    \vphantom{\overline{ (x_{(2)}^*) }}
    &&{\text{by~\cref{eq:defgL}}}
    \\  &=
    \langle\hat{g}^{-1},x_{(1)}\rangle S(x_{(2)})
    \vphantom{\overline{ (x_{(2)}^*) }}
    &&{\text{by~\cref{remark:dualcwha}}}
    \\  &=
    \langle\hat{g}^{-1},x_{(1)}\rangle S^{-1}(S^2(x_{(2)}))
    \vphantom{\overline{ (x_{(2)}^*) }}
    \\  &=
    \langle\hat{g}^{-1},x_{(1)}\rangle \langle\hat{g},x_{(2)}\rangle S^{-1}(x_{(3)}) \langle\hat{g}^{-1},x_{(4)}\rangle
    \vphantom{\overline{ (x_{(2)}^*) }}
    &&{\text{by~\cref{eq:S2gxg1}}}
    \\  &=
    S^{-1}(x_{(1)}) \langle\hat{g}^{-1},x_{(2)}\rangle
    \vphantom{\overline{ (x_{(2)}^*) }}
    &&{\text{by~\cref{remark:dualcwha}}}
    \\  &=
    S^{-1}(x_{(1)}) \overline{\langle \hat{g},x_{(2)}^*\rangle}
    \vphantom{\overline{ (x_{(2)}^*) }}
    &&\text{by~\cref{eq:ghatinvis}}
    \\  &=
    S(x_{(1)}^*)^* \overline{\langle \hat{g},x_{(2)}^*\rangle}
    \vphantom{\overline{ (x_{(2)}^*) }}
    &&\text{by~\cref{remark:propsS}}
    \\  &=
    S((x^*)_{(1)})^* \overline{ \langle\hat{g},(x^*)_{(2)} \rangle }
    \vphantom{\overline{ (x_{(2)}^*) }}
    &&\text{by~\cref{def:Cwha}}
    \\ &=
    T(x^*)^*
    \vphantom{\overline{ (x_{(2)}^*) }},
    &&\text{by \cref{prop:expDefT}}
\end{align*}
for all elements $x\in{A}$, as we wanted to prove.
Finally, if ${A}$ is a C*-HA, it is already known by \cref{prop:someOnHopf} that
$g_L = g_R = \mathfrak{D}^{-1} 1$. This, together with the definition of $\xi\in{A}$
in \cref{lemma:xi} and the fact that $\varepsilon(1) = 1$, leads to the expressions
$\xi_L = \xi_R = \mathfrak{D} 1$ and $\xi = \mathfrak{D}^2 1$, as we wanted to prove.
\end{proof}

\section{Proof of \texorpdfstring{\cref{thm:rfp}}{Theorem~3.3}}
\label{appendix:proofrfp}

We now provide algebraic explicit expressions for both local coarse-graining and
fine-graining quantum channels. We restate and prove the following theorem now.

\restatablethmrfp*

\begin{proof}
As previously done, let us define the local coarse-graining quantum channel
\begin{equation}\label{eq:rfpT}
\mathfrak{T}(X):= \mathrm{Tr}(  \Phi( \xi T( \Omega_{(1)} ) )   X )
\Phi(c_\omega \Omega_{(2)} ) \otimes\Phi(c_\omega \Omega_{(3)} )
\end{equation}
for all $X\in\mathfrak{L}(V)$.
First, let us check that $\mathfrak{T}(\rho(x,1)) = \rho(x,2)$
for all positive non-zero $x\in A$. Indeed,
\begin{align*}
    \mathfrak{T}(\rho(x,1)) 
    & = 
    \tfrac{1}{\langle \omega,x\rangle}\mathrm{Tr}(\Phi(\xi T(\Omega_{(1)}) c_\omega x )) 
    \Phi(c_\omega \Omega_{(2)}) \otimes
    \Phi(c_\omega \Omega_{(3)})
    \\ &=
    \tfrac{1}{\langle \omega,x\rangle}\langle\omega , \xi T(\Omega_{(1)})x\rangle
    \Phi(c_\omega \Omega_{(2)} ) \otimes
    \Phi(c_\omega \Omega_{(3)} )
    &&\text{by~\cref{remark:phicwext}}
    \\  &=
    \tfrac{1}{\langle \omega,x\rangle}\langle\omega, \xi T(\Omega_{(1)})\rangle
    \Phi(c_\omega x_{(1)}\Omega_{(2)} ) \otimes
    \Phi(c_\omega x_{(2)}\Omega_{(3)} )
     &&\text{by~\cref{eq:pt2}}
    \\ &=
    \tfrac{1}{\langle \omega,x\rangle}\Phi(c_\omega x_{(1)}1_{(1)} ) \otimes
    \Phi(c_\omega x_{(2)}1_{(2)} )
    &&\text{by~\cref{lemma:xi}}
    \\ &=
    \tfrac{1}{\langle \omega,x\rangle}\Phi(c_\omega x_{(1)} ) \otimes
    \Phi(c_\omega x_{(2)}) = \rho(x,2)
    &&\text{by~\cref{def:Cwha}}
    \end{align*}
    Second, this map is trace-preserving:
    \begin{align*}
    \mathrm{Tr}(\mathfrak{T}(X)) 
    &=
    \mathrm{Tr}(\Phi(\xi T(\Omega_{(1)})) X) \mathrm{Tr}(\Phi(c_\omega\Omega_{(2)}))
    \mathrm{Tr}(\Phi(c_\omega\Omega_{(3)}))
    \vphantom{\xi^{-1}}
    \\  &=
    \mathrm{Tr}(\Phi(\xi T(\Omega_{(1)})) X) \langle\omega,\Omega_{(2)}\rangle
    \langle\omega,\Omega_{(3)}\rangle
    \vphantom{\xi^{-1}}
     &&\text{by~\cref{remark:phicwext}}
    \\  &=
    \mathrm{Tr}(\Phi(\xi T(\Omega_{(1)})) X) \langle\omega,\Omega_{(2)}\rangle
    \vphantom{\xi^{-1}}
     &&\text{by~\cref{lemma:omega}}
    \\  &=
    \mathrm{Tr}(\Phi(\xi T(\xi^{-1})) X)
     &&\text{by~\cref{lemma:xi}}
    \\  &=
    \mathrm{Tr}(\Phi(\xi \xi^{-1}) X)
    \mathrm{Tr}(X)
     &&\text{by~\cref{lemma:xi}}
\end{align*}
Finally, since $\Omega = \Omega^2 = \Omega \Omega^*$ (in fact, only positivity of $\Omega$ is needed), we can rewrite the map in the following form:
\begin{align*}
    \mathfrak{T}(X)
    &=
    \mathrm{Tr}(\Phi(\xi T(\Omega_{(1)}(\Omega^*)_{(1')}) X)
    \Phi^{\otimes 2}(c_\omega^{\otimes 2} \Delta(\Omega_{(2)}(\Omega^*)_{(2')}))
    \\ &=
    \mathrm{Tr}(\Phi(\xi T(\Omega_{(1)}(\Omega^*)_{(1')}) X)
    \Phi^{\otimes 2}(c_\omega^{\otimes 2} \Delta(\Omega_{(2)})\Delta((\Omega^*)_{(2')}))
    \\  &=
    \mathrm{Tr}(\Phi(\xi T(\Omega_{(1)}\Omega_{(1')}^*) X)
    \Phi^{\otimes 2}(c_\omega^{\otimes 2} \Delta(\Omega_{(2)})\Delta(\Omega_{(2')})^*)
    \\ &=
    \mathrm{Tr}(\Phi(\xi T(\Omega_{(1')}^*) T(\Omega_{(1)})) X)
    \Phi^{\otimes 2}(c_\omega^{\otimes 2} \Delta(\Omega_{(2)})\Delta(\Omega_{(2')})^*)
    \\ &=
    \mathrm{Tr}(\Phi( T(\Omega_{(1')})^*\xi T(\Omega_{(1)})) X)
    \Phi^{\otimes 2}(c_\omega^{\otimes 2} \Delta(\Omega_{(2)})\Delta(\Omega_{(2')})^*)
    \\ &= 
    \mathrm{Tr}(\Phi( T(\Omega_{(1')})^*\xi^{\frac{1}{2}}\xi^{\frac{1}{2}} T(\Omega_{(1)})) X)
    \Phi^{\otimes 2}((c_\omega^{\frac{1}{2}})^{\otimes 2} \Delta(\Omega_{(2)})\Delta(\Omega_{(2')})^*(c_\omega^{\frac{1}{2}})^{\otimes 2})
    \\  &=
    \mathrm{Tr}(\Phi( T(\Omega_{(1')})^*\xi^{\frac{1}{2}})\Phi(\xi^{\frac{1}{2}} T(\Omega_{(1)})) X)
    \Phi^{\otimes 2}((c_\omega^{\frac{1}{2}})^{\otimes 2} \Delta(\Omega_{(2)}))
    \Phi^{\otimes 2}(\Delta(\Omega_{(2')})^*(c_\omega^{\frac{1}{2}})^{\otimes 2})
    \\ &=
    \mathrm{Tr}(\Phi(\xi^{\frac{1}{2}} T(\Omega_{(1)})) X \Phi (\xi^{\frac{1}{2}}T(\Omega_{(1')}))^\dagger)
    \Phi^{\otimes 2}((c_\omega^{\frac{1}{2}})^{\otimes 2} \Delta(\Omega_{(2)}))\Phi^{\otimes 2}((c_\omega^{\frac{1}{2}})^{\otimes 2}\Delta(\Omega_{(2')}))^\dagger
    \\ &=
    (\mathrm{Tr}\otimes\mathrm{Id}\otimes\mathrm{Id})(Q(X\otimes\mathbf{1}\otimes\mathbf{1}) Q^\dagger)
\end{align*}
where
\begin{equation*}
    Q := \Phi^{\otimes 3}(\xi^{\frac{1}{2}}T(\Omega_{(1)}) \otimes c_\omega^{\frac{1}{2}}\Omega_{(2)}\otimes c_\omega^{\frac{1}{2}} \Omega_{(3)})
\end{equation*}
Thus, $\mathfrak{T}$ is completely positive.
Now, let us define a local fine-graining quantum channel $\mathfrak{S}$.
Consider first the following hermitian projectors
\begin{equation}\label{eq:SPPperp}
    P := \Phi^{\otimes 2}(\Delta(1)),
    \quad
    P^\perp := \Phi^{\otimes 2} (1\otimes 1 - \Delta(1)),
    \quad
    P + P^\perp = \mathbf{1}\otimes\mathbf{1}
\end{equation}
and let $\rho_0\in\mathfrak{L}(V)$ be any mixed state. Define
\begin{equation}\label{eq:rfpS}
    \mathfrak{S}(X) :=
    \mathrm{Tr}(\Phi(\Delta(\xi T(\Omega_{(1)}))) X) \Phi(c_\omega \Omega_{(2)} ) +
    \mathrm{Tr}(P^\perp  X)\rho_0
\end{equation}
for all elements $X\in\mathfrak{L}(V\otimes V)$.
We first check that it satisfies $\mathfrak{S}(\rho(x,2))=\rho(x,1)$ for all positive
non-zero $x\in A$.
Notice that the second summand in the right-hand side of \cref{eq:rfpS} simply vanishes, i.e. $P^\perp  \rho(x,2) = 0$,
since $\rho(x,2)$ is supported on the orthogonal subspace $P\cdot \mathfrak{L}(V^{\otimes 2})$.
Thus,
\begin{align*}
    \mathfrak{S}(\rho(x,2))
    &=
    \tfrac{1}{\langle\omega,x\rangle}\mathrm{Tr}(\Phi^{\otimes 2}(c_\omega^{\otimes 2}\Delta(\xi T(\Omega_{(1)})x)))\Phi(c_\omega \Omega_{(2)} )
    &&\text{by~\cref{def:Cwha}}
    \\ &=
    \tfrac{1}{\langle\omega,x\rangle}\langle\omega\otimes \omega,\Delta(\xi T(\Omega_{(1)}) x)\rangle \Phi(c_\omega \Omega_{(2)} )
    &&\text{by~\cref{remark:phicwext}}
    \\ &=
    \tfrac{1}{\langle\omega,x\rangle} \langle\omega,\xi T(\Omega_{(1)}) x\rangle \Phi(c_\omega \Omega_{(2)} )
    &&\text{by~\cref{lemma:omega}}
    \\ &=
    \tfrac{1}{\langle\omega,x\rangle} \langle\omega,\xi T(\Omega_{(1)})\rangle \Phi(c_\omega x\Omega_{(2)} )
    &&\text{by~\cref{eq:pt}}
    \\ &=
    \tfrac{1}{\langle\omega,x\rangle}\Phi(c_\omega x  ) = \rho(x,1)
    &&\text{by~\cref{lemma:xi}}
\end{align*}
for all positive non-zero elements $x\in A$,
as we wanted to prove. 
Secondly, let us check that it is trace-preserving:
\begin{align*}
    \mathrm{Tr}(\mathfrak{S}(X))
    &=
    \mathrm{Tr}(\Phi^{\otimes 2}(\Delta(\xi T(\Omega_{(1)}))) X)\mathrm{Tr}(\Phi(c_\omega\Omega_{(2)})) + \mathrm{Tr}(P^\perp  X)
    \\ &=
    \mathrm{Tr}(\Phi^{\otimes 2}(\Delta(\xi T(\Omega_{(1)}))) X) \langle\omega,\Omega_{(2)}\rangle + \mathrm{Tr}(P^\perp  X)
        &&\text{by~\cref{remark:phicwext}}
    \\ &=
    \mathrm{Tr}(\Phi^{\otimes 2}(\Delta(\xi T(\xi^{-1}))) X)+ \mathrm{Tr}(P^\perp X)
        &&\text{by~\cref{lemma:xi}}
    \\ &=
    \mathrm{Tr}(\Phi^{\otimes 2}(\Delta(\xi\xi^{-1})) X)  + \mathrm{Tr}(P^\perp  X)
        &&\text{by~\cref{lemma:xi}}
    \\ &=
    \mathrm{Tr}(P  X) + \mathrm{Tr}(P^\perp  X)
    =
    \mathrm{Tr}((P+P^\perp)  X) = \mathrm{Tr}(X)
        &&\text{by~\cref{eq:SPPperp}}
\end{align*}
for all $X\in\mathfrak{L}(V\otimes V)$.
That $\mathfrak{S}$ is completely positive can be proved analogously and we do
not include it here: simply notice that the second summand in
Equationby~\cref{eq:rfpS} is clearly a completely positive map, and a similar
argument to that for $\mathfrak{T}$ applies to the first summand.
\end{proof}

\section{Proof of \texorpdfstring{\cref{lemma:glueHA}}{Lemma~5.2}}
\label{sec:proofs-classif-hopf}

In this appendix we derive a proof of \cref{lemma:glueHA}. We first provide
the following auxiliary result, related to the trace-preserving condition of the gluing map.


\begin{lemma}\label{lemma:omegaMidHA}
Let $A$ be a C*-HA. Then,
\begin{equation*}
    x_{(1)}\otimes \langle \omega,x_{(2)}\rangle x_{(3)}
    =
    \langle\omega,x\rangle 1\otimes 1
\end{equation*}
for all elements $x\in A$.
\end{lemma}


\begin{proof}
Fix any $x\in A$. Since $\Omega\in A$ is non-degenerate, there exists $f\in A^*$ such that
\begin{equation}\label{eq:muxD}
x = \Omega_{(1)}\langle f,\Omega_{(2)}\rangle.
\end{equation}
As an immediate consequence, 
\begin{equation}
    \langle \omega,x\rangle
    =
    \langle \omega,\Omega_{(1)}\rangle\langle f,\Omega_{(2)}\rangle
    =
    \mathfrak{D}^{-2}\langle f,1\rangle,
    \label{eq:muxD2}
\end{equation}
where the last equality follows from \cref{lemma:xi}.
Then, it is easy to conclude that
\begin{align*}
    x_{(1)}\langle \omega,x_{(2)}\rangle \otimes x_{(3)}
    &=
    \Omega_{(1)}\otimes \langle\omega,\Omega_{(2)}\rangle
    \Omega_{(3)}\langle f,\Omega_{(4)}\rangle
    &&\text{by~\cref{eq:muxD}}
    \\ &=
    \Omega_{(4)}\otimes \langle\omega,\Omega_{(1)}\rangle\Omega_{(2)}
    \langle f,\Omega_{(3)}\rangle
    &&\text{by~\cref{thm:Omega}}
    \\ &=
    \mathfrak{D}^{-2} 1_{(3)} \otimes 1_{(1)} \langle f,1_{(2)}\rangle
    &&\text{by~\cref{lemma:xi}}
    \\ &=
    \mathfrak{D}^{-2} \langle f,1\rangle  1 \otimes 1
    &&\text{by~\cref{def:CHA}}
    \\ &=
    \langle \omega,x\rangle 1\otimes 1,
    &&\text{by~\cref{eq:muxD2}}
\end{align*}
as we wanted to prove.
\end{proof}


\restatableglueHA*


\begin{proof}
Fix any positive non-zero element $x\in A$.
We recall first the definition of the gluing map previously given
in \cref{sec:classif}. For simplicity, let
$\mathfrak{G}_x:= \mathfrak{T}\circ \mathfrak{G}$ for the linear map
$\mathfrak{G}\in\mathfrak{L}(\mathfrak{L}(V\otimes V),\mathfrak{L}(V))$ defined
by the expression
\begin{equation}
    \mathfrak{G}(X\otimes Y) := \tfrac{1}{\langle \omega,x\rangle}\mathrm{Tr}(\Phi(S(x_{(1)})) X) \Phi(c_\omega x_{(2)})
     \mathrm{Tr}(\Phi(S(x_{(3)})) Y)
\end{equation}
for all $X,Y\in\mathfrak{L}(V)$.
Then, it is enough to check that
$\mathfrak{G}(\rho(\Omega,2)\otimes\rho(\Omega,2)) = \rho(\Omega,3)$.
To this end, let us recall that, in the case of C*-HAs,
\begin{equation}\label{eq:xi1CHA}
    \langle\omega,\Omega_{(1)}\rangle \Omega_{(2)}
    =
    \tfrac{1}{\mathfrak{D}^{2}} 1
    =
    \langle\omega,\Omega\rangle 1,
\end{equation}
where the first equality is stated in \cref{lemma:xi}
and the second equality follows by applying the counit in the first one,
since $\varepsilon(1) = 1$.
Then,
\begin{align*}
    (\mathrm{Id}{ }& \otimes { } { }\mathfrak{G} { }\otimes{ }
    \mathrm{Id})(\rho(\Omega,2)\otimes\rho(\Omega,2))
    =
    \\ = { } & { } 
    \tfrac{1}{\langle \omega,x\rangle}\tfrac{1}{\langle\omega,\Omega\rangle^2}
    \Phi(c_\omega \Omega_{(1)} ) \otimes
    \langle\omega,S(x_{(1)}\rangle \Omega_{(2)})
    \Phi(c_\omega x_{(2)})
    \langle\omega,S(x_{(3)})\Omega_{(1')}\rangle \otimes
    \Phi(c_\omega\Omega_{(2')})
    \\ = { } & { } 
    \tfrac{1}{\langle \omega,x\rangle}\tfrac{1}{\langle\omega,\Omega\rangle^2}
    \Phi(c_\omega x_{(1)} \Omega_{(1)} ) \otimes
    \langle\omega, \Omega_{(2)}\rangle
    \Phi(c_\omega x_{(2)})
    \langle\omega,S(x_{(3)})\Omega_{(1')}\rangle \otimes
    \Phi(c_\omega  \Omega_{(2')})
    \\ = { } & { }  
    \tfrac{1}{\langle \omega,x\rangle}\tfrac{1}{\langle\omega,\Omega\rangle^2}
    \Phi(c_\omega x_{(1)} \Omega_{(1)} ) \otimes
    \langle\omega, \Omega_{(2)}\rangle
    \Phi(c_\omega x_{(2)})
    \langle\omega,\Omega_{(1')}\rangle \otimes
    \Phi(c_\omega x_{(3)}\Omega_{(2')})
    \\ = { } & { } 
    \tfrac{1}{\langle \omega,x\rangle}
    \Phi(c_\omega x_{(1)} 1 )\otimes\Phi(c_\omega x_{(2)})
    \otimes\Phi(c_\omega x_{(3)}1 )
    =  
    \rho(x,3).
\end{align*}
This calculation can be explained as follows. In the first place, we have
replaced the trace with the canonical regular element $\omega\in A^*$ since by
\cref{remark:phicwext} the weight $c_\omega\in A$, which is central,
defines a linear extension of $\omega$ to the representation space. In the
second and third steps we have applied the pulling-through identity; see
\cref{eq:pt}. Finally, we apply twice \cref{eq:xi1CHA} to get
rid of $\Omega$ and the coefficients $\langle\omega,\Omega\rangle^{-1}$. As an aside, note
that $\langle\omega,\Omega_{(1)}\rangle\Omega_{(2)} = \Omega_{(1)}\langle\omega,\Omega_{(2)}\rangle$ since
$\Omega$ is cocentral; see~\cref{eq:Omegacoce}. Since $\mathfrak{T}$ is
a quantum channel it only remains to prove that $\mathfrak{G}$ is also a quantum
channel. On the one hand, that $\mathfrak{G}$ is trace-preserving is a
straightforward consequence of \cref{lemma:omegaMidHA}:
\begin{align*}
    \mathrm{Tr}(\mathfrak{G}(X\otimes Y))
    &=
    \tfrac{1}{\langle \omega,x\rangle}\mathrm{Tr}(\Phi(S(x_{(1)})) X) \langle\omega, x_{(2)}\rangle\mathrm{Tr}(\Phi(S(x_{(3)})) Y)
    &&\text{by~\cref{remark:phicwext}}
    \\ &=
    \tfrac{1}{\langle \omega,x\rangle}\langle \omega,x\rangle\mathrm{Tr}(\Phi(S(1)) X) \mathrm{Tr}(\Phi(S(1)) Y)
    &&\text{by \cref{lemma:omegaMidHA}}
    \\ &=
    \mathrm{Tr}(X) \mathrm{Tr}(Y) = \mathrm{Tr}(X\otimes Y)
    \vphantom{\tfrac{1}{\langle \omega,x\rangle}}
    &&\text{by~\cref{remark:propsS}}
\end{align*}
for all $X,Y\in \mathfrak{L}(V)$. On the other hand, in order to prove that $\mathfrak{G}$
is completely positive, let
$x = y y^*$ for some element $y\in A$. Then, we can rewrite it as follows
\begin{align*}
    \mathfrak{G}( & X\otimes Y)=
    \\ &=
    \tfrac{1}{\langle \omega,x\rangle}
    \mathrm{Tr}(\Phi(S((yy^*)_{(1)})) X)
    \Phi(c_\omega (y y^*)_{(2)})
    \mathrm{Tr}(\Phi(S(y y^*)_{(3)}) Y)
    \\ & =
    \tfrac{1}{\langle \omega,x\rangle}
    \mathrm{Tr}(\Phi(S(y_{(1)} y_{(1')}^*)) X)
    \Phi(c_\omega y_{(2)} y_{(2')}^*)
    \mathrm{Tr}(\Phi(S( y_{(3)} y_{(3'}^*)) Y)
    \\ &=
    \tfrac{1}{\langle \omega,x\rangle}
    \mathrm{Tr}(\Phi(S(y_{(1')}^*)S(y_{(1)})) X)
    \Phi(c_\omega y_{(2)} y_{(2')}^*)
    \mathrm{Tr}(\Phi(S(y_{(3')}^*) S( y_{(3)} )) Y)
    \\ &=
    \tfrac{1}{\langle \omega,x\rangle}
    \mathrm{Tr}(\Phi(S(y_{(1')})^*S(y_{(1)})) X)
    \Phi(c_\omega y_{(2)} y_{(2')}^*)
    \mathrm{Tr}(\Phi(S(y_{(3')})^* S( y_{(3)} )) Y)
    \\ &= 
    \tfrac{1}{\langle \omega,x\rangle}
    \mathrm{Tr}(\Phi(S(y_{(1)})) X\Phi(S(y_{(1')})^*))
    \Phi(c_\omega y_{(2)} y_{(2')}^*)
    \mathrm{Tr}(\Phi( S( y_{(3)} )) Y\Phi(S(y_{(3')})^*))
    \\ &=
    (\mathrm{Tr}\otimes\mathrm{Id}\otimes\mathrm{Tr})
    (Q (X\otimes\mathbf{1}\otimes Y) Q^\dagger)
\end{align*}
where we have defined
\begin{equation}
    Q := \tfrac{1}{\langle \omega,x\rangle^{1/2}}
    \Phi^{\otimes 4}(S(y_{(1)}) \otimes c_\omega^{\frac{1}{2}} y_{(2)} \otimes \otimes S(y_{(3)})).
\end{equation}
Therefore, $\mathfrak{G}$ is completely positive.
Indeed, in the first step we have applied that the comultiplication is multiplicative
and the $*$-operation is a coalgebra homomorphism; see \cref{def:Cwha}.
In the second and third steps we have used that $S\in\mathfrak{L}(A)$ is an algebra anti-homomorphism
and the relation between the antipode and
the $*$-operation; see \cref{remark:propsS}.
Note that, for C*-HAs,
$S = S^{-1}$; see \cref{prop:someOnHopf}.
The fourth step is a simple consequence of the fact that $\Phi$ is a $*$-representation
and the cyclic property of the trace. Finally, the middle term can be
rewritten in the form
$\Phi(c_\omega y_{(2)} y_{(2')}^*) = \Phi(c_\omega^{1/2} y_{(2)})\Phi(c_\omega^{1/2} y_{(2')})^\dagger$
since $c_\omega\in A$ is positive central element and $\Phi$
is a $*$-representation. 
\end{proof}

\section{Proof of \texorpdfstring{\cref{lemma:glueVacuum}}{Lemma~5.6}}
\label{sec:proofs-classif-weakhopf}

In this appendix we prove \cref{lemma:glueVacuum}. In order to perform
an analogous construction of this gluing map to the one given in the
C*-HA case, we first derive an appropiate version of the usual
pulling-through identity in \cref{eq:pt} to the trivial sector.


\begin{lemma}
\label{lemma:ptvac}
Let $A$ be a biconnected C*-WHA. Then,
\begin{equation*}
    x_L S(1_{(1)}) \otimes 1_{(2)} \otimes S(1_{(3)}) y_R
    =
    S(1_{(1)}) \otimes y_R 1_{(2)} x_L \otimes S(1_{(3)})
\end{equation*}
for all elements $x_L\in A_L$ and $y_R\in A_R$.
\end{lemma}


\begin{proof}
First, recall Equations~2.31a and 2.31b from \cite{bohm_1999_weak}:
\begin{align*}
    x_L S(1_{(1)}) \otimes 1_{(2)}
    &=
    S(1_{(1)}) \otimes 1_{(2)} x_L,
    \\
    y_R 1_{(1)} \otimes  S(1_{(2)})
    &=
    1_{(1)} \otimes S(1_{(2)}) y_R.
\end{align*}
for all $x_L\in A_L$ and $y_R\in A_R$. This, together with \cref{defs:ALARAmin},
leads by taking coproducts accordingly to the following
identities:
\begin{align*}
    &
    x_L S(1_{(1)}) \otimes 1_{(2)} \otimes 1_{(3)}
    =
    S(1_{(1)}) \otimes 1_{(2)} x_L\otimes 1_{(3)},
    \\
    &
    1_{(1)}\otimes y_R 1_{(2)}\otimes S(1_{(3)})
    =
    1_{(1)} \otimes 1_{(2)} \otimes S(1_{(3)}) y_R,
\end{align*}
respectively, for all elements $x_L\in A_L$ and $y_R\in A_R$. Finally,
since $A_L$ and $A_R$ commute, we conclude the result by combining both
identities.
\end{proof}


In addition, we adapt slightly \cref{lemma:xi} to the trivial sector,
which is a key property concerning complete positivity of the gluing map in
\cref{lemma:glueVacuum}. The following result solves this problem.


\begin{lemma}
\label{lemma:CPminGkey}
Let $A$ be a biconnected C*-WHA. Then,
\begin{equation*}
    \xi_R S( x_L^* )
    =
    S( x_L )^* \xi_R
    \;\,\text{ and }\;\,
    S( y_R ) \xi_L
    =
    \xi_L S( y_R^* )^*
\end{equation*}
for all elements $x_L\in A_L$ and $y_R\in A_R$.
\end{lemma}


\begin{proof}
In the first place, note that $T\in\mathfrak{L}(A)$ coincides with $S$ and $S^{-1}$ restricted to
$A_L$ and $A_R$, respectively. Indeed, by virtue of \cref{prop:expDefT},
\cref{prop:arrowsghat} and \cref{remark:propsS},
\begin{subequations}
\begin{align}
    T(x_L)
    & =
    S(x_L 1_{(1)}) \langle\hat{g},1_{(2)}\rangle
    =
    S(x_L),
    \label{eq:TxL01}
    \\
    T(y_R)
    & =
    \langle\hat{g},1_{(1)}\rangle
    S^{-1}(1_{(2)}y_R)
    =
    S^{-1}(y_R)
    =
    S(y_R^*)^*,
    \label{eq:TyR01}
\end{align}
\end{subequations}
for all $x_L\in A_L$ and $y_R\in A_R$.
Then, recall \cref{lemma:xi} to conclude that
\begin{equation*}
    S(x_L^*)
    = 
    T(x_L^*)
    =
    \xi_L^{-1}\xi_R^{-1} T(x_L)^* \xi_L\xi_R
    =
    \xi_L^{-1}\xi_R^{-1} S(x_L)^* \xi_L \xi_R
    =
    \xi_{R}^{-1}S(x_L)^* \xi_R,
\end{equation*}
where in the last step we have used that $S(x_L)\in A_R$ and $A_L$ and $A_R$ commute.
The remaining identity is proved similarly.
\end{proof}


The following auxiliary results arise naturally in the course of the derivation of the
properties of the gluing map.


\begin{lemma}
\label{lemma:hOmega1}
Let $A$ be a bicoconnected C*-WHA. Then,
\begin{equation*}
\langle\hat{h},\Omega_{(1)}\rangle \Omega_{(2)} = \mathfrak{D}^{-2}\varepsilon(1)^{-1} 1.
\end{equation*}
\end{lemma}


\begin{proof}
It is easy to check that
\begin{equation*}
    \mathfrak{D}^2\varepsilon(1)\langle \hat{h},\Omega_{(1)}\rangle\Omega_{(2)}
    =
    \langle \hat{h},t_{(1)}\rangle t_{(2)}\langle \hat{g},t_{(3)}\rangle
    =
    1_{(1)}\langle \hat{g},1_{(2)}\rangle
    =
    1,
\end{equation*}
where the first step is a consequence of the characterization of $\Omega\in A$
in \cref{prop:OmegaIsTau}, the second follows from the definition of dual left integral
in \cref{eq:dualLeftIntHaar} and the third equality is due to \cref{prop:arrowsghat}.
\end{proof}


\begin{lemma}
\label{lemma:trvac1}
Let $A$ be a biconnected C*-WHA. Then,
\begin{equation}
    1_{(1)}\langle\hat{h},1_{(2)}\rangle\otimes 1_{(3)}
    =
    \varepsilon(1)^{-1} 1\otimes 1.
\end{equation}
\end{lemma}


\begin{proof}
Equivalently, we will check that
\begin{equation*}
    \langle\phi \hat{h} \psi,1\rangle
    =
    \varepsilon(1)^{-1} \langle \phi,1 \rangle \langle \psi,1 \rangle
\end{equation*}
for all $\phi,\psi\in A^*$. Recall that ${\hat{h}}\in A^*$
is a one-dimensional projector supported on the trivial sector \cite[Lemma~4.8]{bohm_1999_weak}. Hence,
\begin{equation}
    \label{eq:hath1dproj}
    \langle\phi\hat{h}\psi\hat{h},\mathrm{Tr}^{1}\rangle
    =
    \langle\phi\hat{h},\mathrm{Tr}^1\rangle
    \langle\psi\hat{h},\mathrm{Tr}^1\rangle
    \;\,\text{ and }\;\,
    \langle\phi\hat{h},\mathrm{Tr}^a\rangle
    =
    \delta_{a 1}
\end{equation}
for all $\phi,\psi\in A^*$ and all sectors $a\in\mathrm{Irr}(A^*)$.
In particular
\begin{equation}
    \label{eq:fhath1}
    (f\hat{h})(\mathrm{Tr}^1)
    =
    (f\hat{h})(\sum_{a}\hat{d}_a\mathrm{Tr}^a )
    =
    \mathfrak{D}^2 (f\hat{h})(\Omega)
    =
    \varepsilon(1)^{-1}\langle f,1\rangle
\end{equation}
for all $f\in A^*$.
Thus, we conclude that:
\begin{equation*}
    \varepsilon(1)^{-1} \langle\phi\hat{h}\psi,1\rangle 
    =
    \langle\phi \hat{h}  \psi \hat{h},\mathrm{Tr}^1\rangle
    =
    \langle \phi \hat h , \mathrm{Tr}^1 \rangle 
    \langle \psi \hat h , \mathrm{Tr}^1 \rangle
    =
    \varepsilon(1)^{-2}
    \langle \phi , 1 \rangle \langle \psi , 1 \rangle,
\end{equation*}
where the first equality follows from \cref{eq:fhath1} using
$f := \phi\hat{h}\psi$, the second equality is simply \cref{eq:hath1dproj} and
the third equality follows from \cref{eq:fhath1} considering $f := \phi,\psi$.
\end{proof}


\begin{lemma}\label{lemma:trvac2}
Let $A$ be a biconnected C*-WHA. Then,
\begin{equation*}
    1_{(1)} \otimes \langle\omega,1_{(2)}\rangle 1_{(3)}
    =
    \mathfrak{D}^{2} \xi_R^{-1}\otimes \xi_L^{-1}.
\end{equation*}
\end{lemma}


\begin{proof}
Note by the definition of $A_L$ and $A_R$ in \cref{defs:ALARAmin}
and the decomposition
$\xi^{-1} = \xi_L^{-1}\xi_R^{-1}$ in \cref{lemma:xi}, that
\begin{equation}
\label{eq:coprodxi}
    (\xi^{-1})_{(1)} \otimes (\xi^{-1})_{(2)} \otimes (\xi^{-1})_{(3)}
    = 
    \xi_L^{-1} 1_{(1)} \otimes 1_{(2)}\otimes\xi_R^{-1} 1_{(3)}
\end{equation}
Then, the statement follows from the following calculation:
\begin{align*}
    1_{(1)}  \otimes \langle\omega,1_{(2)}\rangle 1_{(3)}
    & =
    \mathfrak{D}^{2} \varepsilon(1) \langle{\hat{h}},\Omega_{(1)}\rangle\Omega_{(2)}
    \otimes \langle\omega,\Omega_{(3)}\rangle \Omega_{(4)}
    \vphantom{\xi^{-1}}
    &&\text{by~\cref{lemma:hOmega1}}
    \\ & =
    \mathfrak{D}^{2} \varepsilon(1) \langle\hat{h},\Omega_{(3)}\rangle\Omega_{(4)}
    \otimes \langle\omega,\Omega_{(1)}\rangle \Omega_{(2)}
    \vphantom{\xi^{-1}}
    &&\text{by~\cref{eq:Omegacoce}}
    \\ & =
    \mathfrak{D}^{2} \varepsilon(1)\langle\hat{h},(\xi^{-1})_{(2)}\rangle (\xi^{-1})_{(3)}
    \otimes (\xi^{-1})_{(1)}
    &&\text{by~\cref{lemma:xi}}
    \\ & =
    \mathfrak{D}^{2} \varepsilon(1) \langle\hat{h},1_{(2)}\rangle \xi_R^{-1} 1_{(3)}
    \otimes \xi_L^{-1}  1_{(1)}
    &&\text{by~\cref{eq:coprodxi}}
    \\ & =
    \mathfrak{D}^{2} \xi_{R}^{-1}\otimes\xi_L^{-1}
    &&\text{by~\cref{lemma:trvac1}}
\end{align*}
as we wanted to prove.
\end{proof}


\begin{lemma}
Let $A$ be a biconnected C*-WHA. Then,
\begin{equation}
    \label{eq:omega1omega}
    \langle\omega,1_{(1)}\rangle 1_{(2)} \langle\omega,1_{(3)}\rangle
    =
    \mathfrak{D}^2 \langle\omega,1\rangle \xi^{-1}.
\end{equation}
\end{lemma}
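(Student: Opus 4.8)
The plan is to deduce Equation~\ref{eq:omega1omega} from Lemma~\ref{lemma:trvac2}, the idempotence $(\omega\otimes\omega)\circ\Delta=\omega$ of Equation~\ref{eq:omegaIdemp}, and the factorisation $\xi=\xi_L\xi_R$ with $\xi_L\in A^L$, $\xi_R\in A^R$ provided by Lemma~\ref{lemma:xi}, so that the scalar comes out by a self-consistency argument rather than by reopening the explicit formulas for $g_L$ and $g_R$.

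First I would contract the two tensor legs of Lemma~\ref{lemma:trvac2}, i.e.\ of $1_{(1)}\otimes\omega(1_{(2)})1_{(3)}=\mathcal{D}^{2}\xi_R^{-1}\otimes\xi_L^{-1}$, separately against $\omega$. Applying $\omega$ to the first leg and rewriting $\Delta^{(2)}(1)$ as $(\Delta\otimes\mathrm{id})\circ\Delta(1)$ via coassociativity (Equation~\ref{eq:DeltaCoassoc}), the term $\omega(1_{(1)})\omega(1_{(2)})1_{(3)}$ collapses to $\omega(1_{(1)})1_{(2)}$ by Equation~\ref{eq:omegaIdemp}; contracting the second leg is symmetric. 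This gives
\begin{equation*}
\omega(1_{(1)})\,1_{(2)}=\mathcal{D}^{2}\,\omega(\xi_R^{-1})\,\xi_L^{-1}
\qquad\text{and}\qquad
1_{(1)}\,\omega(1_{(2)})=\mathcal{D}^{2}\,\omega(\xi_L^{-1})\,\xi_R^{-1},
\end{equation*}
which I will call $(\star)$ and $(\star\star)$ respectively.

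Next I would lift $(\star)$ back to a threefold coproduct by applying $\Delta$ to both sides: coassociativity turns the left-hand side into $\omega(1_{(1)})\,1_{(2)}\otimes 1_{(3)}$, while on the right $\Delta(\xi_L^{-1})=\xi_L^{-1}1_{(1)}\otimes 1_{(2)}$ by the defining property of $A^L$ in Equation~\ref{eq:defAL} (legitimate since $\xi_L\in A^L$ forces $\xi_L^{-1}\in A^L$, $A^L$ being a C*-subalgebra). Contracting the last leg against $\omega$ and inserting $(\star\star)$ then yields
\begin{equation*}
\omega(1_{(1)})\,1_{(2)}\,\omega(1_{(3)})
=\mathcal{D}^{4}\,\omega(\xi_R^{-1})\,\omega(\xi_L^{-1})\,\xi_L^{-1}\xi_R^{-1}
=\mathcal{D}^{4}\,\omega(\xi_R^{-1})\,\omega(\xi_L^{-1})\,\xi^{-1},
\end{equation*}
where the last step uses $\xi^{-1}=\xi_R^{-1}\xi_L^{-1}=\xi_L^{-1}\xi_R^{-1}$ (from $\xi=\xi_L\xi_R$ and the commutativity of $A^L$ and $A^R$).

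Finally, to identify the prefactor I would simply evaluate $(\star)$ against the functional $\omega$: its left-hand side becomes $\omega(1_{(1)})\omega(1_{(2)})=(\omega\otimes\omega)\circ\Delta(1)=\omega(1)$ by Equation~\ref{eq:omegaIdemp}, and its right-hand side is $\mathcal{D}^{2}\,\omega(\xi_R^{-1})\,\omega(\xi_L^{-1})$, whence $\mathcal{D}^{4}\,\omega(\xi_R^{-1})\,\omega(\xi_L^{-1})=\mathcal{D}^{2}\,\omega(1)$; substituting into the previous display gives exactly Equation~\ref{eq:omega1omega}. The computation is otherwise routine, and the main thing I would watch is the Sweedler bookkeeping—keeping straight which leg each application of coassociativity acts on—together with this last self-consistency step, which I regard as the point most worth double-checking since it is what fixes the constant $\mathcal{D}^{2}\omega(1)$.
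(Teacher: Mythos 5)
Your argument is correct, and it reaches Equation~\ref{eq:omega1omega} by a genuinely different route than the paper. The paper's proof first trades $1$ for $\Omega$ via Equation~\ref{eq:hOmega1}, uses the cocentrality of $\Omega$ (Equation~\ref{eq:Omegacoce}) to cycle the legs, identifies $\omega(\Omega_{(1)})\Omega_{(2)}=\xi^{-1}$ via Equation~\ref{eq:propsXi1}, applies Lemma~\ref{lemma:trvac2}, and finally pins down the scalar through the explicit formula $\omega(1)=\mathcal{D}^2\varepsilon(1)\hat{h}(\xi_L^{-1}\xi_R^{-1})$ of Equation~\ref{eq:calcw1}, which rests on the Haar integral $\hat h$ and Proposition~\ref{prop:OmegaIsTau}. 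You instead stay entirely at the level of $\omega$: contracting Lemma~\ref{lemma:trvac2} with $\omega$ on each leg and using idempotence (Equation~\ref{eq:omegaIdemp}) gives your identities $(\star)$ and $(\star\star)$, the coproduct form of $A^L$-elements (Equation~\ref{eq:defAL}, applied to $\xi_L^{-1}$, which lies in $A^L$ since in finite dimensions the inverse is a polynomial in $\xi_L$ --- the paper itself uses this in Equation~\ref{eq:coprodxi}) lifts $(\star)$ to the threefold coproduct, and the constant is fixed by the self-consistency evaluation $\omega(1)=\mathcal{D}^2\omega(\xi_R^{-1})\omega(\xi_L^{-1})$; the only input from Lemma~\ref{lemma:xi} is $\xi=\xi_L\xi_R$ together with the commutativity of $A^L$ and $A^R$. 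What your approach buys is a shorter dependency chain (no $\hat h$, no cocentrality of $\Omega$, no Equation~\ref{eq:calcw1}); what the paper's approach buys is that the intermediate quantities $\hat h(\xi_R^{-1}\xi_L^{-1})$ and $\omega(1)$ are computed explicitly rather than eliminated, which the surrounding appendix reuses. Your Sweedler bookkeeping is sound throughout (both contractions against $\omega$ use the two equivalent bracketings of $\Delta^{(2)}$ legitimately), so there is nothing to fix.
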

\begin{proof}
First, it will be useful to compute the constant $\langle\omega,1\rangle$ in a more operative way.
The following calculation is a direct consequence of \cref{prop:OmegaIsTau}
and \cref{eq:defExpXiLXiR}:
\begin{equation}
    \langle\omega,1\rangle
    =
    \tfrac{1}{\mathfrak{D}^2\varepsilon(1)} \langle\hat{h},g_L^{-1} g_R^{-1}\rangle
    =
    \tfrac{\mathfrak{D}^4 \varepsilon(1)^2}{\mathfrak{D}^2 \varepsilon(1)} 
     \langle\hat{h},\xi_L^{-1}\xi_R^{-1}\rangle
    =
    \mathfrak{D}^2 \varepsilon(1) \langle\hat{h},\xi_L^{-1}\xi_R^{-1}\rangle.
    \label{eq:calcw1}
\end{equation}
Now, by an analogous reasoning as in the previous proof:
\begin{align*}
    \langle\omega,1_{(1)}\rangle 1_{(2)} \langle\omega , 1_{(3)}\rangle
    & =
    \mathfrak{D}^2 \varepsilon(1) \langle\hat{h},\Omega_{(1)}\rangle\langle\omega,\Omega_{(2)}\rangle\Omega_{(3)}\langle\omega,\Omega_{(4)}\rangle
    \vphantom{\xi_L^{-1}}
    &&\text{by~\cref{lemma:hOmega1}}
    \\ & =
    \mathfrak{D}^2 \varepsilon(1)\langle\hat{h},\Omega_{(4)}\rangle \langle\omega,\Omega_{(1)}\rangle\Omega_{(2)}\langle\omega,\Omega_{(3)}\rangle
    \vphantom{\xi_L^{-1}}
    &&\text{by~\cref{eq:Omegacoce}}
    \\ & =
    \mathfrak{D}^2\varepsilon(1) \langle\hat{h},\xi_R^{-1}1_{(3)}\rangle \xi_L^{-1} 1_{(1)} \langle\omega,1_{(2)}\rangle
    &&\text{by~\cref{lemma:xi}}
    \\ & =
    \mathfrak{D}^4 \varepsilon(1) \langle\hat{h},\xi_R^{-1}\xi_L^{-1}\rangle \xi_L^{-1}\xi_{R}^{-1}
    &&\text{by~\cref{lemma:trvac2}}
    \\ & =
    \mathfrak{D}^2 \langle\omega,1\rangle \xi_L^{-1}\xi_{R}^{-1} = \mathfrak{D}^2\langle\omega,1\rangle\xi^{-1}
    \vphantom{\xi_L^{-1}}
    &&\text{by~\cref{eq:calcw1}}
\end{align*}
as we wanted to prove.
\end{proof}


\restatablenogogluing*


\begin{proof}
Suppose by contradiction that there exists a trace-preserving linear map
$\mathfrak{G}\in\mathfrak{L}(\mathfrak{L}(V\otimes V))$ that is a
``gluing map''. In particular,
\begin{equation*}
    (\mathrm{Id} \otimes \mathfrak{G} \otimes \mathrm{Id})
    (\rho(\Omega,2) \otimes \rho(\Omega,2))
        =
    \rho(\Omega,4).
\end{equation*}
On the one hand, after performing a partial trace on the second and third
subsystems, the left-hand side would be trivially given by the product state
\begin{align*}
    (\mathrm{Id}\otimes\mathrm{Tr}\otimes\mathrm{Tr}\otimes\mathrm{Id})(\rho(\Omega,2)\otimes \rho(\Omega,2))
    &=
    \langle\omega,\Omega\rangle^{-2} \Phi(c_\omega\Omega_{(1)})\langle\omega,\Omega_{(2)}\rangle
    \otimes \langle\omega,\Omega_{(1')}\rangle\Phi(c_\omega\Omega_{(2')})
    \\ &=
    \langle\omega,\Omega\rangle^{-2} \Phi(c_\omega \xi^{-1})\otimes\Phi(c_\omega \xi^{-1})
\end{align*}
by virtue of \cref{remark:phicwext} and  \cref{lemma:xi}.
However, the right-hand side would take the following form:
\begin{align*}
    (\mathrm{Id}\otimes\mathrm{Tr}\otimes\mathrm{Tr}\otimes\mathrm{Id})(\rho(\Omega,4))
    & =
    \langle\omega,\Omega\rangle^{-2}\Phi(c_\omega \Omega_{(1)})  \otimes
    \langle\omega,\Omega_{(2)}\rangle\langle\omega,\Omega_{(3)}\rangle\Phi(c_\omega \Omega_{(4)})
    &&\text{by~\cref{remark:phicwext}}
    \\ & =
    \langle\omega,\Omega\rangle^{-2}\Phi(c_\omega \Omega_{(1)})  \otimes
    \langle\omega,\Omega_{(2)}\rangle\Phi(c_\omega \Omega_{(3)})
    &&\text{by~\cref{lemma:omega}}
    \\ & =
    \langle\omega,\Omega\rangle^{-2}\Phi(c_\omega \Omega_{(2)})  \otimes
    \langle\omega,\Omega_{(3)}\rangle \Phi(c_\omega \Omega_{(1)})
    &&\text{by~\cref{eq:Omegacoce}}
    \\ & =
    \langle\omega,\Omega\rangle^{-2}\Phi(c_\omega \xi_R^{-1}1_{(2)})\otimes
    \Phi(c_\omega \xi_L^{-1}1_{(1)})
    &&\text{by~\cref{lemma:xi}}
\end{align*}
which is not a product state. This contradicts the previous equation.
\end{proof}


\restatableGlueWHA*


\begin{proof}
For simplicity, let $\mathfrak{G}_1:= \mathfrak{T}\circ\mathfrak{G}$, where $\mathfrak{T}:\mathfrak{L}(V)\to\mathfrak{L}(V\otimes V)$ stands for the local coarse-graining quantum channel from \cref{sec:rfp} and
$\mathfrak{G}:\mathfrak{L}(V\otimes V)\to\mathfrak{L}(V)$ is given by
\begin{equation*}
\mathfrak{G}(X\otimes Y) :=
\tfrac{1}{\mathfrak{D}^2} \mathrm{Tr}(\Phi(S(1_{(1)})\xi_L)  X)
\Phi(c_\omega 1_{(2)} )
 \mathrm{Tr}(\Phi(\xi_R S(1_{(3)}))  Y)
\end{equation*}
for all $X,Y\in\mathfrak{L}(V)$.
First, assume that $m = n = 2$ without loss of generality and let us check
that it fulfills $\mathfrak{G}(\rho(1,2)\otimes\rho(1,2)) = \rho(1,3)$.
To this end, it turns out to be enough to prove:
\begin{equation}\label{eq:suffCondGWHA}
    \mathfrak{G}(\Phi(c_\omega x_L )\otimes \Phi(c_\omega x_R)) = \langle\omega,1\rangle \Phi(c_\omega x_L x_R)
\end{equation}
for all $x_L\in A_L$ and $x_R\in A_R$.
Indeed, in that case,
\begin{align*}
    (\mathrm{Id}\otimes \mathfrak{G}\otimes\mathrm{Id})(\rho(1,2)^{\otimes 2})
    & =
    \tfrac{1}{\langle\omega,1\rangle^2} \Phi(c_\omega 1_{(1)}) \otimes
    \mathfrak{G}(\Phi(c_\omega 1_{(2)})\otimes \Phi(c_\omega 1_{(1')})) \otimes\Phi(c_\omega 1_{(2')})
    \\ & =
    \tfrac{1}{\langle\omega,1\rangle} \Phi(c_\omega 1_{(1)}) \otimes
    \Phi(c_\omega 1_{(2)} 1_{(1')}) \otimes\Phi(c_\omega 1_{(2')})
    \\ & =
    \tfrac{1}{\langle\omega,1\rangle} \Phi(c_\omega 1_{(1)}) \otimes
    \Phi(c_\omega 1_{(2)}) \otimes\Phi(c_\omega 1_{(3)})
    = \rho(1,3).
\end{align*}
by the weak comultiplicativity of the counit and the fact that
$1_{(1)}\otimes 1_{(2)} \in A_R\otimes A_L$;
see \cref{def:Cwha} and \cite{bohm_1999_weak}.
Thus, let us move to the proof of \cref{eq:suffCondGWHA}:
\begin{align*}
    \mathfrak{G}(\Phi(c_\omega x_L)\otimes \Phi(c_\omega x_R))
    & =
    \tfrac{1}{\mathfrak{D}^2} \langle\omega,S(1_{(1)})\xi_L x_L\rangle
    \Phi(c_\omega 1_{(2)} )
    \langle\omega,\xi_R S(1_{(3)})x_R\rangle
    \vphantom{\xi_L^{-1}\tfrac{1}{\mathfrak{D}^2}}
    &&\text{by~\cref{remark:phicwext}}
    \\ &=
    \tfrac{1}{\mathfrak{D}^2} \langle\omega,\xi_L x_L S(1_{(1)})\rangle
    \Phi(c_\omega 1_{(2)} )
    \langle\omega, S(1_{(3)})x_R\xi_R\rangle
    \vphantom{\xi_L^{-1}\tfrac{1}{\mathfrak{D}^2}}
    \\ &=
    \tfrac{1}{\mathfrak{D}^2} \langle\omega,S(1_{(1)})\rangle
    \Phi(c_\omega x_R \xi_R 1_{(2)}\xi_L x_L )
    \langle\omega, S(1_{(3)}) \rangle
    \vphantom{\xi_L^{-1}\tfrac{1}{\mathfrak{D}^2}}
    &&\text{by~\cref{lemma:ptvac}}
    \\ &=
    \tfrac{1}{\mathfrak{D}^2} \langle\omega , 1_{(1)}\rangle
    \Phi(c_\omega  x_R\xi_R1_{(2)}\xi_L x_L )
    \langle\omega, 1_{(3)} \rangle
    \vphantom{\xi_L^{-1}\tfrac{1}{\mathfrak{D}^2}}
    &&\text{by~\cref{remark:wSwwT}}
    \\ &=
    \langle\omega,1\rangle\Phi(c_\omega x_R\xi_R \xi_R^{-1}\xi_L^{-1} \xi_L x_L )
    &&\text{by~\cref{eq:omega1omega}}
    \\ &=
    \langle\omega,1\rangle\Phi(c_\omega x_R x_L )
    \vphantom{\xi_L^{-1}\tfrac{1}{\mathfrak{D}^2}}
\end{align*}
as we wanted to prove. Additionally, $\mathfrak{G}$ is trace-preserving
as an immediate consequence of \cref{lemma:trvac2}:
\begin{align*}
    \mathrm{Tr}(\mathfrak{G}(X\otimes Y))
    &=
    \tfrac{1}{\mathfrak{D}^2} \mathrm{Tr}(\Phi(S(1_{(1)})\xi_L)  X)
    \langle\omega, 1_{(2)} \rangle
    \mathrm{Tr}(\Phi(\xi_RS(1_{(3)}))  Y)
    &&\text{by~\cref{remark:phicwext}}
    \\ &=
    \mathrm{Tr}(\Phi(S(  \xi_R^{-1}  )\xi_L)  X)
    \mathrm{Tr}(\Phi(\xi_R S(  \xi_L^{-1}  ))  Y)
    &&\text{by~\cref{lemma:trvac2}}
    \\ &=
    \mathrm{Tr}(\Phi(\xi_L^{-1}\xi_L)  X)
    \mathrm{Tr}(\Phi(\xi_R \xi_R^{-1})  Y)
    &&\text{by~\cref{eq:defExpXiLXiR}}
    \\ &=
    \mathrm{Tr}(X)\mathrm{Tr}(Y) = \mathrm{Tr}(X\otimes Y).
\end{align*}
Finally, in order to check that $\mathfrak{G}$ is a completely positive linear
map, let us first consider the following two calculations:
\begin{align*}
    \mathrm{Tr}(\Phi(S(x_R y_R^*)\xi_L)  X)
    \vphantom{\xi_L^{\frac{1}{2}}}
     & =
    \mathrm{Tr}(\Phi(S(y_R^*) S(x_R)\xi_L)  X)
    &&\text{by~\cref{remark:propsS}}
    \\ &=
    \mathrm{Tr}(\Phi(S(y_R^*) \xi_L S(x_R^*)^*)  X)
    \vphantom{\xi_L^{\frac{1}{2}}}
    &&\text{by~\cref{lemma:CPminGkey}}
    \\ &= 
    \mathrm{Tr}(\Phi(S(y_R^*) \xi_L^{\frac{1}{2}} \xi_L^{\frac{1}{2}} S(x_R^*)^*)  X)
    &&\text{by~\cref{eq:defExpXiLXiR}}
    \\ & =
    \mathrm{Tr}(\Phi( S(y_R^*) \xi_L^{\frac{1}{2}}) \Phi(\xi_L^{\frac{1}{2}} S(x_R^*)^*)  X)
    \\ & =
    \mathrm{Tr}( \Phi(\xi_L^{\frac{1}{2}} S(x_R^*)^*)  X  \Phi( S(y_R^*) \xi_L^{\frac{1}{2}}))
    \\ & =
    \mathrm{Tr}(\Phi(\xi_L^{\frac{1}{2}} S(x_R^*)^*)  X  \Phi( (\xi_L^{\frac{1}{2}}S(y_R^*))^* ))
    &&\text{by~\cref{eq:defExpXiLXiR}}
    \\ & =
    \mathrm{Tr}(\Phi(\xi_L^{\frac{1}{2}} S(x_R^*)^*)  X  \Phi( \xi_L^{\frac{1}{2}} S(y_R^*)^* )^\dagger )
    \\
\intertext{for all $x_R,y_R\in A_R$ and, analogously,}
    \mathrm{Tr}(\Phi(\xi_R S(x_L y_L^*))  Y)
    \vphantom{\xi_R^{\frac{1}{2}}}
     & =
    \mathrm{Tr}(\Phi(\xi_R S(y_L^*)S(x_L))  Y)
    &&\text{by~\cref{remark:propsS}}
    \\ & =
    \mathrm{Tr}(\Phi( S(y_L)^* \xi_R S(x_L))  Y)
    \vphantom{\xi_R^{\frac{1}{2}}}
    &&\text{by~\cref{lemma:CPminGkey}}
    \\ & =
    \mathrm{Tr}(\Phi( S(y_L)^* \xi_R^{\frac{1}{2}} \xi_R^{\frac{1}{2}} S(x_L) )  Y)
    &&\text{by~\cref{eq:defExpXiLXiR}}
    \\ & =
    \mathrm{Tr}(\Phi( S(y_L)^* \xi_R^{\frac{1}{2}}) \Phi(\xi_R^{\frac{1}{2}} S(x_L))  Y)
    \\ & =
    \mathrm{Tr}(\Phi(\xi_R^{\frac{1}{2}} S(x_L))  Y  \Phi( S(y_L)^* \xi_R^{\frac{1}{2}}))
    \\ & =
    \mathrm{Tr}(\Phi(\xi_R^{\frac{1}{2}} S(x_L))  Y  \Phi((\xi_R^{\frac{1}{2}} S(y_L))^*))
    &&\text{by~\cref{eq:defExpXiLXiR}}
    \\ &=
    \mathrm{Tr}(\Phi(\xi_R^{\frac{1}{2}} S(x_L))  Y  \Phi( \xi_R^{\frac{1}{2}} S(y_L))^\dagger)
\end{align*}
for all $x_L, y_L\in A_L$.
Now, recall that $1_{(1)}\otimes 1_{(2)}\otimes 1_{(3)}\in A_R\otimes A\otimes A_L$;
see~\cite{bohm_1999_weak}. This allows
us to rewrite $\mathfrak{G}$ in the following form:
\begin{align*}
    \mathfrak{G}(X\otimes Y)
    &=
    \tfrac{1}{\mathfrak{D}^2}\mathrm{Tr}(\Phi(S((1{\cdot}1^*)_{(1)})\xi_L)   X)
    \Phi(c_\omega (1 {\cdot} 1^*)_{(2)} )\mathrm{Tr}(\Phi(\xi_R S((1 {\cdot} 1^*)_{(3)}))   Y)
    \\ &=
    \tfrac{1}{\mathfrak{D}^2}\mathrm{Tr}(\Phi(S(1_{(1)}(1^*)_{(1')})\xi_L)   X)
    \Phi(c_\omega 1_{(2)}(1^*)_{(2')})\mathrm{Tr}(\Phi(\xi_R S(1_{(3)}1_{(3')}))  Y)
    \\ &=
    \tfrac{1}{\mathfrak{D}^2}\mathrm{Tr}(\Phi(S(1_{(1)}1_{(1')}^*)\xi_L)   X)
    \Phi(c_\omega 1_{(2)}1_{(2')}^*)\mathrm{Tr}(\Phi(\xi_R S(1_{(3)}1_{(3')}^*))  Y)
    \\ &=
    (\mathrm{Tr}\otimes\mathrm{Id}\otimes\mathrm{Id})(Q  (X\otimes\mathbf{1}\otimes Y)  Q^\dagger)
\end{align*}
where the last step follows from the previous calculations, and we have defined
\begin{equation}\label{eq:defQgCWHA}
    Q:= \tfrac{1}{\mathfrak{D}} \Phi^{\otimes 3}(\xi_L^{\frac{1}{2}}S(1_{(1)}^*)
     \otimes c_\omega^{\frac{1}{2}} 1_{(2)}\otimes \xi_R^{\frac{1}{2}}S(1_{(3)})).
\end{equation}
This concludes the proof.
\end{proof}




\begin{thebibliography}{99}

\bibitem{anshu_2021_area}
Anshu, A., Arad, I., Gosset, D.:
An area law for 2D frustration-free spin systems. arXiv:2103.02492 [quant-ph] (2021).
\href{https://doi.org/10.48550/arXiv.2103.02492}{doi.org/10.48550/arXiv.2103.02492}

\bibitem{bachmann_2012_automorphic}
Bachmann, S., Michalakis, S., Nachtergaele, B., Sims, R.:
Automorphic Equivalence within Gapped Phases of Quantum Lattice Systems.
Commun. Math. Phys. 309, 835-871 (2012).
\href{https://doi.org/10.1007/s00220-011-1380-0}{doi:10.1007/s00220-011-1380-0}

\bibitem{bardyn_2012_modes}
Bardyn, C.-E., Baranov, M.A., Rico, E., {\.I}mamo{\u{g}}lu, A., Zoller, P., Diehl, S.:
Majorana Modes in Driven-Dissipative Atomic Superfluids with a Zero Chern Number.
Phys. Rev. Lett. 109, 130402 (2012).
\href{https://doi.org/10.1103/PhysRevLett.109.130402}{doi:10.1103/PhysRevLett.109.130402}

\bibitem{bardyn_2013_topology}
Bardyn, C.-E., Baranov, M.A., Kraus, C.V., Rico, E., {\.I}mamo{\u{g}}lu, A., Zoller, P., Diehl, S.:
Topology by dissipation.
New J. Phys. 15, 085001 (2013).
\href{https://doi.org/10.1088/1367-2630/15/8/085001}{doi:10.1088/1367-2630/15/8/085001}

\bibitem{bohm_1996_coassociative}
B{\"o}hm, G., Szlach{\'a}nyi, K.:
A coassociative C*-quantum group with nonintegral dimensions.
Lett Math Phys. 38, 437-456 (1996).
\href{https://doi.org/10.1007/BF01815526}{doi:10.1007/BF01815526}

\bibitem{bohm_1999_weak}
B{\"o}hm, G., Nill, F., Szlach{\'a}nyi, K.:
Weak Hopf Algebras: I. Integral Theory and C*-Structure.
Journal of Algebra. 221, 385-438 (1999).
\href{https://doi.org/10.1006/jabr.1999.7984}{doi:10.1006/jabr.1999.7984}

\bibitem{bohm_2000_weak2}
B{\"o}hm, G., Szlach{\'a}nyi, K.:
Weak Hopf Algebras: II. Representation Theory, Dimensions, and the Markov Trace.
Journal of Algebra. 233, 156-212 (2000).
\href{https://doi.org/10.1006/jabr.2000.8379}{doi:10.1006/jabr.2000.8379}

\bibitem{bravyi_2010_topological}
Bravyi, S., Hastings, M.B., Michalakis, S.:
Topological quantum order: Stability under local perturbations.
J. Math. Phys. 51, 093512 (2010).
\href{https://doi.org/10.1063/1.3490195}{doi:10.1063/1.3490195}

\bibitem{brandao_2015_area}
Brand{\~a}o, F.G.S.L., Cubitt, T.S., Lucia, A., Michalakis, S., P{\'e}rez-Garc{\'i}a, D.:
Area law for fixed points of rapidly mixing dissipative quantum systems.
J. Math. Phys. 56, 102202 (2015).
\href{https://doi.org/10.1063/1.4932612}{doi:10.1063/1.4932612}

\bibitem{bultinck_2017_anyons}
Bultinck, N., Mari{\"e}n, M., Williamson, D.J., Şahino{\v{g}}lu, M.B., Haegeman, J., Verstraete, F.:
Anyons and matrix product operator algebras.
Annals of Physics. 378, 183-233 (2017).
\href{https://doi.org/10.1016/j.aop.2017.01.004}{doi:10.1016/j.aop.2017.01.004}

\bibitem{chen_2010_local}
Chen, X., Gu, Z.-C., Wen, X.-G.:
Local unitary transformation, long-range quantum entanglement, wave function renormalization, and topological order.
Phys. Rev. B. 82, 155138 (2010).
\href{https://doi.org/10.1103/PhysRevB.82.155138}{doi:10.1103/PhysRevB.82.155138}

\bibitem{chen_2011_classification}
Chen, X., Gu, Z.-C., Wen, X.-G.:
Classification of gapped symmetric phases in one-dimensional spin systems.
Phys. Rev. B. 83, 035107 (2011).
\href{https://doi.org/10.1103/PhysRevB.83.035107}{doi:10.1103/PhysRevB.83.035107}

\bibitem{cirac_2011_entanglement}
Cirac, J.I., Poilblanc, D., Schuch, N., Verstraete, F.:
Entanglement spectrum and boundary theories with projected entangled-pair states.
Phys. Rev. B. 83, 245134 (2011).
\href{https://doi.org/10.1103/PhysRevB.83.245134}{doi:10.1103/PhysRevB.83.245134}

\bibitem{cirac_2017_mpdo}
Cirac, J.I., P{\'e}rez-Garc{\'i}a, D., Schuch, N., Verstraete, F.:
Matrix product density operators: Renormalization fixed points and boundary theories.
Annals of Physics. 378, 100-149 (2017).
\href{https://doi.org/10.1016/j.aop.2016.12.030}{doi:10.1016/j.aop.2016.12.030}

\bibitem{cirac_2021_matrix}
Cirac, J.I., P{\'e}rez-Garc{\'i}a, D., Schuch, N., Verstraete, F.:
Matrix product states and projected entangled pair states: Concepts, symmetries, theorems.
Rev. Mod. Phys. 93, 045003 (2021).
\href{https://doi.org/10.1103/RevModPhys.93.045003}{doi:10.1103/RevModPhys.93.045003}

\bibitem{coser_2019_classification}
Coser, A., P{\'e}rez-Garc{\'i}a, D.:
Classification of phases for mixed states via fast dissipative evolution.
Quantum. 3, 174 (2019).
\href{https://doi.org/10.22331/q-2019-08-12-174}{doi:10.22331/q-2019-08-12-174}

\bibitem{diehl_2011_topology}
Diehl, S., Rico, E., Baranov, M.A., Zoller, P.:
Topology by dissipation in atomic quantum wires.
Nature Phys. 7, 971-977 (2011).
\href{https://doi.org/10.1038/nphys2106}{doi:10.1038/nphys2106}

\bibitem{etingof_2005_on_fusion}
Etingof, P., Nikshych, D., Ostrik, V.:
On fusion categories.
Ann. Math. 162, 581-642 (2005).
\href{https://doi.org/10.4007/annals.2005.162.581}{doi:10.4007/annals.2005.162.581}

\bibitem{etingof_2012_descent}
Etingof, P., Gelaki, S.:
Descent and Forms of Tensor Categories.
International Mathematics Research Notices. 2012, 3040-3063 (2012).
\href{https://doi.org/10.1093/imrn/rnr119}{doi:10.1093/imrn/rnr119}

\bibitem{etingof_2015_tensor}
Etingof, P., Gelaki, S., Nikshych, D., Ostrik, V.:
Tensor Categories.
American Mathematical Society, Providence, Rhode Island (2015).

\bibitem{freedman_2002_topological}
Freedman, M., Kitaev, A., Larsen, M., Wang, Z.:
Topological quantum computation.
Bull. Amer. Math. Soc. 40, 31-38 (2002).
\href{https://doi.org/10.1090/S0273-0979-02-00964-3}{doi:10.1090/S0273-0979-02-00964-3}

\bibitem{grusdt_2017_topological}
Grusdt, F.:
Topological order of mixed states in correlated quantum many-body systems.
Phys. Rev. B. 95, 075106 (2017).
\href{https://doi.org/10.1103/PhysRevB.95.075106}{doi:10.1103/PhysRevB.95.075106}

\bibitem{hastings_2005_quasiadiabatic}
Hastings, M.B., Wen, X.-G.:
Quasiadiabatic continuation of quantum states: The stability of topological ground-state degeneracy and emergent gauge invariance.
Phys. Rev. B. 72, 045141 (2005).
\href{https://doi.org/10.1103/PhysRevB.72.045141}{doi:10.1103/PhysRevB.72.045141}

\bibitem{hastings_2007_area_law}
Hastings, M.B.:
An area law for one-dimensional quantum systems.
J. Stat. Mech. 2007, P08024-P08024 (2007).
\href{https://doi.org/10.1088/1742-5468/2007/08/P08024}{doi:10.1088/1742-5468/2007/08/P08024}

\bibitem{kac_1966_finite}
Kac, G.I., Paljutkin, V.G.: Finite ring groups. Trans. Moscow Math Soc., 251-294 (1966).

\bibitem{kast_2019_local} Kastoryano, M.J., Lucia, A., P{\'e}rez-Garc{\'i}a, D.:
Locality at the Boundary Implies Gap in the Bulk for 2D PEPS.
Commun. Math. Phys. 366, 895-926 (2019).
\href{https://doi.org/10.1007/s00220-019-03404-9}{doi:10.1007/s00220-019-03404-9}

\bibitem{kitaev_2003_anyons}
Kitaev, A.Yu.:
Fault-tolerant quantum computation by anyons.
Annals of Physics. 303, 2-30 (2003).
\href{https://doi.org/10.1016/S0003-4916(02)00018-0}{doi:10.1016/S0003-4916(02)00018-0}

\bibitem{konig_2014_generating}
K{\"{o}}nig, R., Pastawski, F.:
Generating topological order: No speedup by dissipation.
Phys. Rev. B. 90, 045101 (2014).
\href{https://doi.org/10.1103/PhysRevB.90.045101}{doi:10.1103/PhysRevB.90.045101}

\bibitem{levin_2005_stringnet}
Levin, M.A., Wen, X.-G.:
String-net condensation: A physical mechanism for topological phases.
Phys. Rev. B. 71, 045110 (2005).
\href{https://doi.org/10.1103/PhysRevB.71.045110}{doi:10.1103/PhysRevB.71.045110}

\bibitem{li_2008_entanglement}
Li, H., Haldane, F.D.M.:
Entanglement Spectrum as a Generalization of Entanglement Entropy: Identification of Topological Order in Non-Abelian Fractional Quantum Hall Effect States.
Phys. Rev. Lett. 101, 010504 (2008).
\href{https://doi.org/10.1103/PhysRevLett.101.010504}{doi:10.1103/PhysRevLett.101.010504}

\bibitem{lieb_1972_finite}
Lieb, E.H., Robinson, D.W.: The finite group velocity of quantum spin systems.
Commun.Math. Phys. 28, 251-257 (1972).
\href{https://doi.org/10.1007/BF01645779}{doi:10.1007/BF01645779}

\bibitem{longo_2001_math}
Longo, R. ed:
Mathematical Physics in Mathematics and Physics.
American Mathematical Society, Providence, Rhode Island (2001)

\bibitem{molnar_2022_mpo}
Moln{\'a}r, A., Ruiz de Alarc{\'o}n, A., Garre-Rubio, J., Schuch, N., Cirac, J.I., P{\'e}rez-Garc{\'i}a, D.:
Matrix product operator algebras I: representations of weak Hopf algebras and projected entangled pair states.
arXiv:2204.05940 (2022).

\bibitem{montgomery_2001_rep}
Montgomery, S.:
Representation Theory of Semisimple Hopf Algebras.
En: Roggenkamp, I.K.W. y {\c{S}}tef{\u{a}}nescu, M. (eds.) Algebra - Representation Theory. pp. 189-218. Springer Netherlands, Dordrecht (2001)

\bibitem{nikshych_2004_semisimple}
Nikshych, D.:
Semisimple weak Hopf algebras.
Journal of Algebra. 275, 639-667 (2004).
\href{https://doi.org/10.1016/j.jalgebra.2003.09.025}{doi:10.1016/j.jalgebra.2003.09.025}

\bibitem{nill_1998_axioms}
Nill, F.: Axioms for Weak Bialgebras. arXiv:math/9805104 [math.QA]. (1998).
\href{https://doi.org/10.48550/arXiv.math/9805104}{doi:10.48550/arXiv.math/9805104}

\bibitem{ogata_2021_classif}
Ogata, Y.:
A classification of pure states on quantum spin chains satisfying the split property with on-site finite group symmetries.
Trans. Amer. Math. Soc. Ser. B. 8, 39-65 (2021).
\href{https://doi.org/10.1090/btran/51}{doi:10.1090/btran/51}

\bibitem{perez_2020_estimates}
P{\'e}rez-Garc{\'i}a, D., P{\'e}rez-Hern{\'a}ndez, A.:
Locality estimates for complex time evolution in 1D.
arXiv:2004.10516 [math-ph] (2020).
\href{https://doi.org/10.48550/arXiv.2004.10516}{doi:10.48550/arXiv.2004.10516}

\bibitem{larson_1988_semisimple}
Larson, R.G., Radford, D.E.:
Semisimple Cosemisimple Hopf Algebras.
American Journal of Mathematics. 110, 187 (1988).
\href{https://doi.org/10.2307/2374545}{doi:10.2307/2374545}

\bibitem{larson_1988_finite}
Larson, R.G., Radford, D.E.:
Finite dimensional cosemisimple Hopf algebras in characteristic 0 are semisimple.
Journal of Algebra. 117, 267-289 (1988).
\href{https://doi.org/10.1016/0021-8693(88)90107-X}{doi:10.1016/0021-8693(88)90107-X}

\bibitem{sahinoglu_2021_char}
{\c{S}}ahino{\v{g}}lu, M.B., Williamson, D., Bultinck, N., Mari{\"e}n, M., Haegeman, J., Schuch, N., Verstraete, F.:
Characterizing Topological Order with Matrix Product Operators.
Ann. Henri Poincaré. 22, 563-592 (2021).
\href{https://doi.org/10.1007/s00023-020-00992-4}{doi:10.1007/s00023-020-00992-4}

\bibitem{schuch_2011_classif}
Schuch, N., P{\'e}rez-Garc{\'i}a, D., Cirac, J.I.:
Classifying quantum phases using matrix product states and projected entangled pair states.
Phys. Rev. B. 84, 165139 (2011).
\href{https://doi.org/10.1103/PhysRevB.84.165139}{doi:10.1103/PhysRevB.84.165139}

\bibitem{schuch_2013_topological}
Schuch, N., Poilblanc, D., Cirac, J.I., P{\'e}rez-Garc{\'i}a, D.:
Topological Order in the Projected Entangled-Pair States Formalism: Transfer Operator and Boundary Hamiltonians.
Phys. Rev. Lett. 111, 090501 (2013).
\href{https://doi.org/10.1103/PhysRevLett.111.090501}{doi:10.1103/PhysRevLett.111.090501}

\bibitem{wolf_2008_area}
Wolf, M.M., Verstraete, F., Hastings, M.B., Cirac, J.I.:
Area Laws in Quantum Systems: Mutual Information and Correlations.
Phys. Rev. Lett. 100, 070502 (2008).
\href{https://doi.org/10.1103/PhysRevLett.100.070502}{doi:10.1103/PhysRevLett.100.070502}

\end{thebibliography}
\end{document}